\newcommand{\@giventhatstar}[2]{\left(#1\;\middle|\;#2\right)}
\newcommand{\@giventhatnostar}[3][]{#1#2\;#1|\;#3#1}
\newcommand{\giventhat}{\@ifstar\@giventhatstar\@giventhatnostar}
\newcommand*\diff{\mathop{}\!\mathrm{d}}
\newcommand{\argdot}{\,\cdot\,}
\newcommand{\FP}[1]{\textcolor{blue}{FP: #1}}
\newenvironment{sproof}{%
	\proof}{\endproof}
\theoremstyle{plain}
\newtheorem{theorem}{Theorem}[section]
\newtheorem{lemma}[theorem]{Lemma}
\newtheorem{assumption}[theorem]{Assumption}
\newenvironment{customthm}[1]
{\innercustomthm}
{\endinnercustomthm}
\newenvironment{customlemma}[1]
{\innercustomlemma}
{\endinnercustomlemma}
\theoremstyle{definition}
\newtheorem{definition}[theorem]{Definition}
\theoremstyle{remark}
\newtheorem{example}[theorem]{Example}
\newcommand\restr[2]{{
		\left.\kern-\nulldelimiterspace 
		#1 
		\littletaller 
		\right|_{#2} 
}}
\newcommand{\littletaller}{\mathchoice{\vphantom{\big|}}{}{}{}}
\title{Verifying Approximate Equilibrium in Auctions}
\date{}
\author{ Fabian~R.~Pieroth \\
	\texttt{fabian.pieroth@tum.de} \\
	Technical University of Munich\\
	\And
	Tuomas Sandholm \\
	\texttt{sandholm@cs.cmu.edu}\\
	Carnegie Mellon University\\
	Strategy Robot, Inc.\\
	Optimized Markets, Inc. \\
	Strategic Machine, Inc.
}
\begin{document}
\maketitle

\begin{abstract}
In practice, most auction mechanisms are not strategy-proof, so equilibrium analysis is required to predict bidding behavior. In many auctions, though, an exact equilibrium is not known and one would like to understand whether---manually or computationally generated---bidding strategies constitute an approximate equilibrium. We develop a framework and methods for estimating the distance of a strategy profile from equilibrium, based on samples from the prior and either bidding strategies or sample bids. We estimate an agent's utility gain from deviating to strategies
from a constructed finite subset of the strategy space. We use PAC-learning to give error bounds, both for independent and interdependent prior distributions. The primary challenge is that one may miss large
utility gains by considering only a finite subset of the strategy space. Our work differs from
prior research in two critical ways. First, we explore the impact of bidding strategies on altering opponents’ perceived prior distributions---instead of assuming the other agents to bid truthfully. Second, we delve into reasoning with interdependent priors, where the type of one agent may imply a distinct distribution for other agents. Our main contribution lies in establishing sufficient conditions for strategy profiles and a closeness criterion for
conditional distributions to ensure that utility gains estimated through our finite subset closely approximate the maximum gains. To our knowledge, ours is the first method to verify approximate equilibrium in any auctions beyond single-item ones. Also, ours is the first sample-based method for approximate equilibrium verification.
\end{abstract}

\keywords{equilibrium verification \and auctions \and interdependent distributions \and dispersion}

\section{Introduction}

A central problem in mechanism design is understanding the strategic incentives of participants--in order to design mechanisms that lead to desired outcomes. A \textit{Bayesian Nash Equilibrium (BNE)}~\citep{harsanyiGamesIncompleteInformation1967} represents a fixed point in strategy space, where no agent has an incentive to deviate. This concept constitutes the central solution concept for games with incomplete information, such as auctions.

Mechanism design devotes significant attention to designing incentive-compatible mechanisms, where truthful bidding constitutes a BNE~\citep{hurwiczInformationallyDecentralizedSystems1972}. When bidders are aware that it is in their best interest to report their true valuation for an item, this knowledge leads to several desired effects.
For example, one can guarantee efficient outcomes, ensuring that the item is allocated to the bidder who values it the most. Furthermore, it simplifies the strategic decision-making process for the bidders, thereby saving resources. 

Nonetheless, practitioners typically employ mechanisms that are not incentive compatible, referred to as \emph{manipulable} mechanisms. For instance, the first-price mechanism is commonly used in real-world auctions. In the context of multi-unit sales, the U.S. Treasury has utilized discriminatory auctions for selling treasury bills since 1929~\citep{krishna2009auction}. 
Moreover, combinatorial auctions in practice typically use manipulable mechanisms such as first-price payments for their simplicity and other desirable features, or core-selecting payment rules intended to ensure the winners' payments are sufficient to maintain envy-freeness~\citep{dayCoreselectingPackageAuctions2008}. 

Several factors were identified why manipulable mechanisms are prevalent in practice. First, their rules are typically more straightforward to communicate. Second, incentive-compatible mechanisms have the potential to more readily expose the bidders’ confidential private information~\citep{rothkopfWhyAreVickrey1990}. Third, if information acquisition is costly, even incentive-compatible mechanisms have no dominant strategy for making information-gathering or valuation-computation decisions~\citep{Sandholm00:Issues,Larson01:Costly}.
%
%
%
%
Additionally, incentive-compatible mechanisms, like the VCG mechanism~\citep{vickreyCounterspeculationAuctionsCompetitive1961, clarkeMultipartPricingPublic1971, grovesIncentivesTeams1973}, exhibit significant drawbacks in combinatorial auction contexts. First, they can result in minimal or even null revenues in spite of intense competition for the items~\citep{Conitzer06:Failures,ausubelLovelyLonelyVickrey2006}. Second, they may encourage collusion~\citep{Conitzer06:Failures,dayCoreselectingPackageAuctions2008}. Third, they may beget arbitrage opportunities~\citep{Gilpin04:Arbitrage}. Fourth, mechanisms that are incentive compatible in single-shot settings---like the VCG---typically do not remain incentive compatible over time across auctions where complementary or substitutable items are sold~\citep{Sandholm00:Issues}. Fifth, in scenarios such as sourcing, the repeated application of an incentive-compatible mechanism is not incentive compatible as the bid taker uses bids from one auction to modify the parameters (reserve prices or more sophisticated parameters) of future auctions~\citep{sandholmVeryLargeScale2013}. 

Despite the significant academic work in auction theory, equilibrium strategies for manipulable mechanisms are primarily known only for very restricted, simple market models, such as single-item auctions with independent prior distributions~\citep{krishna2009auction}. 
Even worse, equilibria are not known to exist in general, but only in specific settings~\citep{Athey01:Single, renyNashEquilibriumDiscontinuous2020}. 
Fortunately, every strategy profile can be considered a $\varepsilon$-Bayesian Nash Equilibrium ($\varepsilon$-BNE) for some approximation factor $\varepsilon > 0$. Intuitively, $\varepsilon$ measures the potential utility gain an agent could achieve by deviating from its current strategy, assuming the other agents’ strategies remain unchanged.

As a result, recent efforts have concentrated on identifying strategies with an $\varepsilon$ as small as possible. Several computational techniques have demonstrated promise in discovering strong bidding strategies (e.g.,\citep{bosshardComputingBayesNashEquilibria2020, bichlerLearningEquilibriaSymmetric2021, bichlerLearningEquilibriaAsymmetric2023, bichlerComputingBayesNash2023}). 
Although there is strong empirical evidence suggesting that the approximation factor $\varepsilon$ is small for the computed strategies, their theoretical guarantees are limited in settings where no analytical equilibrium is available.
\citet{bichlerComputingBayesNash2023} rely on significant assumptions, including complete knowledge of the joint and marginal prior distributions, and their results are restricted to single-item auctions. \citet{bosshardComputingBayesNashEquilibria2020} introduce error bounds based on the precise calculation of metrics that are typically intractable to compute, such as the best-response \textit{ex interim} utility. Meanwhile, \citet{bichlerLearningEquilibriaSymmetric2021, bichlerLearningEquilibriaAsymmetric2023} employ a sampling-based strategy but do not provide error bounds.

\subsection{Contributions}
We introduce techniques with provable guarantees that identify the smallest approximation factor $\varepsilon$ for a strategy profile. Our methods require only access to samples from the type and bid distribution. The bids can either be observed directly, or, given access to the strategies, one can map the sampled types to their corresponding bids. Our results are applicable to single- and multi-item auctions with independent and interdependent prior distributions.

We analyze both the \textit{ex interim} and \textit{ex ante} settings.\footnote{We exclude the study of \textit{ex post} approximate equilibrium from our analysis because these concepts are based on worst-case, distribution-independent notions, rendering it impractical to assess through sampling from agents' type distributions.} In the \textit{ex interim} case, we bound the amount any agent can improve its utility by deviating from its current strategy, in expectation over the other agents' types, regardless of its own true type. In the weaker \textit{ex ante} setting, the expectation also includes the agent's own true type.

Our estimate is simple. It measures the maximum utility an agent can gain by deviating from its current strategy, averaged over the samples, where the alternative strategies considered are from a finite subset of the strategy space. We present upper bounds in the \textit{ex interim} case, denoted by $\hat{\varepsilon}$, and in the \textit{ex ante} case, denoted by $\tilde{\varepsilon}$. Specifically, we offer \textit{ex interim} guarantees $\hat{\varepsilon}$ for scenarios with independent prior distributions and \textit{ex ante} guarantees $\tilde{\varepsilon}$ for interdependent prior distributions.

Prior sampling-based methods operated under the assumption that agents play truthfully and have independent prior distributions, meaning they were only capable of verifying the truthful strategy under independent priors~\citep{balcanEstimatingApproximateIncentive2019}.
We expand upon this in two significant ways. First, our results hold for a large class of bidding strategies (as long as bids can neither change too fast nor too slowly as a function of an agent's type). This class satisfies common assumptions on equilibrium strategies made in auctions, such as monotonicity~\citep{renyExistenceMonotonePureStrategy2011, renyNashEquilibriumDiscontinuous2020}.
Second, we introduce findings for interdependent prior distributions. To achieve this, we consider a partition $\mathcal{B}$ of an agent's type space and establish an upper bound on the estimation error that utilizes the maximum total variation distance between the opponents' conditional distribution within each element $B \in \mathcal{B}$. In the arXiv version of their EC-19-Exemplary-AI-Paper-Award-winning extended abstract, \citet{balcanEstimatingApproximateIncentive2019} also presented---among other results---\textit{ex ante} guarantees for interdependent prior distributions. However, they retracted that result after we pointed out that it is incorrect~\citep{balcanEstimatingApproximateIncentive2019a}. That approach was flawed because it did not consider bidding strategies that can be functions of a bidder's type. 

We apply our estimation technique across several important auction classes. For instance, in the first-price auction, our error bound for a $B \in \mathcal{B}$ is $\tilde{O}\left(\tau_B + \left(n +\left(\kappa_{B} L_{\beta^{-1}_{\text{max}}} \right)^{-1} \right) /\sqrt{N_{B}}\right)$,
where $n$ is the number of bidders, $N_{B}$ is the number of samples within $B$, $[0, \kappa_B]$ denotes the range of the prior density, $L_{\beta^{-1}_{\text{max}}}$ is the maximum Lipschitz constant of an agent's inverse bidding strategy, and $\tau_B$ denotes the maximum total variation distance among the conditional prior distributions for types from $B$. It is important to note that $\tau_B$ does not need to become small for every $B$ in order to provide meaningful \textit{ex ante} guarantees, as the overall bound for the entire partition can still be small in expectation. For the case of independent prior distributions, this bound improves to become an \textit{ex interim} guarantee with $\tau_B=0$ and $N_{B} = N$.

We present similar results for a variety of auction formats, including combinatorial first-price auctions, uniform-price auctions, and discriminatory auctions.

\textbf{Key challenges}
To prove our guarantees, we aim to estimate the maximum possible amount an agent can improve its utility by deviating from its current strategy in both the \textit{ex interim} and \textit{ex ante} cases, respectively. We determine our error bounds, $\hat{\varepsilon}$ or $\tilde{\varepsilon}$, by quantifying the extent to which an agent may improve its utility, averaged over the samples, when considering alternative strategies from a finite set. To achieve this, we encounter two major technical challenges.

The first challenge arises from the limitation of searching over a finite set, potentially causing an agent to miss strategies that could significantly improve its utility. This occurs from auctions often having discontinuities in the utility functions. 
For instance, in both first- and second-price auctions, a slight increase in an agent's bid from just below to just above the highest bid of other agents alters the allocation, resulting in a sudden jump in utility.
For a given type of an agent, we consider a grid with an edge length $w$ over the action space, assuming the action space is $[0, 1]^m$ for some integer $m$. The critical question then becomes how much potential utility might be missed when searching over this finite grid and the effect of $w$ on this potential loss.

To tackle this issue, we utilize the concept of \emph{dispersion}~\citep{balcanDispersionDataDrivenAlgorithm2018}. In broad terms, a set of piecewise Lipschitz functions is $(w, k)$-dispersed if every ball of radius $w$ in the domain contains no more than $k$ discontinuities of the functions. 
Given $N$ samples from the prior and bidding distributions, we examine the dispersion of a set of \textit{ex post} utility functions, each defined by a sample and varying over one agent's bid. We demonstrate that if this set of functions is sufficiently dispersed, it is possible to control the error by searching for a best response over a finite grid with edge length $w$, rather than in the infinite action space. 
Crucially, we establish sufficient conditions on both the prior distribution \emph{and} bidding strategies to ensure this approach is viable.

The second major challenge arises under interdependent prior distributions. In such contexts, an agent gains additional information about the opponents' prior distributions upon learning its type. Given the continuous nature of these distributions, the probability of drawing the identical type more than once is zero, leading to the expectation that one would not collect more than a single sample from the same conditional prior distribution. 
We tackle this issue by considering a partition $\mathcal{B}$ of the type space for each agent and grouping samples that fall into the same element $B \in \mathcal{B}$. We demonstrate that if the total variation distance for the conditional distributions from types within $B$ is sufficiently small, then the aggregated samples can provide valuable insights about the conditional prior distributions for all types from $B$.

Finally, provided that the intrinsic complexities of the agents' utility functions are manageable (as determined by the learning-theoretic concept of pseudo-dimension~\citep{pollardConvergenceStochasticProcesses1984}), our empirical estimates $\hat{\varepsilon}$ and $\tilde{\varepsilon}$ quickly converge to the true approximation factors as the sample size increases.

\section{Related research}
In this section, we discuss additional related work on equilibrium-verification methods, emphasizing the contributions and limitations of prior efforts.

\textbf{Estimating approximate incentive compatibility}
\citet{balcanEstimatingApproximateIncentive2019} introduce techniques to estimate the proximity of a mechanism to being incentive compatible, specifically addressing the utility loss associated with truthful strategies. By analyzing samples from agents' type distributions, their method evaluates potential utility gains from misreporting types, leveraging finite subsets of the type space. The work provides PAC-guarantees for the approximation of incentive compatibility, utilizing the pseudo-dimension and dispersion of utility functions, and applies these techniques across a variety of common auction formats. The paper received the Exemplary AI Track Paper award at the \textit{ACM Conference on Economics and Computation (EC)} in 2019. Building on these results, we derive sampling-based error bounds and extend them in two significant directions: first, by accommodating strategic bidding, thereby determining the utility loss for strategies beyond truthful bidding, and second, by offering guarantees for interdependent prior distributions as well.

\textbf{Verification via game abstraction}
Game abstraction is a key technique for solving large imperfect-information games~\citep{Shi00:Abstraction,Billings03:Approximating,Gilpin06:Competitive}, and has led to breakthroughs such as superhuman AIs for two-player limit Texas hold'em~\cite{Bowling15:Heads}, two-player no-limit Texas hold'em~\cite{Brown18:Superhuman}, and multiplayer no-limit Texas hold'em~\cite{Brown19:Superhuman}. The basic idea is that the game is automatically abstracted into a smaller game, then the smaller game is solved for (approximate) equilibrium, and then the strategies are mapped back into the original game. However, most game abstraction techniques do not yield guarantees for equilibrium approximation in the original game~\citep{waughAbstractionPathologiesExtensive2009}.
\citet{Gilpin07:Lossless} developed a lossless abstraction technique for games with finite actions and finite states that yields an exact equilibrium in the original game, but the abstracted game to be solved is only about two orders of magnitude smaller than the original game, so that does not scale to very large games. More recently, game abstraction techniques that can abstract more and still yield a provably approximate equilibrium in the original game have been developed~\citep{Sandholm12:Lossy, Kroer14:Extensive, Kroer16:Imperfect, Kroer18:Unified}. Some game abstraction work has focused on games with continuous actions~\citep{Kroer15:Discretization}. However, these models typically are not rich enough to model a Bayesian game with continuous types and actions, and have not yielded techniques for verifying approximate equilibrium in auctions.

\citet{bichlerComputingBayesNash2023} perform an abstraction by discretizing the valuation and bidding spaces to compute and verify equilibrium using distributional strategies, providing theoretical guarantees that the abstraction error can be controlled in the case of single-item auctions. Their verification results assume explicit access to both the joint and marginal prior density functions, allowing for querying at specific points and integrating over cells of their discretization. In contrast, our results only assume access to the prior distribution through sampling. Additionally, our findings are also applicable to multi-unit auctions and combinatorial auctions.

\citet{pierothEquilibriumComputationMultiStage2023} offer a verification method using a limiting argument applicable to sequential games with continuous observation and action spaces. However, their assumption of continuous utility functions means that their results do not directly apply to auctions. Instead, they rely on a game abstraction strategy that involves smoothing the allocation and price functions, drawing from the work of \citet{kohringEnablingFirstOrderGradientBased2023}. They demonstrate that the abstraction error can be controlled for single-unit auctions with independent prior distributions. In contrast to their work, we provide explicit bounds that can be computed for a specific setting and sample size. Additionally, our results apply to multi-unit and combinatorial auctions, and with interdependent priors.

\textbf{Verification methods in full auction games}
\citet{timbersApproximateExploitabilityLearning2020} propose a reinforcement learning-based method to estimate a \emph{lower} bound on the maximum utility loss. While this can provide valuable insight into potential gains from deviation, it does not verify whether the candidate strategy profile is an approximate equilibrium. On the other hand, the work by \citet{bosshardComputingBayesNashEquilibria2020} introduces a verification method for approximate equilibrium strategies in combinatorial auctions with independent prior distributions. Similar to our approach, they approximate the utility loss at a finite number of grid points. They employ a Monte-Carlo sampling method to estimate the expected utility and use a local search algorithm to estimate the best response for each grid point. Additionally, they exploit a convexity property of the best-response \textit{ex interim} utility to provide an upper bound of the utility loss for all valuations between the grid points. However, their theoretical analysis does not account for the approximation errors introduced by the sampling procedure and the best-response approximation. In contrast, our error bounds encompass all approximations performed. Furthermore, their analysis is restricted to auctions with independent prior distributions.

\textbf{\textit{Ex post} incentive-compatible mechanism design via deep learning}
In recent years, deep learning approaches to design auction mechanisms have received significant attention~\citep{golowichDeepLearningMultifacility2018, feng2018deep, duettingOptimalAuctionsDeep2019}. These efforts aim to design mechanisms that are nearly incentive compatible by incorporating constraints into the deep learning optimization problem. These constraints enforce the mechanism to be \textit{ex post} incentive compatible over a set of buyer values sampled from the prior distribution. Essentially, they seek to identify a mechanism where a bidder has approximately no incentive to conceal its valuation, regardless of the reported valuations of the opponents—a property that does not hold for most mechanisms used in practice.
\citet{duettingOptimalAuctionsDeep2019} offer a concentration bound to empirically assess the violation of incentive compatibility. However, this bound presumes that the \textit{ex post} violation can be precisely determined, an assumption not met by their methodology.~\citep{curryCertifyingStrategyproofAuction2020} address this issue by linearizing the learned neural network, effectively reducing the problem to an integer program that allows for an accurate estimation of the error. \citet{curryDifferentiableEconomicsRandomized2023} use deep learning to learn auction mechanisms within randomized affine maximizer auctions, a class within which each mechanism is exactly incentive compatible.

The concept of \textit{ex post} incentive compatibility is a worst-case, distribution-independent notion focused on the utility gain from truthful bidding. This contrasts with our objectives, as we aim to provide \textit{ex interim} and \textit{ex ante} guarantees, where agents have no incentive to deviate from their current strategy---which might not be truthful---averaged over the opponents' type distribution.

\section{Preliminaries}
This section introduces the formal model and results from learning theory that are useful for our purposes.

\subsection{The model}

We model an auction as a Bayesian game $G=\left(n, \mathcal{A}, \Theta, \mathcal{O}, u, F \right)$. Here $n \in \mathbb{N}$ denotes the number of agents.
$F$ denotes an atomless prior distribution over the agents' observations $\mathcal{O} = \bigtimes_{i \in [n]}\mathcal{O}_i$ and valuations $\Theta = \bigtimes_{i \in [n]}\Theta_i$, and is assumed to be common knowledge.
We denote its marginals by $F_{\theta_i}, F_{o_i}$, etc.; its conditionals by $F_{\theta_i|o_i}$, etc. 
An agent $i$ receives its private observation $o_i \in \mathcal{O}_i$, and chooses an action or bid $b_i \in \mathcal{A}_i$ based on it. The joint bidding space is denoted by $\mathcal{A} = \bigtimes_{i \in [n]}\mathcal{A}_i$. The sets $\Theta_i$ denote an agent's ``true'', but possibly unobserved valuation. This formulation allows to model interdependencies and correlations beyond purely private or common values. The vector $u = (u_1, \dots, u_n)$ describes the individual (\textit{ex post}) utility functions $u_i: \Theta_i \times \mathcal{A} \rightarrow \mathbb{R}$ that map a valuation $\theta_i \in \Theta_i$ and bid profile $b \in \mathcal{A}$ to a game outcome for each agent. The game consists of three distinct stages. During the \emph{ex-ante} stage, that is, before the game, agents have only knowledge about $F$. In the \textit{ex interim} stage, each agent observes $o_i$ which provides information about its valuation $\theta_i$. After submitting a bid $b_i$, an agent receives the ex post information about the game outcome $u_i(\theta_i, b)$.
In the \textit{ex ante} stage, an agent needs to reason about a strategy $\beta_i:\mathcal{O}_i \rightarrow \mathcal{A}_i$ that maps observations to bids.
We denote agent $i$'s pure strategy space by $\Sigma_i = \left\{\giventhat{\beta_i}{\beta_i: \mathcal{O}_i \rightarrow \mathcal{A}_i}\right\}$. The joint strategy space is then denoted by $\Sigma = \prod_{i \in [n]}\Sigma_i$.

In this work, we are particularly concerned with the agents' bidding distributions. That is, the distribution of bids $\beta_i(o_i^\prime)$ for an agent $i$, where $o_i^\prime \sim F_{o_i}$ and $\beta_i \in \Sigma_i$. We denote the distribution with mapped bids under strategy profiles $\beta_i, \beta_{-i}, \beta$ by $F^{\beta_i}, F^{\beta_{-i}}$, and $F^\beta$, respectively.
We define the \textit{ex interim} utility for agent $i$ and observation $o_i$ as
\begin{align} \label{equ:interim-utility}
	\hat{u}_{i}(o_i, b_i, \beta_{-i}) := \mathbb{E}_{\theta_i, o_{-i}|o_i} \left[ u_{i}\left(\theta_i, b_i, \beta_{-i}(o_{-i}) \right) \right],
\end{align}
and the \textit{ex ante} utility as
\begin{align}\label{equ:ex-ante-utility}
	\tilde{u}_{i}(\beta_i, \beta_{-i}) := \mathbb{E}_{o_i} \left[\hat{u}_{i}(o_i, \beta_i(o_i), \beta_{-i}) \right].
\end{align}

We focus on sealed-bid auctions involving $m$ distinct items. In combinatorial auctions, this results in a set $\mathcal{K}$ representing all possible bundles, with valuation and action spaces of size $ |\mathcal{K}| = 2^m $. 
An auction's outcome, given a bid profile $b$, is determined by an auction \emph{mechanism} $\text{M}$ that decides on two things: the allocation $x = x(b) = (x_1, \ldots, x_n)$ with $x_i \in \{0, 1\}^{|K|}$, dividing the $m$ items among bidders, and the price vector $p(b) \in \mathbb{R}^n$, indicating the cost for each bidder to claim their items.
When considering a specific mechanism $\text{M}$, we denote the respective utility function for bidder $i$ by $u_{i, \text{M}}$.
In a typical risk-neutral model, bidders' utilities $u_{i, \text{M}}$ are captured by quasilinear payoff functions $u_{i, \text{M}}(\theta_i, b) = x_i(b) \cdot \theta_i - p_i(b)$. 
Furthermore, we assume\footnote{Our error bound increases by a multiplicative factor of $H$ if the range of utility functions is $[-H, H]$ instead of $[-1, 1]$.} the utility functions $(u_{1, \text{M}}, \dots, u_{n, \text{M}} )$ map to the bounded interval $[-1, 1]$.
We are interested in the game-theoretic solution concept of an approximate Bayesian Nash Equilibrium  (BNE)~\citep{harsanyiGamesIncompleteInformation1967}.
\begin{definition}[$\varepsilon$-Bayesian Nash equilibrium] \label{def:eps-Nash-equilibrium}
	Let $\text{M}$ be a mechanism and $G=\left(n, \mathcal{A}, \Theta, \mathcal{O}, u_{i, \text{M}}, F \right)$ a corresponding Bayesian game. Let $\varepsilon \geq0$, then, a strategy profile $\left(\beta_i^*, \beta_{-i}^*\right)$ is an \textit{ex ante} $\varepsilon$-BNE if, for all $i \in [n]$,
	$\sup_{\beta_i^\prime \in \Sigma_i} \tilde{u}_{i, \text{M}}(\beta_i^\prime, \beta_{-i}^*) - \tilde{u}_{i, \text{M}}(\beta_i^*, \beta_{-i}^*) \leq \varepsilon$.
	A so-called \textit{ex interim} $\varepsilon$-BNE is given if, for all $i \in [n]$ and $o_i \in \mathcal{O}_i$,
	$\sup_{b_i \in \mathcal{A}_i} \hat{u}_{i, \text{M}}(o_i, b_i, \beta_{-i}^*) - \hat{u}_{i, \text{M}}(o_i, \beta_i^*(o_i), \beta_{-i}^*) \leq \varepsilon$.
\end{definition}
Clearly, if we have an \textit{ex interim} $\varepsilon$-BNE, then this also constitutes an \textit{ex ante} $\varepsilon$-BNE. This may not hold the other way around, as it is only guaranteed that the utility gained through deviating to another strategy is bounded by $\varepsilon$ in expectation. For some observations $o_i$, it may be strictly larger.

The \textit{ex ante} \emph{utility loss} is a metric to measure the loss of an agent by playing $\beta_{i}$ instead of a best-response to the opponents' strategies $\beta_{-i}$~\citep{srinivasanActorcriticPolicyOptimization2018, Brown19:Deep}. It expresses the distance to an approximate \textit{ex ante} equilibrium and is defined as $\tilde{\ell}_i(\beta_i, \beta_{-i}) := \sup_{\beta^{\prime}_i \in \Sigma_i} \tilde{u}_{i, \text{M}}(\beta^{\prime}_i, \beta_{-i}) - \tilde{u}_{i, \text{M}}(\beta_i, \beta_{-i})$.
Similarly, the \textit{ex interim} utility loss measures the loss of agent $i$ for observation $o_i \in \mathcal{O}_i$ of playing $b_i$ instead of an \textit{ex interim} best-response to $\beta_{-i}$. It is given by
$\hat{\ell}_i(o_{i}, b_i, \beta_{-i}) := \sup_{b_{i}^\prime \in \mathcal{A}_i} \hat{u}_{i, \text{M}}(o_i, b_{i}^\prime, \beta_{-i}) - \hat{u}_{i, \text{M}}(o_{i}, b_i, \beta_{-i})$.

We assume access to a dataset of independent samples from the unknown prior distribution $F$ in the following form. We sample a dataset $\mathcal{D}$, comprising \textit{ex interim} and \textit{ex post} data for each agent. $\mathcal{D}$ consists of $N \in \mathbb{N}$ tuples of observations and valuations,

\begin{align*}
	\mathcal{D} = \left\{\giventhat{\left(o^{(j)}, \theta^{(j)} \right) = \left((o_1^{(j)}, \dots o_n^{(j)}), (\theta_1^{(j)}, \dots, \theta_n^{(j)})\right)}{o^{(j)} \in \mathcal{O}, \theta^{(j)} \in \Theta \text{ for } 1\leq j \leq N} \right\}.
\end{align*}

Either by accessing the strategy profile $\beta$ or by observing the bids for each data point in $\mathcal{D}$, we obtain the full dataset of observations, valuations, and bids. This dataset is denoted by
\begin{align*}
	\mathcal{D}^\beta := \left\{\giventhat{\left(o^{(j)}, \beta(o^{(j)}), \theta^{(j)} \right)}{\left(o^{(j)}, \theta^{(j)} \right) \in \mathcal{D}} \right\}.
\end{align*}

\subsection{Concentration bounds from learning theory}
We introduce a distribution-independent concentration bound that allows us to approximate the expected utilities by an empirical mean over $\mathcal{D}^{\beta}$. It is grounded in the learning theoretic concept of the pseudo-dimension. This concept captures the inherent complexity of a function class, essentially reflecting how challenging it is to learn. The pseudo-dimension is defined as follows:
\begin{definition}[\citet{DBLP:books/daglib/0034861}] \label{def:pseudo-dimension}
	Let $\mathcal{F} \subset \{\giventhat{f: \mathcal{X} \rightarrow [-1, 1]}{f \text{ measurable.}} \}$ be an abstract class of functions. Further, let $\mathcal{S} = \{x^{(1)}, \dots, x^{(N)} \} \subset \mathcal{X}$ and $\{z^{(1)}, \dots, z^{(N)} \} \subset [-1, 1]$ be a set of \emph{targets}. We say that $\{z^{(1)}, \dots, z^{(N)} \}$ witness the shattering of $\mathcal{S}$ by $\mathcal{F}$ if for all subsets $T \subset \mathcal{S}$, there exists some function $f_T \in \mathcal{F}$ such that for all $x^{(j)} \in T$, $f_T(x^{(j)}) \leq z^{(j)}$ and for all $x^{(j)} \notin T$, $f_T(x^{(j)}) > z^{(j)}$. If there exists some vector $\mathbf{z} \in [-1, 1]^N$ that witnesses the shattering of $\mathcal{S}$ by $\mathcal{F}$, then we say that $\mathcal{S}$ is \emph{shatterable} by $\mathcal{F}$. Finally, the pseudo-dimension of $\mathcal{F}$, denoted by $\text{Pdim}(\mathcal{F})$, is the size of the largest set that is shatterable by $\mathcal{F}$. 
\end{definition}
A standard result for an abstract generalization bound is provided by the next theorem.
\begin{theorem}[{\citet[Theorem~10.6]{DBLP:books/daglib/0034861}}] \label{thm:pollard-pac-bound-general-distribution}
	Let $\Phi$ be a distribution over $\mathcal{X}$ and $\mathcal{F} \subset \{\giventhat{f: \mathcal{X} \rightarrow [-1, 1]}{f \text{ measurable.}} \}$. Set $d=\text{Pdim}(\mathcal{F})$, then, with probability $1-\delta$ over a draw $x^{(1)}, \dots, x^{(N)} \sim \Phi$, for all $f \in \mathcal{F}$, it holds that
	\begin{align*}
		\abs{\frac{1}{N} \sum_{j=1}^{N}f(x^{(j)}) - \mathbb{E}_{x\sim \Phi}\left[f(x)\right] } \leq 2\sqrt{\frac{2d}{N} \log\left( \frac{e N}{d}\right)} + \sqrt{\frac{2}{N} \log\left(\frac{1}{\delta} \right)}.
	\end{align*}
\end{theorem}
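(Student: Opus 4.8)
The plan is to prove this as a standard uniform-convergence (Glivenko--Cantelli) bound, via the two-step recipe of (i) concentrating the supremum around its mean with a bounded-differences inequality and (ii) bounding that mean by symmetrization together with a pseudo-dimension covering-number estimate. Write $\Psi(x^{(1)},\dots,x^{(N)}) := \sup_{f\in\mathcal{F}}\bigl|\tfrac1N\sum_{j=1}^N f(x^{(j)}) - \mathbb{E}_{x\sim\Phi}[f(x)]\bigr|$ for the quantity to be controlled (I use $\Psi$ to avoid clashing with the distribution $\Phi$). I would establish $\mathbb{E}[\Psi]\le 2\sqrt{\tfrac{2d}{N}\log\tfrac{eN}{d}}$ and, separately, that $\Psi$ exceeds its mean by more than $\sqrt{\tfrac2N\log\tfrac1\delta}$ with probability at most $\delta$, then combine the two.

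First I would handle the $\delta$-dependent term. Since every $f\in\mathcal{F}$ takes values in $[-1,1]$, replacing a single coordinate $x^{(j)}$ changes $\tfrac1N\sum_j f(x^{(j)})$ by at most $2/N$ uniformly in $f$, hence changes $\Psi$ by at most $2/N$. McDiarmid's inequality then gives $\Pr[\Psi\ge\mathbb{E}\Psi+t]\le\exp(-t^2N/2)$, and setting the right-hand side equal to $\delta$ produces exactly the term $\sqrt{\tfrac2N\log\tfrac1\delta}$, with no dependence on $d$ or on $\mathcal{F}$ beyond its uniform range bound.

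Next I would bound $\mathbb{E}\Psi$. Introducing an independent ghost sample and Rademacher signs $\sigma_1,\dots,\sigma_N$, the usual symmetrization step gives $\mathbb{E}\Psi\le 2\,\mathbb{E}_{x,\sigma}\bigl[\sup_{f\in\mathcal{F}}\tfrac1N\sum_j\sigma_j f(x^{(j)})\bigr]$, i.e.\ twice the Rademacher complexity of $\mathcal{F}$ (equivalently, one can run Pollard's classical permutation argument directly). Conditioning on the sample, I would replace $\mathcal{F}$ restricted to the sample by a minimal $\alpha$-cover in the empirical metric (nearest-point substitution costs at most $\alpha$), apply Massart's finite-class lemma to bound the supremum over the finite cover by $\sqrt{2\log\mathcal{N}(\alpha)/N}$, and then invoke the real-valued analogue of Sauer's lemma (the Haussler--Pollard covering bound for classes of pseudo-dimension $d$), which gives $\log\mathcal{N}(\alpha)=O\bigl(d\log\tfrac{N}{\alpha d}\bigr)$. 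Choosing $\alpha$ of order $d/N$ collapses this to $d\log\tfrac{eN}{d}$, so that $\mathbb{E}\Psi\le 2\sqrt{\tfrac{2d}{N}\log\tfrac{eN}{d}}$; a union bound over the two steps then finishes the argument.

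The hard part will not be the conceptual structure, which is textbook, but extracting exactly the stated constants and the clean $\log(eN/d)$ form: this needs the sharp pseudo-dimension growth/covering estimate rather than a Dudley entropy-integral bound (which would carry spurious iterated logarithms), together with an essentially optimal choice of the discretization scale $\alpha$ so that the $\log(1/\alpha)$ contribution is absorbed into $\log(N/d)$ and the residual $\alpha$ term stays lower order; the leading factor $2$ likewise hinges on losing nothing in the symmetrization and in Massart's lemma.
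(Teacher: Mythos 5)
This theorem is not proved in the paper: it is quoted verbatim (with the paper's own normalization $\mathcal{F}\subset[-1,1]^{\mathcal{X}}$) from \citet[Theorem~10.6]{DBLP:books/daglib/0034861}, so there is no in-paper argument to compare against. Your proposed route---McDiarmid's bounded-differences inequality for the deviation around the mean, symmetrization to reduce to empirical Rademacher complexity, an empirical $\alpha$-cover plus Massart's finite-class lemma, and the Pollard/Haussler covering bound for classes of bounded pseudo-dimension---is indeed the standard textbook derivation and essentially the chain used in the cited source, so conceptually it is the right proof.

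Two small points of precision are worth flagging. First, your McDiarmid computation is exactly right and delivers the $\sqrt{\tfrac{2}{N}\log\tfrac1\delta}$ term with no slack, and since you define $\Psi$ with the absolute value already inside the supremum, a one-sided tail bound on $\Psi$ suffices---no factor-of-two union bound is needed there. Second, and this is the only real soft spot, the one-shot ``cover-then-Massart'' route you describe yields $\mathbb{E}\Psi\le 2\alpha + 2\sqrt{\tfrac{2\log\mathcal{N}(\alpha)}{N}}$, and the residual $2\alpha$ does not vanish; with $\alpha\asymp d/N$ it is lower order but strictly positive, and the $\log(C/\alpha)$ inside $\mathcal{N}(\alpha)$ carries its own constant. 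So this path gives a bound of the same shape as the statement, but not literally the constants $2\sqrt{\tfrac{2d}{N}\log\tfrac{eN}{d}}$. The cited textbook avoids this by passing through a self-contained lemma that bounds the empirical Rademacher complexity directly by $\sqrt{\tfrac{2d\log(em/d)}{m}}$ (essentially Massart applied to the growth structure induced by pseudo-dimension, without an explicit $\alpha$ trade-off), which is where the clean $\log(eN/d)$ and the factor $2$ from the range $[-1,1]$ come from. You flag this caveat yourself at the end, so your proposal is honest and correct in substance; it would just need to swap the one-shot covering step for the sharper Rademacher-by-pseudo-dimension lemma (or take $\alpha\to0$ in a way that tracks constants) to reproduce the exact displayed bound.
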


\section{Challenges for sampling-based equilibrium verification} \label{sec:a-data-drive-approach-for-equilibrium-verification-bounds}
In this section, we outline our general approach and discuss some of the key challenges that must be overcome to articulate a statement of the following nature:
For a strategy profile $\beta=(\beta_i, \beta_{-i})$, with probability $1 - \delta$ over the draw of the dataset $\mathcal{D}^\beta$, for any agent $i \in [n]$, one can guarantee, (1) for all observations $o_i \in \mathcal{O}_i$, $\hat{\ell}_i(o_i, \beta_i(o_i), \beta_{-i}) \leq \hat{\varepsilon}$ or (2) $\tilde{\ell}_i(\beta_i, \beta_{-i}) \leq \tilde{\varepsilon}$.

We want to give upper bounds $\hat{\varepsilon}$ or $\tilde{\varepsilon}$ that are as tight as possible to the true values of the utility losses.
However, computing the utility losses $\hat{\ell}_i$ and $\tilde{\ell}_i$ is intractable in general. This difficulty arises from two major challenges. First, one cannot evaluate the expected utilities for even a single instance due to the potentially intractable nature of the integrals involved. Second, computing the best-response utilities requires a search over an infinite space.

To overcome these obstacles, our approach is twofold. We approximate the integrals by calculating the empirical mean of the \textit{ex post} utility $u_{i, \text{M}}$ over suitable subsets of the dataset $\mathcal{D}^\beta$. This process is referred to as the \emph{simulation step}. 
To address the issue of searching over an infinite space for the best-response utilities, we constrain this search to a finite set, a method we denote as the \emph{discretization step}.
To this end, let $w>0$. We then consider a so-called $w$-grid $\mathcal{G}_w \subset \mathcal{A}_i$.
That is, $\mathcal{G}_w$ is a finite set such that for every $p \in A$ there exists a $p^\prime \in \mathcal{G}_w$ such that $\norm{p - p^\prime}_1 \leq w$.

We illustrate our approach by detailing the steps involved in estimating agent $i$'s \textit{ex interim} utility loss $\hat{\ell}_i(o_i, \beta_i(o_i), \beta{-i})$ for an observation $o_i$ and strategy profile $\beta$ through an explicit example.

\begin{example} \label{exa:two-agents-fpsb}
	Consider a first-price single-item auction with two bidders and independent priors. The two bidders receive their true valuations as observations, that is, $o_i = \theta_i$. We set $\Theta_i = \mathcal{A}_i=[0, 1]$.
	The utility function for agent $1$ is then given by $u_{1, \text{M}}(\theta_1, b_1, b_{2}) = \mathds{1}_{\left\{b_1 > b_{2} \right\}}(\theta_1 - b_1)$.
	The problem of determining the \textit{ex interim} utility loss for a valuation $\theta_1$ for agent $1$ is the optimization problem
	\begin{align} \label{equ:example-fpsb-interim-br-problem}
		\hat{\ell}_1(\theta_1, \beta_1(\theta_1), \beta_2) = \sup_{b_1 \in [0, 1]} \hat{u}_{1, \text{M}}(\theta_1, b_1, \beta_2) - \hat{u}_{1, \text{M}}(\theta_1, \beta_1(\theta_1), \beta_2)
	\end{align}
	where agent $2$ bids according to $\beta_2$.
	Further, consider the dataset of samples of bid queries $\mathcal{D}^{\beta_2} = \left\{\beta_2(\theta_2^{(1)}), \dots, \beta_2(\theta_2^{(N)}) \right\}$ that can be extracted from $\mathcal{D}^{\beta}$.
	Then, the estimator that searches a best-response over a finite set $\mathcal{G}_w \subset [0, 1]$ of the empirical mean as described above is given by
	\begin{align} \label{equ:example-fpsb-empirical-mean-approximation}
		\sup_{b_1 \in \mathcal{G}_w} \frac{1}{N} \sum_{j=1}^N u_{1, \text{M}}(\theta_1, b_1, \beta_2(\theta_2^{(j)})) - u_{1, \text{M}}(\theta_1, \beta_1(\theta_1), \beta_2(\theta_2^{(j)})).
	\end{align}
\end{example}

The challenge presented in the aforementioned example centers on mitigating the estimation error between Equation~\ref{equ:example-fpsb-interim-br-problem} and its approximation through Equation~\ref{equ:example-fpsb-empirical-mean-approximation}. To address this, we propose to limit the approximation error associated with employing the empirical mean (simulation step) by applying a classic learning theoretic concentration bounds.

However, controlling the error introduced during the discretization step--namely, restricting the search to a finite set--proves more challenging due to the discontinuous nature of the utility functions. A minor variation in the bid $b_1$ can affect the allocation outcome, thereby causing abrupt shifts in agent $1$'s utility. This discontinuity poses a particular problem when working with finite precision $w$, as it might prevent agent $1$ from achieving substantial improvements in utility due to the granularity of the bid increments.
We control this by considering the concept of dispersion.
\begin{definition}[\citet{balcanDispersionDataDrivenAlgorithm2018}] \label{def:dispersion}
	Let $f_1, \ldots, f_N: \mathbb{R}^d \rightarrow \mathbb{R}$ be a set of functions where each $f_i$ is piecewise Lipschitz with respect to the $\ell_1$-norm over a partition $\mathcal{P}_i$ of $\mathbb{R}^d$. We say that $\mathcal{P}_i$ splits a set $A \subseteq \mathbb{R}^d$ if $A$ intersects with at least two sets in $\mathcal{P}_i$. The set of functions is $(w, v)$-dispersed if for every point ${p} \in \mathbb{R}^d$, the ball $\left\{{p}^{\prime} \in \mathbb{R}^d:\left\|{p}-{p}^{\prime}\right\|_1 \leq w\right\}$ is split by at most $v$ of the partitions $\mathcal{P}_1, \ldots, \mathcal{P}_N$
\end{definition}

Dispersion quantifies the number of discontinuities present within any given ball of width $w$. The larger the value of $w$ and the smaller the value of $v$, the more ``dispersed'' the discontinuities of the functions are. For a $(w, v)$-dispersed set of $N$ functions, at most $v$ jump discontinuities occur within a ball of radius $w$. Thus, within any ball of radius $w$, at least $N-v$ functions exhibit $L$-Lipschitz continuity, while at most $v$ do not.

Considering Example~\ref{exa:two-agents-fpsb}, assume the functions $u_{1, \text{M}}(\theta_i, \cdot, \beta_2(\theta_2^{(1)})), \dots, u_{1, \text{M}}(\theta_i, \cdot, \beta_2(\theta_2^{(N)}))$ are $(w, v)$-dispersed. Then, for any $b_1, b_1^\prime \in [0, 1]$ with $\norm{b_1 - b_1^\prime}_1 \leq w$, we can bound the difference by
\begin{align*}
	\abs{\frac{1}{N} \sum_{j=1}^N u_{1, \text{M}}(\theta_1, b_1, \beta_2(\theta_2^{(j)})) - u_{1, \text{M}}(\theta_1, b_1^\prime, \beta_2(\theta_2^{(j)}))} \leq \frac{N - v}{N}L_i w_i + \frac{2v}{N}\norm{u_{1, \text{M}}}_\infty.
\end{align*}
For sufficiently small $v$, the error is small. Therefore, if we can ensure that the discontinuities are sufficiently dispersed with high probability, the error from searching over $\mathcal{G}_w$ can be controlled.

The approach we have outlined aligns with the work of \citet{balcanEstimatingApproximateIncentive2019}. However, in our scenario, an agent must reason about the opponents' bid distribution rather than the prior distribution directly. Next, we discuss the additional considerations necessary for this.

\subsection{Sufficient properties for strategies to be verifiable} \label{sec:classifying-space-of-verifiable-strategies}
We address the question of identifying the kinds of bidding strategies that can be effectively verified using the approach outlined above. To achieve meaningful dispersion guarantees, we discuss specific sufficient conditions of regularity for strategies to be verifiable.

We observed that the concept of dispersion hinges on a sufficient spread of discontinuities, implying in our context that the opponents' bidding distribution $F^{\beta_{-i}}$ should not be too concentrated. Initially, we assume the prior distribution $F$ to be $\kappa$-bounded, meaning it possesses a $\kappa$-bounded density function $\phi$, that is, $\sup_x \phi(x) \leq \kappa$. However, the bidding distribution $F^{\beta_{-i}}$ may still exhibit concentration even if the prior distributions do not. We demonstrate that a bounded prior distribution remains bounded under a bidding strategy if the bidding strategy is sufficiently smooth and changes with a minimal rate over the received observation.
More specifically, we demand the bidding strategies to be bi-Lipschitz continuous.
\begin{definition}[Bi-Lipschitz function, \citet{verineExpressivityBiLipschitzNormalizing2023}] \label{def:bi-lipschitz-function}
	Let $\mathcal{X}, \mathcal{Y} \subset \mathbb{R}^m$. A bijective function $g: \mathcal{X} \rightarrow \mathcal{Y}$ is said to be $(L_g, L_{g^{-1}})$-bi-Lipschitz if $g$ is $L_g$-Lipschitz and its inverse $g^{-1}$ is $L_{g^{-1}}$-Lipschitz, that is, for all $x_1, x_2 \in \mathcal{X}$ and $y_1, y_2 \in \mathcal{Y}$
	\begin{align*}
		\norm{g(x_1) - g(x_2)} \leq L_g \norm{x_1 - x_2} \text{ and } \norm{g^{-1}(y_1) - g^{-1}(y_2)} \leq L_{g^{-1}} \norm{x_1 - x_2}.
	\end{align*}
\end{definition}

The bi-Lipschitz continuity ensures that neither the function nor its inverse can change arbitrarily fast. More precisely, our approximation results in the following sections are valid for the set of bidding strategies for an agent $i$, defined as
\begin{align*}
\tilde{\Sigma}_{i} := \left\{\giventhat{\beta_{i} \in \Sigma_i}{\beta_{i} \text{ is continuously differentiable and bi-Lipschitz continuous}}\right\}.
\end{align*}
For a strategy $\beta_i \in \tilde{\Sigma}_i$, we denote the bi-Lipschitz constants by $L_{\beta_i}, L_{\beta_i^{-1}}$. Further, define $L_{\beta^{-1}_{\text{max}}} := \max_{t \in [n]} L_{\beta^{-1}_{t}}$.
We leverage the properties of bi-Lipschitz functions to bound the density function of the bidding distribution.

\begin{theorem}\label{thm:bounded-prior-density-remains-bounded-under-bi-Lipschitz}
Let $\mathcal{O}_i \subset \mathbb{R}^m$ for all $i \in [n]$. Denote with $\phi_{F_{o_i}}$ and $\phi_{F_{o_i, o_j}}$ the density functions for the marginal prior distributions $F_{o_i}$ and $F_{o_i, o_j}$ for any $i, j \in [n]$.
Further assume that $\phi_{F_{o_i}}$ and $\phi_{F_{o_i, o_j}}$ are $\kappa$-bounded density functions for some $\kappa > 0$. Further, let $(\beta_i, \beta_{-i}) \in \tilde{\Sigma}$ be a strategy profile of bi-Lipschitz continuous bidding strategies. Then, the probability density functions of the bidding distributions $F^{\beta_i}_{o_i}$ and $F^{\beta_i, \beta_j}_{o_i, o_j}$ satisfy
\begin{align*}
	\sup_{b_{i} \in \beta_{i}(\mathcal{O}_{i})} \phi_{F^{\beta_{i}}_{o_{i}}}(b_{i}) &\leq \kappa \cdot L_{\beta_i^{-1}}^m\\
	\sup_{(b_{i}, b_j) \in \beta_{i}(\mathcal{O}_{i}) \times \beta_{j}(\mathcal{O}_{j})} \phi_{F^{\beta_{i}, \beta_j}_{o_{i}, o_j}}(b_{i}, b_j) &\leq \kappa \cdot L_{\beta_i^{-1}}^m \cdot L_{\beta_j^{-1}}^m.
\end{align*}
\end{theorem}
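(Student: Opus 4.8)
The plan is to establish the two density bounds by the standard change-of-variables formula for pushforward densities under bi-Lipschitz diffeomorphisms, together with the observation that the Jacobian of the inverse map is controlled by the Lipschitz constant of that inverse. I would do the single-agent bound first, then obtain the two-agent bound by essentially the same computation applied to the product map $\beta_i \times \beta_j$ on $\mathcal{O}_i \times \mathcal{O}_j$.

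First, recall that for $\beta_i \in \tilde{\Sigma}_i$ the map $\beta_i : \mathcal{O}_i \to \beta_i(\mathcal{O}_i)$ is a continuously differentiable bijection whose inverse is $L_{\beta_i^{-1}}$-Lipschitz; the bidding distribution $F^{\beta_i}_{o_i}$ is by definition the pushforward of $F_{o_i}$ under $\beta_i$. Since $\beta_i^{-1}$ is Lipschitz it is differentiable almost everywhere (Rademacher), and by the change-of-variables / pushforward formula the density of $F^{\beta_i}_{o_i}$ at a point $b_i = \beta_i(o_i)$ is
\begin{align*}
	\phi_{F^{\beta_i}_{o_i}}(b_i) = \phi_{F_{o_i}}\bigl(\beta_i^{-1}(b_i)\bigr)\cdot\bigl|\det D\beta_i^{-1}(b_i)\bigr|.
\end{align*}
Then I would bound the two factors separately: the first is at most $\kappa$ by the $\kappa$-boundedness hypothesis, and for the second I use that an $L$-Lipschitz map on $\mathbb{R}^m$ has operator norm of its derivative bounded by $L$ wherever the derivative exists, so every singular value of $D\beta_i^{-1}(b_i)$ is at most $L_{\beta_i^{-1}}$ and hence $|\det D\beta_i^{-1}(b_i)| \le L_{\beta_i^{-1}}^m$. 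Taking the supremum over $b_i \in \beta_i(\mathcal{O}_i)$ gives the first inequality. For the second inequality, note that $(\beta_i,\beta_j)$ is a bijection from $\mathcal{O}_i\times\mathcal{O}_j$ onto $\beta_i(\mathcal{O}_i)\times\beta_j(\mathcal{O}_j)$ whose inverse $(\beta_i^{-1},\beta_j^{-1})$ is block-diagonal, so its Jacobian determinant factors as $|\det D\beta_i^{-1}|\cdot|\det D\beta_j^{-1}| \le L_{\beta_i^{-1}}^m L_{\beta_j^{-1}}^m$; combining with $\phi_{F_{o_i,o_j}} \le \kappa$ and taking the supremum yields the claim.

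I expect the main obstacle to be the measure-theoretic justification of the change-of-variables formula in the generality stated: $\beta_i$ is assumed continuously differentiable and bi-Lipschitz, but one must still argue that $F_{o_i}$ having a $\kappa$-bounded density is enough to conclude that the pushforward is absolutely continuous with the asserted density — in particular handling the set where $D\beta_i^{-1}$ fails to exist (Lebesgue-null, hence $F_{o_i}$-null since $F_{o_i}$ is absolutely continuous) and verifying that $\beta_i$ maps null sets to null sets (Luzin's property $N$, which holds for Lipschitz maps). A clean way to sidestep subtleties is to test against an arbitrary measurable set $A \subseteq \beta_i(\mathcal{O}_i)$, write $F^{\beta_i}_{o_i}(A) = F_{o_i}(\beta_i^{-1}(A)) = \int_{\beta_i^{-1}(A)}\phi_{F_{o_i}}(o_i)\,do_i$, and apply the area formula for the Lipschitz map $\beta_i^{-1}$ to rewrite this as $\int_A \phi_{F_{o_i}}(\beta_i^{-1}(b_i))\,|\det D\beta_i^{-1}(b_i)|\,db_i$, from which the density identity and then the bound follow. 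Everything else is routine linear algebra (singular values versus Lipschitz constant) and the hypotheses on $\kappa$-boundedness.
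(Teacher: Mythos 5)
Your proposal is correct and follows essentially the same route as the paper: express $F^{\beta_i}_{o_i}$ as the pushforward of $F_{o_i}$ under $\beta_i$, apply the change-of-variables formula to write $\phi_{F^{\beta_i}_{o_i}}(b_i) = \phi_{F_{o_i}}(\beta_i^{-1}(b_i))\,|\det \mathcal{J}\beta_i^{-1}(b_i)|$, bound the two factors by $\kappa$ and $L_{\beta_i^{-1}}^m$ respectively, and handle the joint case via the block-diagonal structure of $\mathcal{J}(\beta_i^{-1},\beta_j^{-1})$. The only difference is that the paper invokes a cited lemma (Lemma~\ref{thm:bound-determinante-of-jacobian-of-bi-lipschitz-map}) for the Jacobian-determinant bound and a specific version of change of variables from Villani, whereas you re-derive the determinant bound via singular values and flag measure-theoretic concerns (Rademacher, Luzin's property~$N$); note that the latter concerns are largely moot here because $\tilde{\Sigma}_i$ requires $\beta_i$ to be $C^1$, and together with bi-Lipschitzness (nonvanishing Jacobian) the inverse function theorem gives $\beta_i^{-1}\in C^1$, so classical change of variables applies directly.
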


\begin{sproof}
By the definition of bi-Lipschitz continuity, the function $\beta_i : \mathcal{O}_i \rightarrow \beta_i \left(\mathcal{O}_i\right)$ is invertible for any $i \in [n]$. We perform a change of variables and get for $b_i \in \beta_i\left(\mathcal{O}_i \right)$
\begin{align*}
	\phi_{F^{\beta_{i}}_{o_{i}}}(b_{i}) = \phi_{F_{o_i}} \left(\beta_i^{-1}(b_i) \right) \cdot \abs{ \det(\mathcal{J} \beta_i^{-1}(b_i))} \leq \kappa \cdot L_{\beta_i^{-1}}^m,
\end{align*}
where $m$ denotes the dimension of $\mathcal{O}_i$. We used a well-known bound on a bi-Lipschitz mapping's Jacobian determinant in the second step. The case of two agents $i, j \in [n]$ is similar and additionally leverages a property of the determinant for block matrices. The full proof is in Appendix~\ref{sec:proofs-appendix-sampling-based-approach-for-verification}.
\end{sproof}


To illustrate the effect, Figure \ref{fig:exemplary-density-function-transformations} shows how the density function of a beta-distribution $\text{Beta}(2, 5)$ is transformed under different strategies. Linear transformations such as $\theta_i \mapsto \frac{1}{2} \theta_i$ restrict the bidding space to $[0, \frac{1}{2}]$, which compresses the density and leads to a higher maximum value.
The mapping $\theta_i \mapsto \theta_i^2$ leads to an unbounded density, which can occur because its inverse is \emph{not} Lipschitz continuous. However, the bidding strategy under the mapping $\theta_i \mapsto \theta_i^{3/2}$ remains bounded, even though the mapping itself is not bi-Lipschitz continuous. The prior density assigns a high mass to valuations close to zero, but the strategy increases rapidly enough to redistribute a significant amount of mass away from zero.
These examples underscore that while our assumptions provide a sufficient condition for the verification of bidding strategies, our findings may extend to a broader class of bidding strategies and prior distributions.
\begin{figure}[htbp]
\centering
\includegraphics[width=.7\textwidth]{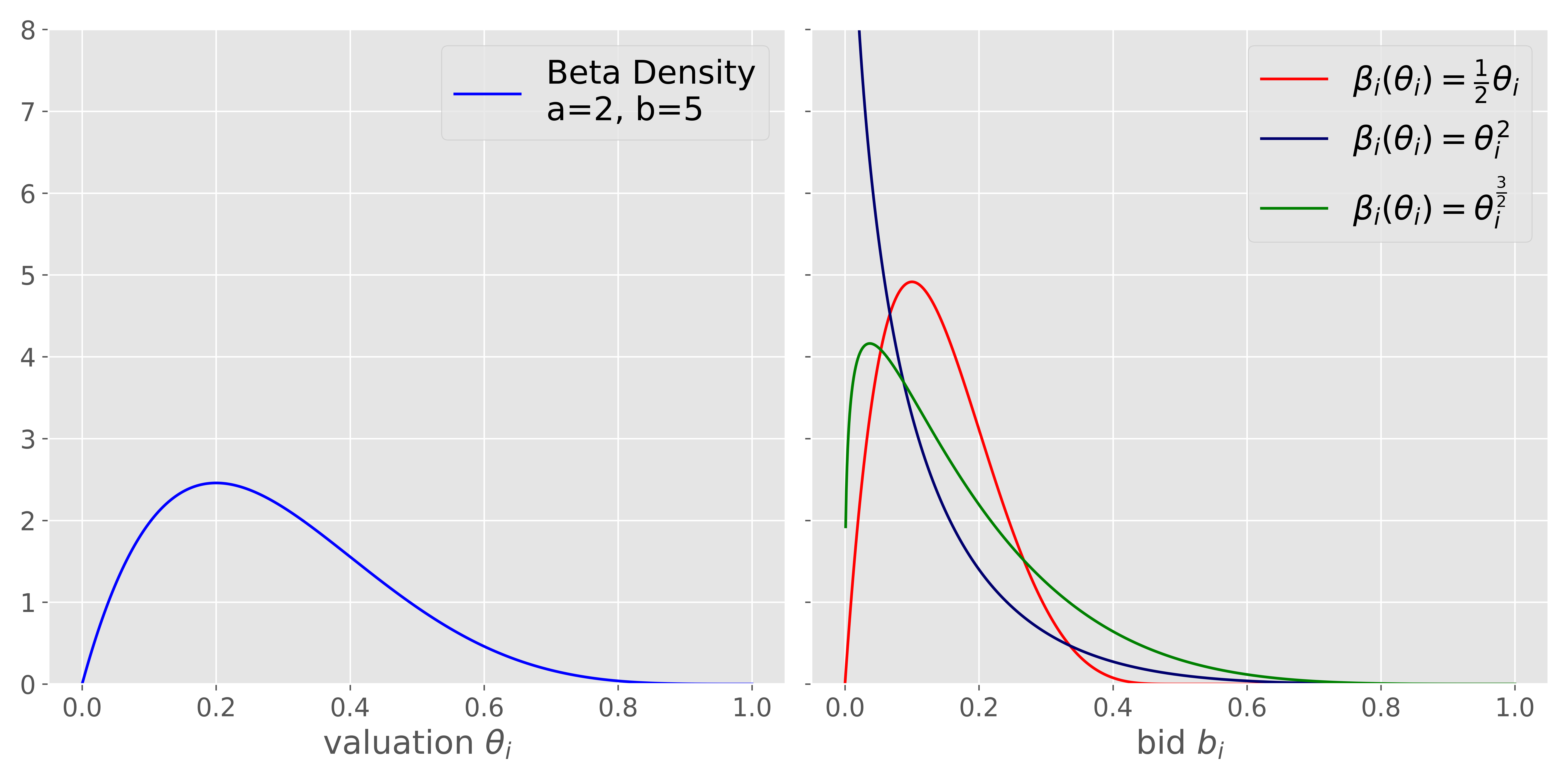}
\caption{A beta-distributed density function of agent $i$'s valuations (left) and the corresponding bidding density functions under different strategies $\beta_i$ (right).}
\label{fig:exemplary-density-function-transformations}
\end{figure}

This raises the question of how restrictive it is to consider only strategies from $\tilde{\Sigma}$ for verification. Continuously differentiable is a common assumption met by most function approximation techniques, such as neural networks, making this restriction relatively mild. The restriction to bi-Lipschitz continuous strategies is more stringent.
In the one-dimensional case, this translates to the bid strategy being strictly monotonic. Monotonicity is a very common assumption for strategies in auctions, where bids are assumed not to decrease with a rising valuation~\citep{renyNashEquilibriumDiscontinuous2020}. In the standard model that we consider here, strict monotonicity remains a reasonable assumption. Furthermore, known equilibrium strategy profiles commonly fall within this set $\tilde{\Sigma}$~\citep{krishna2009auction}.
Nevertheless, under model assumptions such as reserve prices or budget constraints, bidders may resort to constant bidding for a range of observations, so that their strategies do not lie within $\tilde{\Sigma}$.

\section{Verifying approximate equilibrium under independent priors} \label{sec:independent-prior-statements}
In this section, we give guarantees for the maximum \textit{ex interim} utility loss for a given strategy profile $\beta$ under the common assumption of independent prior valuations~\citep{lubinApproximateStrategyproofness2012, azevedoStrategyproofnessLarge2019, hartApproximateRevenueMaximization2012, chawlaAuctionsUniqueEquilibria2013}. Specifically, we consider an auction $G=\left(n, \mathcal{A}, \Theta, \mathcal{O}, u, F \right)$, simplifying several aspects. 
For all agents $i \in [n]$, we assume $\mathcal{A}_i = \mathcal{O}_i = \Theta_i$, and a drawn observation $o_i$ equals the true valuation $\theta_i$, allowing us to omit the observation space entirely.
Furthermore, the prior distribution simplifies to a product distribution over the valuation spaces, that means, $F = \prod_{i \in [n]} F_{\theta_i}$.
A bidding strategy $\beta_i$ is then a mapping from agent $i$'s valuation space onto itself, $\beta_i: \Theta_i \rightarrow \Theta_i$.

In the \textit{ex interim} stage, agent $i$ must reason about the opponents' bid distribution $F^{\beta_{-i}}$ given its valuation $\theta_i$. We organize the dataset $\mathcal{D}^{\beta}$ as follows: denote with $\mathcal{D}^{\beta_{-i}} := \left\{\beta_{-i}(\theta_{-i}^{(1)}), \dots, \beta_{-i}(\theta_{-i}^{(N)}) \right\}$ the dataset of agent $i$'s opponents' bids. Then $\mathcal{D}^{\beta_{-i}}$ consists of i.i.d. samples from $F^{\beta_{-i}}$.

\subsection{A sampling-based bound on the \textit{ex interim} utility loss via grid search}
We start with the \emph{sampling step} of our approach, presenting a result to estimate the \textit{ex interim} utility loss by considering the empirical mean instead of the expectation. We use a classical PAC-learning result (Theorem \ref{thm:pollard-pac-bound-general-distribution}) to bound the error incurred by taking the empirical mean compared to evaluating the integral, demonstrating that this error converges towards zero as the number of samples $N$ increases.
For mechanism $\text{M}$ and agent $i$, define the class of functions that map opponent bids to utility by $\hat{\mathcal{F}}_{i, \text{M}} := \left\{\giventhat{u_{i, \text{M}}(\theta_i, \hat{\theta}_i, \argdot): \Theta_{-i} \rightarrow [-1, 1]}{\theta_i, \hat{\theta_i} \in \Theta_i} \right\}$.

\begin{theorem} \label{thm:pseudo-dim-guarantee-independent-priors-interim-utility}
Let $\delta > 0$, $\text{M}$ be a mechanism, and $\beta \in \Sigma$. Then, it holds with probability $1-\delta$ for all agents $i \in [n]$ over the draw of datasets $\mathcal{D}^{\beta_{-1}}, \dots, \mathcal{D}^{\beta_{-n}}$ of valuation-bid queries,
\begin{align*}
	&\sup_{\theta_i \in \Theta_i} \hat{\ell}_i(\theta_i, \beta_i(\theta_i), \beta_{-i}) = \sup_{\theta_i, \hat{\theta}_i \in \Theta_i} \hat{u}_{i, \text{M}}(\theta_i, \hat{\theta}_i, \beta_{-i}) - \hat{u}_{i, \text{M}}(\theta_i, \beta_i(\theta_i), \beta_{-i}) \\
	&\leq \sup_{\theta_i, \hat{\theta}_i \in \Theta_i} \frac{1}{N} \sum_{j=1}^N u_{i, \text{M}}(\theta_i, \hat{\theta}_i, \beta_{-i}(\theta_{-i}^{(j)})) - u_{i, \text{M}}(\theta_i, \beta_i(\theta_i), \beta_{-i}(\theta_{-i}^{(j)})) + \hat{\varepsilon}_{i, \text{Pdim}}(N, \delta),\\
	&\mbox{ where } \hat{\varepsilon}_{i, \text{Pdim}}(N, \delta) := 4\sqrt{\frac{2d_i}{N} \log\left( \frac{e N}{d_i}\right)} + 2\sqrt{\frac{2}{N} \log\left(\frac{2n}{\delta} \right)}, \mbox{ and } d_i=\text{Pdim}(\hat{\mathcal{F}}_{i, \text{M}}).
\end{align*}
\end{theorem}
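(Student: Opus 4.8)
The equality in the statement is immediate: in the independent-prior setting $\mathcal{A}_i = \Theta_i$, so the inner supremum defining $\hat{\ell}_i(\theta_i, \beta_i(\theta_i), \beta_{-i})$ ranges over $b_i' \in \Theta_i$, which I rename $\hat{\theta}_i$; taking the outer $\sup$ over $\theta_i$ then merges the two into a single supremum over $(\theta_i, \hat{\theta}_i) \in \Theta_i^2$, with the subtracted term $\hat{u}_{i,\text{M}}(\theta_i, \beta_i(\theta_i), \beta_{-i})$ depending only on $\theta_i$. So it remains to bound this joint supremum of \textit{ex interim} utility differences by the corresponding joint supremum of empirical differences plus $\hat{\varepsilon}_{i,\text{Pdim}}(N,\delta)$.

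The plan is to apply the uniform-convergence bound of Theorem~\ref{thm:pollard-pac-bound-general-distribution} to the function class $\hat{\mathcal{F}}_{i,\text{M}}$ under the opponents' bid distribution $F^{\beta_{-i}}$. First I record the two structural facts that make this work. (i) Since $o_i = \theta_i$, conditioning on $o_i$ is trivial, and independence gives $o_{-i}\mid o_i \sim F_{\theta_{-i}}$; hence by \eqref{equ:interim-utility}, $\hat{u}_{i,\text{M}}(\theta_i, b_i, \beta_{-i}) = \mathbb{E}_{\theta_{-i}}[u_{i,\text{M}}(\theta_i, b_i, \beta_{-i}(\theta_{-i}))] = \mathbb{E}_{b_{-i}\sim F^{\beta_{-i}}}[u_{i,\text{M}}(\theta_i, b_i, b_{-i})]$, i.e.\ the \textit{ex interim} utility is exactly the mean under $F^{\beta_{-i}}$ of a member of $\hat{\mathcal{F}}_{i,\text{M}}$, and this applies to $b_i = \hat{\theta}_i$ and to $b_i = \beta_i(\theta_i)$ since both lie in $\Theta_i = \mathcal{A}_i$. (ii) As already noted in the preceding text, $\mathcal{D}^{\beta_{-i}}$ consists of $N$ i.i.d.\ draws from $F^{\beta_{-i}}$. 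With these in hand, Theorem~\ref{thm:pollard-pac-bound-general-distribution} applied with $\Phi = F^{\beta_{-i}}$, $\mathcal{F} = \hat{\mathcal{F}}_{i,\text{M}}$, $d = d_i$, and failure probability $\delta/(2n)$ yields an event of probability $\geq 1 - \delta/(2n)$ on which, \emph{simultaneously for every} $f \in \hat{\mathcal{F}}_{i,\text{M}}$, the empirical mean of $f$ over $\mathcal{D}^{\beta_{-i}}$ is within $\tfrac12\hat{\varepsilon}_{i,\text{Pdim}}(N,\delta)$ of its true mean.

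On this event, fix any $(\theta_i, \hat{\theta}_i) \in \Theta_i^2$. Both $u_{i,\text{M}}(\theta_i, \hat{\theta}_i, \cdot)$ and $u_{i,\text{M}}(\theta_i, \beta_i(\theta_i), \cdot)$ belong to $\hat{\mathcal{F}}_{i,\text{M}}$, so I upper-bound the first true mean by its empirical mean plus $\tfrac12\hat{\varepsilon}_{i,\text{Pdim}}$, lower-bound the second true mean by its empirical mean minus $\tfrac12\hat{\varepsilon}_{i,\text{Pdim}}$, and subtract, obtaining
\begin{align*}
\hat{u}_{i,\text{M}}(\theta_i,\hat{\theta}_i,\beta_{-i}) - \hat{u}_{i,\text{M}}(\theta_i,\beta_i(\theta_i),\beta_{-i}) \leq \frac{1}{N}\sum_{j=1}^{N}\Big( u_{i,\text{M}}(\theta_i,\hat{\theta}_i,\beta_{-i}(\theta_{-i}^{(j)})) - u_{i,\text{M}}(\theta_i,\beta_i(\theta_i),\beta_{-i}(\theta_{-i}^{(j)})) \Big) + \hat{\varepsilon}_{i,\text{Pdim}}(N,\delta).
\end{align*}
Since this holds for every $(\theta_i,\hat{\theta}_i)$ and the supremum of the right-hand side's first term dominates each summand-difference, taking $\sup_{\theta_i,\hat{\theta}_i}$ on both sides (monotonicity of $\sup$) gives the claimed inequality for agent $i$. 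A union bound over $i \in [n]$ (each consuming $\delta/(2n)$, leaving room for the two-sided nature / both directions of the concentration estimate) gives the ``for all agents'' statement with total failure probability $\delta$.

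\textbf{Main obstacle.} There is no deep obstacle here; the proof is a direct application of a uniform-convergence bound plus bookkeeping. The one place where the learning-theoretic machinery is genuinely needed---rather than a pointwise Hoeffding bound---is that the suprema over the deviation $\hat{\theta}_i$ and the true type $\theta_i$ range over a continuum, so the concentration estimate must hold uniformly over all of $\hat{\mathcal{F}}_{i,\text{M}}$ \emph{before} the suprema are taken; this is exactly what Theorem~\ref{thm:pollard-pac-bound-general-distribution} provides, and why $d_i = \text{Pdim}(\hat{\mathcal{F}}_{i,\text{M}})$ enters the bound. The only subtlety requiring care is fact (i)---identifying the \textit{ex interim} utility with an expectation over the pushforward distribution $F^{\beta_{-i}}$, together with the i.i.d.\ structure of $\mathcal{D}^{\beta_{-i}}$---and the constant bookkeeping: the factor $4$ (on the $\sqrt{(2d_i/N)\log(eN/d_i)}$ term) and the factor $2$ (on the $\sqrt{(2/N)\log(2n/\delta)}$ term) arise from subtracting two instances of the concentration bound, and the $2n$ inside the logarithm from splitting the failure probability across the $n$ agents and the two tail directions.
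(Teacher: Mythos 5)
Your proof is correct and follows essentially the same route as the paper's: apply the uniform pseudo-dimension bound (Theorem~\ref{thm:pollard-pac-bound-general-distribution}) to $\hat{\mathcal{F}}_{i,\text{M}}$ over the pushforward distribution $F^{\beta_{-i}}$, invoke it simultaneously for $u_{i,\text{M}}(\theta_i,\hat\theta_i,\cdot)$ and $u_{i,\text{M}}(\theta_i,\beta_i(\theta_i),\cdot)$ (yielding the factor $2$), and union-bound over the $n$ agents. Your parenthetical about ``leaving room for the two-sided nature'' is a misattribution --- Theorem~\ref{thm:pollard-pac-bound-general-distribution} is already two-sided, and the factor of two in your per-agent budget $\delta/(2n)$ merely makes the final failure probability $\delta/2 \leq \delta$, so the accounting is conservative but sound.
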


\begin{sproof}
Fix an arbitrary agent $i \in [n]$. Then we have with $\mathcal{D}^{\beta_{-i}} := \left\{\beta_{-i}(\theta_{-i}^{(1)}), \dots, \beta_{-i}(\theta_{-i}^{(N)}) \right\}$ that $\beta_{-i}(\theta_{-i}^{(j)}) \sim F^{\beta_{-i}}_{\theta_{-i}}$ is i.i.d. for $1 \leq j \leq N$. 
Therefore, by applying Theorem \ref{thm:pollard-pac-bound-general-distribution}, we have with probability at least $1 - \frac{\delta}{2}$ for all $u_{i, \text{M}}(\theta_{i}, \hat{\theta}_i, \argdot) \in \hat{\mathcal{F}}_{i, \text{M}}$ that
\begin{align*}
	\abs{\frac{1}{N} \sum_{j=1}^{N}u_{i, \text{M}}(\theta_{i}, \hat{\theta}_i, \beta_{-i}(\theta_{-i}^{(j)})) - \mathbb{E}_{\beta_{-i}(\theta_{-i})\sim F^{\beta_{-i}}}\left[u_{i, \text{M}}(\theta_{i}, \hat{\theta}_i, \beta_{-i}(\theta_{-i}))\right] } \leq \frac{1}{2}\hat{\varepsilon}_{i, \text{Pdim}}(N, n\delta).
\end{align*}
We apply this to the pairs $(\theta_i, \hat{\theta_i})$ and $(\theta_i, \beta_i(\theta_i))$. A union bound over the agents finishes the statement. The full proof is in Appendix~\ref{sec:appendix-independent-prior-proofs}.
\end{sproof}

The statement is similar to Theorem 3.2 from \citet{balcanEstimatingApproximateIncentive2019}. The key difference lies in the observation that one can average over the opponents' bidding distribution $F^{\beta_{-i}}$ instead of the opponents' prior distribution $F_{\theta_{-i}}$.

We proceed with the \emph{discretization step} of our procedure. For this purpose, we assume $\Theta_i = [0, 1]^m$ for some suitable $m \in \mathbb{N}$. Let $\mathcal{G}_w \subset \Theta_i$ be a $w$-grid for $w>0$, where the largest distance between any point is bounded by $w$. To bound the error incurred by restricting the search to a finite grid, we assume a certain degree of dispersion, as discussed in Section~\ref{sec:a-data-drive-approach-for-equilibrium-verification-bounds}.

\begin{assumption} \label{ass:general-dispersion-holds-independent-prior}
Suppose that for mechanism $\text{M}$ and each agent $i \in[n]$, there exist $L_i, w_i \in \mathbb{R}$ and a function $v_{i}: \mathbb{R} \rightarrow \mathbb{R}$, such that with probability $1-\delta$ over the draw of the $n$ sets $\mathcal{D}^{\beta_{-i}} := \left\{\beta_{-i}(\theta_{-i}^{(1)}), \dots, \beta_{-i}(\theta_{-i}^{(N)}) \right\}$, the following conditions hold:
\begin{enumerate}
	\item For any valuation $\theta_i \in[0,1]^m$, the functions $u_{i, \text{M}}\left(\theta_i, \argdot, \beta_{-i}(\theta_{-i}^{(1)})\right), \dots, u_{i, \text{M}}\left(\theta_i, \argdot, \beta_{-i}(\theta_{-i}^{(N)})\right)$ are piecewise $L_i$-Lipschitz and $\left(w_i, v_i\left(w_i \right)\right)$-dispersed.
	\item For any reported $\hat{\theta}_i \in[0,1]^m$, the functions $u_{i, \text{M}}\left(\argdot, \hat{\theta}_i, \beta_{-i}(\theta_{-i}^{(1)})\right), \dots, u_{i, \text{M}}\left(\argdot, \hat{\theta}_i, \beta_{-i}(\theta_{-i}^{(N)})\right)$ are piecewise $L_i$-Lipschitz and $\left(w_i, v_i\left(w_i \right)\right)$-dispersed.
\end{enumerate}
\end{assumption}

The constants $w_i$ and $v_i\left(w_i \right)$ will be properties resulting from the interplay of the utilized mechanism $\text{M}$, the prior distribution $F$, the opponents' strategy profile $\beta_{-i}$, and the number of drawn samples. Under the assumption that the dispersion guarantees hold, we can provide the following guarantee on the \textit{ex interim} utility loss. The full proof is in Appendix~\ref{sec:appendix-independent-prior-proofs}.

\begin{theorem}\label{thm:independent-prior-guarantee-by-dispersion-over-finite-grid}
Let $\delta > 0$ and $\text{M}$ be a mechanism. Furthermore, let $\beta \in \tilde{\Sigma}$ be a strategy profile.
Given that Assumption \ref{ass:general-dispersion-holds-independent-prior} holds for $w_i >0$, $v_i(w_i)$, and $v_i(L_{\beta_i}w_i)$, we have with probability at least $1 -  3\delta$ over the draw of the datasets $\mathcal{D}^{\beta_{-1}}, \dots, \mathcal{D}^{\beta_{-n}}$ for every agent $i \in [n]$
\begin{align*}
	& \sup_{\theta_i \in \Theta_i} \hat{\ell}_i(\theta_i, \beta_i(\theta_i), \beta_{-i}) = \sup_{\theta_i, \hat{\theta}_i \in \Theta_i} \hat{u}_{i, \text{M}}(\theta_i, \hat{\theta}_i, \beta_{-i}) - \hat{u}_{i, \text{M}}(\theta_i, \beta_i(\theta_i), \beta_{-i}) \\
	&\leq \sup_{\theta_i, \hat{\theta}_i \in \mathcal{G}_{w_i}} \frac{1}{N} \sum_{j=1}^N u_{i, \text{M}}(\theta_i, \hat{\theta}_i, \beta_{-i}(\theta_{-i}^{(j)})) - u_{i, \text{M}}(\theta_i, \beta_i(\theta_i), \beta_{-i}(\theta_{-i}^{(j)}))+\hat{\varepsilon}_i,\\
	&\mbox{ where } \ \hat{\varepsilon}_i:=4\sqrt{\frac{2d_i}{N} \log\left( \frac{e N}{d_i}\right)} + 2\sqrt{\frac{2}{N} \log\left(\frac{2n}{\delta} \right)} + 3\hat{\varepsilon}_{i, \text{disp}}(w_i) + \hat{\varepsilon}_{i, \text{disp}}(L_{\beta_i}w_i), \\
	&\mbox{ with } \ \hat{\varepsilon}_{i, \text{disp}}(x) := \frac{N - v_i\left(x \right)}{N} L_i x + \frac{2 v_i\left(x \right)}{N}, \text{ and } d_i=\operatorname{Pdim}\left(\hat{\mathcal{F}}_{i, \text{M}}\right).
\end{align*}
\end{theorem}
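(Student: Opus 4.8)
The plan is to combine the sampling bound (Theorem~\ref{thm:pseudo-dim-guarantee-independent-priors-interim-utility}) with the discretization error control afforded by dispersion, carefully tracking which supremum is taken over $\Theta_i$ versus over the grid $\mathcal{G}_{w_i}$. First I would condition on the intersection of three high-probability events, each of probability $1-\delta$: (E1) the Pdim-concentration event of Theorem~\ref{thm:pseudo-dim-guarantee-independent-priors-interim-utility}, giving uniform closeness of empirical means to \textit{ex interim} utilities over all pairs $(\theta_i,\hat\theta_i)$; (E2) the dispersion event for the family of functions varying over the \emph{first} argument (item~1 of Assumption~\ref{ass:general-dispersion-holds-independent-prior}) at scale $w_i$; and (E3) the dispersion event for the family varying over the \emph{second} argument (item~2) — but note we will need it at \emph{two} scales, $w_i$ and $L_{\beta_i}w_i$, which is why $v_i$ is evaluated at both. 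By a union bound the intersection has probability at least $1-3\delta$. (Here I am reading the hypothesis ``Assumption~\ref{ass:general-dispersion-holds-independent-prior} holds for $w_i$, $v_i(w_i)$, and $v_i(L_{\beta_i}w_i)$'' as asserting the dispersion conclusions simultaneously at those grid widths within the single failure budget $\delta$.)

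Next, on this event, I would pass from $\sup_{\theta_i,\hat\theta_i\in\Theta_i}$ of the true \textit{ex interim} difference to $\sup_{\theta_i,\hat\theta_i\in\mathcal{G}_{w_i}}$ of the empirical difference in four controlled moves. Move (a): replace each \textit{ex interim} utility by its empirical mean over $\mathcal{D}^{\beta_{-i}}$, incurring error $\tfrac12\hat\varepsilon_{i,\text{Pdim}}$ on each of the two terms, i.e.\ the $4\sqrt{(2d_i/N)\log(eN/d_i)} + 2\sqrt{(2/N)\log(2n/\delta)}$ summand; this is exactly Theorem~\ref{thm:pseudo-dim-guarantee-independent-priors-interim-utility}. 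Move (b): for the maximized term $\tfrac1N\sum_j u_{i,\text{M}}(\theta_i,\hat\theta_i,\beta_{-i}(\theta_{-i}^{(j)}))$, round the optimal $\hat\theta_i$ to the nearest grid point $\hat\theta_i'\in\mathcal{G}_{w_i}$ with $\|\hat\theta_i-\hat\theta_i'\|_1\le w_i$; by the dispersion bound of item~2 at scale $w_i$ (the displayed inequality in the paper's dispersion discussion, with $L_i$ in place of the Lipschitz constant), the empirical-mean difference changes by at most $\hat\varepsilon_{i,\text{disp}}(w_i) = \tfrac{N-v_i(w_i)}{N}L_i w_i + \tfrac{2v_i(w_i)}{N}$. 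Move (c): simultaneously round the first argument $\theta_i$ to a grid point $\theta_i'\in\mathcal{G}_{w_i}$; applying item~1's dispersion at scale $w_i$ to the functions $u_{i,\text{M}}(\argdot,\hat\theta_i',\beta_{-i}(\theta_{-i}^{(j)}))$ costs another $\hat\varepsilon_{i,\text{disp}}(w_i)$.

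Move (d) is where the bi-Lipschitz hypothesis $\beta\in\tilde\Sigma$ enters, and I expect it to be the main obstacle. The subtracted term is the \emph{played} strategy's empirical utility $\tfrac1N\sum_j u_{i,\text{M}}(\theta_i,\beta_i(\theta_i),\beta_{-i}(\theta_{-i}^{(j)}))$, and when we moved $\theta_i\mapsto\theta_i'$ in move (c) we also changed the \emph{second} argument from $\beta_i(\theta_i)$ to $\beta_i(\theta_i')$, but $\beta_i(\theta_i')$ need not be a grid point and — crucially — $\|\beta_i(\theta_i)-\beta_i(\theta_i')\|_1$ is controlled not by $w_i$ but by $L_{\beta_i}\|\theta_i-\theta_i'\|_1\le L_{\beta_i}w_i$, since $\beta_i$ is $L_{\beta_i}$-Lipschitz. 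Hence we must invoke item~2's dispersion at the \emph{dilated} scale $L_{\beta_i}w_i$ for the functions $u_{i,\text{M}}(\theta_i',\argdot,\beta_{-i}(\theta_{-i}^{(j)}))$, paying $\hat\varepsilon_{i,\text{disp}}(L_{\beta_i}w_i)$; this is the source of the fourth summand. Collecting moves (a)--(d): two applications of $\hat\varepsilon_{i,\text{disp}}(w_i)$ from the first-argument rounding and the maximized-term grid rounding, one more $\hat\varepsilon_{i,\text{disp}}(w_i)$ may be needed to also round $\beta_i(\theta_i')$ back toward the grid for the maximized term's consistency (giving the coefficient $3$ on $\hat\varepsilon_{i,\text{disp}}(w_i)$), plus one $\hat\varepsilon_{i,\text{disp}}(L_{\beta_i}w_i)$ from the played-strategy term — exactly matching $\hat\varepsilon_i$. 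I would finish with a union bound over $i\in[n]$ already baked into the $\log(2n/\delta)$ factor from Theorem~\ref{thm:pseudo-dim-guarantee-independent-priors-interim-utility}, and relegate the bookkeeping of exactly which function family is $L_i$-Lipschitz on which piece — and verifying that the relevant $w_i$-balls are the ones the dispersion assumption quantifies over — to the appendix. The delicate point to get right is that the grid $\mathcal{G}_{w_i}$ guarantees $w_i$-closeness in the \emph{domain of rounding}, so every displacement must be re-expressed in those units before a dispersion bound is legitimately applicable; the bi-Lipschitz constant $L_{\beta_i}$ is precisely the conversion factor that makes the played-strategy displacement fall inside a ball the dispersion assumption can handle.
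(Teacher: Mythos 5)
Your plan follows the paper's proof almost exactly: condition on three $(1-\delta)$-probability events (the Pdim concentration and the dispersion guarantees at scales $w_i$ and $L_{\beta_i}w_i$), telescope through grid points $p, \hat p \in \mathcal{G}_{w_i}$ with $\|\theta_i - p\|_1 \le w_i$, $\|\hat\theta_i - \hat p\|_1 \le w_i$, and observe that Lipschitz continuity of $\beta_i$ implies $\|\beta_i(\theta_i) - \beta_i(p)\|_1 \le L_{\beta_i}w_i$, which is precisely what forces the dilated scale in the bid-argument dispersion for the subtracted term. That last observation is the genuine content of the step, and you isolate it correctly.

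One bookkeeping correction, though. Your attribution of the third $\hat\varepsilon_{i,\text{disp}}(w_i)$ to ``rounding $\beta_i(\theta_i')$ back toward the grid'' is not a step that occurs; $\beta_i(p)$ is never rounded to a grid point and does not need to be, since the target expression keeps $\beta_i(\theta_i)$ with $\theta_i$ a grid point and the bound involves $\beta_i(p)$ directly. The correct accounting of the four dispersion terms is: one $\hat\varepsilon_{i,\text{disp}}(w_i)$ from $\hat\theta_i \to \hat p$ in the \emph{bid} argument of the maximized term; \emph{two} $\hat\varepsilon_{i,\text{disp}}(w_i)$ from $\theta_i \to p$ in the \emph{valuation} argument, once for each of the two utility terms in the difference (you only charge move (c) once, which is the omission); and one $\hat\varepsilon_{i,\text{disp}}(L_{\beta_i}w_i)$ from $\beta_i(\theta_i) \to \beta_i(p)$ in the bid argument of the subtracted term. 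Concretely the paper's telescoping is
\begin{align*}
u(\theta_i,\hat\theta_i,\cdot) - u(\theta_i,\beta_i(\theta_i),\cdot)
&= \bigl(u(\theta_i,\hat\theta_i,\cdot) - u(\theta_i,\hat p,\cdot)\bigr)
+ \bigl(u(\theta_i,\hat p,\cdot) - u(p,\hat p,\cdot)\bigr) \\
&\quad + \bigl(u(p,\hat p,\cdot) - u(p,\beta_i(p),\cdot)\bigr)
+ \bigl(u(p,\beta_i(p),\cdot) - u(\theta_i,\beta_i(p),\cdot)\bigr) \\
&\quad + \bigl(u(\theta_i,\beta_i(p),\cdot) - u(\theta_i,\beta_i(\theta_i),\cdot)\bigr),
\end{align*}
with costs $\hat\varepsilon_{i,\text{disp}}(w_i)$, $\hat\varepsilon_{i,\text{disp}}(w_i)$, $0$, $\hat\varepsilon_{i,\text{disp}}(w_i)$, and $\hat\varepsilon_{i,\text{disp}}(L_{\beta_i}w_i)$ respectively. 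You also swap the labels of items 1 and 2 of Assumption~\ref{ass:general-dispersion-holds-independent-prior} (item 1 governs variation in the bid argument, item 2 in the valuation argument), and the paper's three $\delta$-events are organized as Pdim, full dispersion at scale $w_i$, and full dispersion at scale $L_{\beta_i}w_i$ rather than by argument; neither affects correctness, but tightening the decomposition as above would make the ``coefficient~3'' fall out without the hedge.
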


\begin{sproof}
Fix an agent $i \in [n]$. By the definition of dispersion, we have with probability at least $1 - \delta$,
for all $i \in [n]$, $\theta_i \in \Theta_i$, and reported valuations $\hat{\theta}_i, \hat{\theta}_i^\prime \in \Theta_i$ with $\norm{\hat{\theta}_i - \hat{\theta}_i^\prime}_1 \leq x$, we have
\begin{align} \label{thm:sproof-independent-prior-main-bidding-relation}
	\begin{split}
		\abs{ \frac{1}{N} \sum_{j=1}^N u_{i, \text{M}}(\theta_i, \hat{\theta}_i, \beta_{-i}(\theta_{-i}^{(j)})) - u_{i, \text{M}}(\theta_i, \hat{\theta}_i^\prime, \beta_{-i}(\theta_{-i}^{(j)})) }
		\leq \hat{\varepsilon}_{i, \text{disp}}(x);
	\end{split}
\end{align}
and for all $i \in [n]$, reported valuations $\hat{\theta}_i \in \Theta_i$, and $\theta_i, \theta_i^\prime \in \Theta_i$ with $\norm{\theta_i - \theta_i^\prime}_1 \leq x$, we have
\begin{align} \label{thm:sproof-independent-prior-main-valuation-relation}
	\begin{split}
		\abs{ \frac{1}{N} \sum_{j=1}^N u_{i, \text{M}}(\theta_i, \hat{\theta}_i, \beta_{-i}(\theta_{-i}^{(j)})) - u_{i, \text{M}}(\theta_i^\prime, \hat{\theta}_i, \beta_{-i}(\theta_{-i}^{(j)})) }
		\leq \hat{\varepsilon}_{i, \text{disp}}(x).
	\end{split}
\end{align}
For any $\theta_i \in \Theta_i$, there exists a grid point $p \in \mathcal{G}_{w_i}$ such that $\norm{\theta_i - p}_1 \leq w_i$ and $\norm{\beta_i(\theta_i) - \beta_i(p)}_1 \leq L_{\beta_i}w_i$. We apply Equations~\ref{thm:sproof-independent-prior-main-bidding-relation} and \ref{thm:sproof-independent-prior-main-valuation-relation} for the grid width $w_i$ and the stretched grid width $L_{\beta_i}w_i$. The statement follows with an application of Theorem~\ref{thm:pseudo-dim-guarantee-independent-priors-interim-utility} and a suitable union bound.
\end{sproof}

The above result is similar to Theorem 3.5 of \citet{balcanEstimatingApproximateIncentive2019}---which assumed truthful bidding---with the distinction that we need to ensure the dispersion of the utility functions under the opponents' bidding distribution. Additionally, it is necessary to consider the potential distortion of the grid $\mathcal{G}_{w_i}$ under the bidding strategies. 

\section{Verifying approximate equilibrium under interdependent priors} \label{sec:verifying-equilibria-with-interdependent-priors}
We present the first, to our knowledge, sampling-based results to verify approximate equilibrium with interdependent prior distributions. We limit our focus to \textit{ex ante} guarantees.

In this setting, from agent $i$'s perspective, for two distinct received observations $o_{i}$ and $o_{i}^\prime$, he must consider two different conditional prior distributions $F_{\giventhat{\theta, o_{-i}}{o_i}}^{\beta_{-i}}$ and $F_{\giventhat{\theta, o_{-i}}{o_i^\prime}}^{\beta_{-i}}$. For $j \in [N]$, a sample $\left(o^{(j)}, \theta^{(j)}, \beta(o^{(j)}) \right)$ from $\mathcal{D}^\beta$ can be interpreted as a draw $\left(o_{-i}^{(j)}, \theta_i^{(j)}, \beta_{-i}(o_{-i}^{(j)}) \right) \sim F_{\giventhat{\theta, o_{-i}}{o_i^{(j)}}}^{\beta_{-i}}$. However, the probability that there is another $l \neq j$ such that $o_i^{(l)} = o_i^{(j)}$ is zero. Therefore, we cannot implement the sampling step in the same manner as we did in Section~\ref{sec:independent-prior-statements}.

We address this challenge by considering a partition $\mathcal{B}_i=\left\{B_1, \dots, B_{N_{\mathcal{B}_i}}\right\}$ of $\mathcal{O}_i$ for each agent $i \in [n]$. Denote the maximum number of elements in anz partition by $N_{\mathcal{B}_{\text{max}}} := \max_{i \in [n]} N_{\mathcal{B}_{i}}$. We demonstrate that it is sufficient to assume a constant best-response for each $B_k \in \mathcal{B}_i$ if the conditional distribution $F_{\giventhat{\theta, o_{-i}}{o_i}}$ does not vary too strongly for $o_i \in B_k$, according to an appropriate distance measure over the space of probability distributions. With this premise, we establish that one can group the samples based on ${o_i \in B_k}$. Subsequently, we present our upper bound $\tilde{\varepsilon}$ for the \textit{ex ante} guarantee by conducting the sampling and discretization step for each $B_k \in \mathcal{B}_i$.
\subsection{Bounding best-response utility differences with constant best-responses} \label{sec:restricting-br-search-to-constants}
For a $B_k$ from partition $\mathcal{B}_i$, we want to bound the error incurred when limiting bidding to a constant best response for all $o_i \in B_k$. To achieve this, it is necessary to limit distance between conditional prior distributions $F_{\giventhat{\theta_i, o{-i}}{o_i}}$ and $F_{\giventhat{\theta_i, o{-i}}{o_i^\prime}}$ according to some distance for $o_i, o_i^\prime \in B_k$. In contrast to finite-dimensional Euclidean spaces, different distance functions can induce vastly different topologies on the space of probability distributions over continuous spaces~\citep{gibbsChoosingBoundingProbability2002}. Therefore, the selection of an appropriate distance measure for this purpose is crucial.

Common choices in the machine learning literature for measuring distances between probability distributions include the Wasserstein metric $d_{\text{W}}$ (also known as the earth mover's distance or Kantorovich metric), the total variation metric $d_{\text{TV}}$, and the Kullback-Leibler divergence $d_{\text{KL}}$ (also referred to as relative entropy).
Let $\mu$ and $\nu$ denote two probability measures over agent $i$'s observation space $\mathcal{O}_i$. We have the following relationship between these distance measures:
\begin{align*}
	d_{\text{W}}(\mu, \nu) \leq \text{diam}(\mathcal{O}_i) \cdot d_{\text{TV}}(\mu, \nu) \leq \text{diam}(\mathcal{O}_i) \cdot \sqrt{\nicefrac{1}{2} \cdot d_{\text{KL}}(\mu, \nu)},
\end{align*}
where $\text{diam}(\mathcal{O}_i)$ denotes $\mathcal{O}_i$'s diameter. The above inequalities can be strict, and there are no constants so that they may hold in the other direction in general~\citep{gibbsChoosingBoundingProbability2002}.

The objective is to furnish guarantees using the weakest possible distance measure. Unfortunately, the Wasserstein metric, seems too weak to provide sufficient guarantees for discontinuous utility functions~\citep{villaniOptimalTransport2009}.
The Kullback-Leibler divergence, despite its appealing properties, can be unbounded, which poses a limitation for establishing practical guarantees. On the other hand, the total variation distance has the advantage of being upper bounded by one, making it a more suitable choice for our purposes.
Therefore, we opt for the total variation distance as the measure to base our guarantees upon.

\begin{definition}[Total variation, \citet{gibbsChoosingBoundingProbability2002}] \label{def:total-variation-distance}
	Let $\mu$ and $\nu$ be two probability measures over $\mathbb{R}^m$ and $\Lambda$ be the Borel-$\sigma$-algebra. Then the total variation distance between $\mu$ and $\nu$ is given by
	$d_{\text{TV}}(\mu, \nu) := \sup_{A \in \Lambda} \abs{\mu(A) - \nu(A)}$.
\end{definition}

We leverage a well-known fact that the distance between two integrals over different probability measures can be bounded by the total variation of these measures~\citep{villaniOptimalTransport2009}. This principle enables us to bound differences in the \textit{ex interim} utility function for different observations. For the sake of completeness, we provide a proof for this statement.

\begin{theorem} \label{thm:bound-distance-of-integrals-by-total-variation}
	Let $A \subset \mathbb{R}^m$ and $g: A \rightarrow \mathbb{R}$ be a bounded function. Furthermore, let $\mu$ and $\nu$ be probability measures over $A$ with density functions $\phi_{\mu}$ and $\phi_{\nu}$. Then, we have
	\begin{align*}
		\abs{\int_A g(x) d \mu(x) - \int_A g(x) d \nu(x)} \leq 2 \norm{g}_\infty \cdot d_{\text{TV}}(\mu, \nu).
	\end{align*}
\end{theorem}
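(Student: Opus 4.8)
The plan is to reduce everything to a single comparison against Lebesgue measure $\lambda$ on $A$, using that both $\mu$ and $\nu$ admit densities $\phi_\mu,\phi_\nu$ with respect to $\lambda$. The one auxiliary fact I would establish first is the identity $\int_A \abs{\phi_\mu(x) - \phi_\nu(x)}\,d\lambda(x) = 2\, d_{\text{TV}}(\mu,\nu)$. To get it, consider the measurable set $P := \{x \in A : \phi_\mu(x) \ge \phi_\nu(x)\}$ and its complement $P^c$. For any Borel set $S$ one has $\mu(S) - \nu(S) = \int_S (\phi_\mu - \phi_\nu)\,d\lambda$; this is maximized over $S$ by $S = P$ and minimized by $S = P^c$, so from Definition~\ref{def:total-variation-distance}, $d_{\text{TV}}(\mu,\nu) = \sup_{S}\abs{\mu(S)-\nu(S)} = \int_P (\phi_\mu - \phi_\nu)\,d\lambda$. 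Since $\mu$ and $\nu$ are probability measures, $\int_A (\phi_\mu - \phi_\nu)\,d\lambda = 0$, hence $\int_{P^c}(\phi_\nu - \phi_\mu)\,d\lambda = \int_P(\phi_\mu - \phi_\nu)\,d\lambda = d_{\text{TV}}(\mu,\nu)$. Adding the contributions of $P$ and $P^c$ yields $\int_A \abs{\phi_\mu - \phi_\nu}\,d\lambda = 2\, d_{\text{TV}}(\mu,\nu)$.

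With this identity in hand, the main estimate is a one-line application of the triangle inequality for integrals together with $\abs{g} \le \norm{g}_\infty$. Rewriting both integrals against $\lambda$,
\[
\abs{\int_A g\,d\mu - \int_A g\,d\nu}
= \abs{\int_A g(x)\bigl(\phi_\mu(x) - \phi_\nu(x)\bigr)\,d\lambda(x)}
\le \norm{g}_\infty \int_A \abs{\phi_\mu(x) - \phi_\nu(x)}\,d\lambda(x)
= 2\norm{g}_\infty\, d_{\text{TV}}(\mu,\nu),
\]
which is exactly the claim. The integrals $\int_A g\,d\mu$ and $\int_A g\,d\nu$ are finite because $g$ is bounded and $\mu,\nu$ are probability measures, and measurability of $g$ is implicit in the statement.

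I do not expect a genuine obstacle here: the statement is a standard fact, and the only point requiring care is justifying the auxiliary identity $\norm{\phi_\mu - \phi_\nu}_1 = 2\, d_{\text{TV}}(\mu,\nu)$, i.e.\ verifying that the optimizing set in Definition~\ref{def:total-variation-distance} is precisely $P = \{\phi_\mu \ge \phi_\nu\}$ (a Hahn--Jordan / Scheff\'e-type argument). If one prefers to avoid densities entirely, an alternative route is to sandwich $g$ between simple functions taking finitely many values and apply $\abs{\mu(S) - \nu(S)} \le d_{\text{TV}}(\mu,\nu)$ level-set by level-set; but the density computation above is shorter given that densities are assumed to exist. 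Either way the factor $2$ is unavoidable, arising from splitting the mass difference over $P$ and $P^c$.
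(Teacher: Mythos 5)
Your proof is correct and follows essentially the same route as the paper's: rewrite both integrals against Lebesgue measure, bound $\abs{g}$ by $\norm{g}_\infty$, and invoke the identity $\norm{\phi_\mu - \phi_\nu}_1 = 2\, d_{\text{TV}}(\mu,\nu)$. The only difference is that the paper cites this identity from Tsybakov (Lemma 2.1) whereas you supply the short Scheff\'e-type argument for it yourself, which is a valid and self-contained substitute.
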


\begin{proof}
	The total variation distance is equal to one half of the $L^1$-distance between the density functions \citep[Lemma 2.1]{tsybakovIntroductionNonparametricEstimation2009}, that is, 
	$d_{\text{TV}}(\mu, \nu) =  \frac{1}{2} \norm{\phi_\mu - \phi_\nu}_1$.
	Therefore, we have
	\begin{align*}
		\abs{\int_A g(x) d \mu(x) - \int_A g(x) d \nu(x)} \leq \norm{g}_\infty \int_A \abs{\phi_\mu(x) - \phi_\nu(x) } d \lambda(x) = 2 \norm{g}_\infty \cdot d_{\text{TV}}(\mu, \nu).
	\end{align*}
\end{proof}

We show next that the error incurred by assuming a constant best-response for all observations from $B_k$ can be controlled, provided the distance between conditional prior distributions $F_{\giventhat{\theta_i, o_{-i}}{o_i}}$ and $F_{\giventhat{\theta_i, o_{-i}}{o_i^\prime}}$ is small enough in terms of the total variation distance for $o_i, o_i^\prime \in B_k$. The full proof is in Appendix~\ref{sec:appendix-interdependent-proof-tv-distance-over-segments}.
\begin{theorem} \label{thm:constant-best-response-difference-bounded-over-partition}
	Let $\mathcal{B}_i = \left\{B_1, \dots, B_{N_{\mathcal{B}_i}} \right\}$ be a partition of $\mathcal{O}_i$.
	The difference between a best-response utility over function space to best-responses that are constant for every $B_k$ satisfies
	\begin{align*}
		&\sup_{\beta_i^\prime \in \Sigma_{i}} \tilde{u}_{i, \text{M}}(\beta_i^\prime, \beta_{-i}) - \sup_{b \in \mathcal{A}_i^{N_{\mathcal{B}_i}}} \tilde{u}_{i, \text{M}} \left(\sum_{k=1}^{N_{\mathcal{B}_i}} b_k \mathds{1}_{B_k}, \beta_{-i} \right) \leq 2 \sum_{k=1}^{N_{\mathcal{B}_i}} P(o_i \in B_k) \tau_{i, B_k},
	\end{align*}
	with $\tau_{i, B_k} := \sup_{\hat{o}_i, \hat{o}_i^\prime \in B_k} d_{\text{TV}}\left(F_{\giventhat{\theta_i, o_{-i}}{\hat{o}_i}}, F_{\giventhat{\theta_i, o_{-i}}{\hat{o}_i^\prime}} \right)$.
	If there exists a constant $L_{B_k} > 0$ such that $d_{\text{TV}}\left(F_{\giventhat{\theta_i, o_{-i}}{\hat{o}_i}}, F_{\giventhat{\theta_i, o_{-i}}{\hat{o}_i^\prime}} \right) \leq L_{B_k} \norm{o_i - o_i^\prime}$ for $o_i, o_i^\prime \in B_k$, then $\tau_{i, B_k} \leq L_{B_k} \text{diam}(B_k)$, where $\text{diam}(B_k)$ denotes $B_k$'s diameter.
\end{theorem}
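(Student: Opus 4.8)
The plan is to compare an arbitrary strategy $\beta_i' \in \Sigma_i$ with a piecewise-constant strategy that, on each block $B_k$, bids a single well-chosen constant, and to bound the \textit{ex ante} utility loss block by block using Theorem \ref{thm:bound-distance-of-integrals-by-total-variation}. First I would fix $\beta_i' \in \Sigma_i$ and write $\tilde u_{i,\text{M}}(\beta_i',\beta_{-i}) = \sum_{k=1}^{N_{\mathcal B_i}} \int_{B_k} \hat u_{i,\text{M}}(o_i, \beta_i'(o_i), \beta_{-i})\, dF_{o_i}(o_i)$, i.e. decompose the outer expectation over $o_i$ according to the partition. The key idea is that on a block $B_k$ where the conditional law $F_{\theta_i,o_{-i}\mid o_i}$ barely moves as $o_i$ ranges over $B_k$, the \textit{ex interim} utility $\hat u_{i,\text{M}}(o_i, b, \beta_{-i})$ of playing a \emph{fixed} bid $b$ is nearly independent of which $o_i \in B_k$ we condition on: precisely, since $\hat u_{i,\text{M}}(o_i,b,\beta_{-i}) = \int u_{i,\text{M}}(\theta_i, b, \beta_{-i}(o_{-i}))\, dF_{\theta_i,o_{-i}\mid o_i}$ with integrand bounded in $[-1,1]$, Theorem \ref{thm:bound-distance-of-integrals-by-total-variation} gives $|\hat u_{i,\text{M}}(o_i,b,\beta_{-i}) - \hat u_{i,\text{M}}(o_i',b,\beta_{-i})| \le 2\, d_{\text{TV}}(F_{\theta_i,o_{-i}\mid o_i}, F_{\theta_i,o_{-i}\mid o_i'}) \le 2\tau_{i,B_k}$ for all $o_i, o_i' \in B_k$ and all $b \in \mathcal A_i$.

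Next I would use this to select the block-constant bids. For each $k$, pick any reference point $\bar o_i^{(k)} \in B_k$ and set $b_k^\star \in \argmax_{b \in \mathcal A_i} \hat u_{i,\text{M}}(\bar o_i^{(k)}, b, \beta_{-i})$ (a supremum/near-optimizer suffices — one can take an $\eta$-optimizer and let $\eta \to 0$ at the end to avoid existence issues, or just work with suprema throughout). Then for every $o_i \in B_k$,
\begin{align*}
	\hat u_{i,\text{M}}(o_i, \beta_i'(o_i), \beta_{-i}) &\le \hat u_{i,\text{M}}(\bar o_i^{(k)}, \beta_i'(o_i), \beta_{-i}) + 2\tau_{i,B_k} \le \hat u_{i,\text{M}}(\bar o_i^{(k)}, b_k^\star, \beta_{-i}) + 2\tau_{i,B_k} \\
	&\le \hat u_{i,\text{M}}(o_i, b_k^\star, \beta_{-i}) + 4\tau_{i,B_k},
\end{align*}
where the first and third inequalities are the total-variation bound applied to the bids $\beta_i'(o_i)$ and $b_k^\star$ respectively, and the middle one is optimality of $b_k^\star$ at $\bar o_i^{(k)}$. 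Integrating over $o_i \in B_k$ against $F_{o_i}$ and summing over $k$ yields $\tilde u_{i,\text{M}}(\beta_i', \beta_{-i}) \le \tilde u_{i,\text{M}}(\sum_k b_k^\star \mathds 1_{B_k}, \beta_{-i}) + 4\sum_k P(o_i \in B_k)\tau_{i,B_k}$; taking the supremum over $\beta_i'$ on the left and noting $\sum_k b_k^\star \mathds 1_{B_k} \in \mathcal A_i^{N_{\mathcal B_i}}$ on the right gives the claim, but with a constant $4$ rather than $2$. To recover the sharper factor $2$ I would instead bound $\hat u_{i,\text{M}}(o_i, \beta_i'(o_i),\beta_{-i}) \le \sup_{b}\hat u_{i,\text{M}}(o_i, b, \beta_{-i})$ pointwise, then compare $\sup_b \hat u_{i,\text{M}}(o_i,b,\beta_{-i})$ with $\sup_b \hat u_{i,\text{M}}(o_i',b,\beta_{-i})$ — the supremum over a common bid set of two functions that are uniformly within $2\tau_{i,B_k}$ of each other differ by at most $2\tau_{i,B_k}$ — and choose $b_k^\star$ to (near-)optimize $o_i \mapsto \sup$-value at a reference point; this saves one of the two applications of the TV bound. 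The final Lipschitz corollary is immediate: if $d_{\text{TV}}(F_{\theta_i,o_{-i}\mid o_i}, F_{\theta_i,o_{-i}\mid o_i'}) \le L_{B_k}\|o_i - o_i'\|$ then taking the supremum over $o_i, o_i' \in B_k$ gives $\tau_{i,B_k} \le L_{B_k}\,\text{diam}(B_k)$.

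The main obstacle is bookkeeping around suprema rather than maxima: $\Sigma_i$ and $\mathcal A_i$ need not be such that best responses are attained, so the argmax defining $b_k^\star$ and the outer $\sup_{\beta_i'}$ must be handled via $\varepsilon$-optimizers with an error that is sent to zero, and one must make sure the piecewise-constant strategy $\sum_k b_k \mathds 1_{B_k}$ is measurable (it is, being a finite simple function) and lies in $\mathcal A_i^{N_{\mathcal B_i}}$ as the statement requires. A secondary point to be careful about is that Theorem \ref{thm:bound-distance-of-integrals-by-total-variation} is stated for measures with densities, so I would either invoke the version where the conditionals $F_{\theta_i,o_{-i}\mid o_i}$ are assumed atomless with densities (consistent with $F$ atomless) or note that the bound $|\int g\,d\mu - \int g\,d\nu| \le 2\|g\|_\infty d_{\text{TV}}(\mu,\nu)$ holds for arbitrary probability measures. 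Everything else — the partition decomposition of the expectation, the triangle-inequality chain, and the final integration — is routine.
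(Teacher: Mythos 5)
Your overall strategy is the same as the paper's: decompose the \textit{ex ante} utility over the partition via the law of total expectation, use Theorem~\ref{thm:bound-distance-of-integrals-by-total-variation} to show that the \textit{ex interim} utility of a fixed bid varies by at most $2\tau_{i,B_k}$ across observations in $B_k$, and compare against a per-block constant (near-)best response. Your factor-$4$ version is correct, and your handling of the Lipschitz corollary and the $\varepsilon$-optimizer/measurability issues is fine. The paper proves exactly this block-wise statement as a standalone lemma (Lemma~\ref{thm:constant-best-response-difference-bounded-on-single-set}) and then sums over the partition.

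However, your proposed refinement to recover the stated constant $2$ has a gap. You compare $\sup_b \hat u_{i,\text{M}}(o_i,b,\beta_{-i})$ with $\sup_b \hat u_{i,\text{M}}(\bar o_i^{(k)},b,\beta_{-i})$ at a reference observation and pick $b_k^\star$ to near-optimize the \emph{interim} utility at $\bar o_i^{(k)}$. But the quantity you must lower-bound is $\sup_{b}\mathbb{E}_{\tilde o_i,o_{-i},\theta_i\mid\{\tilde o_i\in B_k\}}[u_{i,\text{M}}(\theta_i,b,\beta_{-i}(o_{-i}))]$, i.e.\ the constant bid is evaluated under the $B_k$-conditional distribution, not at $\bar o_i^{(k)}$. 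Passing from $\hat u_{i,\text{M}}(\bar o_i^{(k)},b_k^\star,\beta_{-i})$ to that conditional expectation costs a second application of the total-variation bound, so your chain still yields $4\tau_{i,B_k}$. The fix is to skip the reference point entirely: by the law of total expectation,
\begin{align*}
\hat u_{i,\text{M}}(o_i,b,\beta_{-i}) - \mathbb{E}_{\tilde o_i,o_{-i},\theta_i\mid\{\tilde o_i\in B_k\}}\left[u_{i,\text{M}}(\theta_i,b,\beta_{-i}(o_{-i}))\right]
= \mathbb{E}_{\tilde o_i\mid\{\tilde o_i\in B_k\}}\left[\hat u_{i,\text{M}}(o_i,b,\beta_{-i}) - \hat u_{i,\text{M}}(\tilde o_i,b,\beta_{-i})\right]
\le 2\tau_{i,B_k}
\end{align*}
for every $o_i\in B_k$ and every $b$, using Theorem~\ref{thm:bound-distance-of-integrals-by-total-variation} once inside the average. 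Taking the supremum over $b$ on both sides gives $\sup_b\hat u_{i,\text{M}}(o_i,b,\beta_{-i})\le \sup_b\mathbb{E}_{\cdot\mid\{\tilde o_i\in B_k\}}[\,\cdot\,]+2\tau_{i,B_k}$, and integrating over $o_i\in B_k$ and summing over $k$ yields the constant $2$. This single-application comparison against the conditional average is precisely how the paper's proof proceeds.
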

\begin{sproof}
	Fix an agent $i \in [n]$ and $B_k \in \mathcal{B}_i$. We leverage Teorem~\ref{thm:bound-distance-of-integrals-by-total-variation} to establish a bound of the interim utilities for any $o_i, o_i^\prime \in B_k$
	\begin{align*}
		\abs{ \sup_{b_i \in \mathcal{A}_i}\hat{u}_{i, \text{M}}(o_i, b_i, \beta_{-i}) - \sup_{b_i^\prime \in \mathcal{A}_i} \hat{u}_{i, \text{M}}(o_i, b_i^\prime, \beta_{-i})} \leq 2 \norm{u_{i, \text{M}}}_\infty d_{\text{TV}}\left(F_{\giventhat{\theta_i, o_{-i}}{\hat{o}_i}}, F_{\giventhat{\theta_i, o_{-i}}{\hat{o}_i^\prime}} \right).
	\end{align*}
	We extend this relation to constant best-responses for all $o_i^\prime \in B_k$ for one of these terms, establishing the bound for a single $B_k \in \mathcal{B}_i$. We apply the low of total expectation to formulate this relation for step functions of the form $\sum_{k=1}^{N_{\mathcal{B}_i}}b_k \mathds{1}_{B_k}$.
\end{sproof}

For each $B_k$, a meaningful upper bound can be established if $\tau_{i, B_k}$ is sufficiently small. A weaker, but potentially easier to determine, bound can be given if there exists an $L_{B_k} > 0$ such that $d_{\text{TV}}\left(g_{B_k}(o_i), g_{B_k}(o_i^\prime) \right) \leq L_{B_k} \norm{o_i - o_i^\prime}$ for $o_i, o_i^\prime \in B_k$. This term is directly related to the diameter of $B_k$, and thus, to the number of elements in the partition $\mathcal{B}_i$.
However, importantly, even if for some $B_l \in \mathcal{B}_i$ the value $\tau_{i, B_l}$ does not have a bound below one, a non-trivial \textit{ex ante} upper bound may still be achievable if it does hold for sufficiently many $B_k \in \mathcal{B}_i$. This makes our results applicable to a wide variety of settings. 
However, while there are several closed-form solutions available for calculating the total variation distance between continuous probability distributions, determining this distance remains hard in general, marking a limitation of our approach. Nevertheless, the growing interest in the total variation distance for applications within machine learning has spurred recent research efforts. For instance, \citet{nielsenGuaranteedDeterministicBounds2018} proposes methods for upper bounding the total variation distance, offering potential pathways to overcome this challenge.

\subsection{A sampling-based bound on the \textit{ex ante} utility loss via finite precision step functions}
In this section, we derive sampling-based estimation bounds $\tilde{\varepsilon}$ for the \textit{ex ante} utility loss. Theorem~\ref{thm:constant-best-response-difference-bounded-over-partition} established that finding a constant best-response for all observations from each element $B_k \in \mathcal{B}_i$ is sufficient. Therefore, we execute the sampling and discretization step for each $B_k \in \mathcal{B}_i$.

Starting with the sampling step, we categorize the dataset $\mathcal{D}^\beta$ according to the partition $\mathcal{B}_i$, for each agent $i$. For each $1 \leq k \leq N_{\mathcal{B}_i}$, we define the conditional samples by
\begin{align*}
	\mathcal{D}^{\beta}\left(B_k\right) := \left\{\giventhat{\left(o^{(j)}, \beta(o^{(j)}), \theta^{(j)} \right) \in \mathcal{D}^\beta}{o^{(j)} \in B_k} \right\}.
\end{align*}
Then, $\mathcal{D}^{\beta}\left(B_k\right)$ constitutes a dataset of draws from $F^\beta|\left\{o_i \in B_k \right\}$. Denote the complete separation of $\mathcal{D}^\beta$ according to partition $\mathcal{B}_i$ by $\mathcal{D}^\beta\left({\mathcal{B}_i}\right) := \left\{\giventhat{\mathcal{D}^{\beta}\left(B_k\right) }{1 \leq k \leq N_{\mathcal{B}_i}}\right\}$.

One advantage of providing \textit{ex ante} guarantees, as opposed to \textit{ex interim} guarantees, is the ability to separate the estimation of the best-response utility $\sup_{\beta^{\prime}_i \in \Sigma_i} \tilde{u}_{i, \text{M}}(\beta_i^\prime, \beta_{-i})$ from the estimation of the \textit{ex ante} utility $\tilde{u}_{i, \text{M}}(\beta_i, \beta_{-i})$. Therefore, conveniently, we can estimate the \textit{ex ante} utility using the distribution-independent Hoeffding inequality, eliminating the need to rely on complex concepts such as the pseudo-dimension or partitioning the dataset $\mathcal{D}^{\beta}$. The full proof is in Appendix~\ref{sec:appendix-interdependent-proof-complete-bound}.
\begin{theorem} \label{thm:interdependent-prior-case-estimate-ex-ante-utility-hoeffding}
	Let $\beta \in \Sigma$ be a strategy profile. With probability $1 - \delta$ over the draw of the dataset $\mathcal{D}^{\beta}$, we have for every agent $i \in [n]$
	\begin{align*}
		\abs{\tilde{u}_{i, \text{M}}(\beta_i, \beta_{-i}) - \frac{1}{N} \sum_{j=1}^N u_{i, \text{M}}\left(\theta_{i}^{(j)}, \beta_{i}(o_i^{(j)}), \beta_{-i}(o_{-i}^{(j)})\right) } \leq \sqrt{\frac{2}{N}\log\left(\frac{2 n}{\delta}\right)}.
	\end{align*}
\end{theorem}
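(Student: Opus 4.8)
The plan is to treat this as a direct, per-agent application of Hoeffding's inequality together with a union bound over the $n$ agents. First I would fix an agent $i \in [n]$ and rewrite the \textit{ex ante} utility as the expectation of the \textit{ex post} utility under the joint prior. Unfolding Definitions~\eqref{equ:interim-utility} and~\eqref{equ:ex-ante-utility} and applying the tower property of conditional expectation gives
\begin{align*}
	\tilde{u}_{i, \text{M}}(\beta_i, \beta_{-i})
	= \mathbb{E}_{o_i}\!\left[\mathbb{E}_{\theta_i, o_{-i}\mid o_i}\!\left[u_{i, \text{M}}\big(\theta_i, \beta_i(o_i), \beta_{-i}(o_{-i})\big)\right]\right]
	= \mathbb{E}_{(o, \theta)\sim F}\!\left[u_{i, \text{M}}\big(\theta_i, \beta_i(o_i), \beta_{-i}(o_{-i})\big)\right].
\end{align*}

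Next I would define $X_j := u_{i, \text{M}}\big(\theta_i^{(j)}, \beta_i(o_i^{(j)}), \beta_{-i}(o_{-i}^{(j)})\big)$ for $j \in [N]$. Since the tuples $(o^{(j)}, \theta^{(j)})$ forming $\mathcal{D}$ (and hence $\mathcal{D}^\beta$) are i.i.d.\ draws from $F$, the $X_j$ are i.i.d.\ with common mean $\tilde{u}_{i, \text{M}}(\beta_i, \beta_{-i})$, and each $X_j$ lies in $[-1,1]$ by the assumed boundedness of the utility functions. Applying Hoeffding's inequality to this bounded i.i.d.\ sequence yields, for any $t>0$,
\begin{align*}
	P\!\left(\abs{\frac{1}{N}\sum_{j=1}^{N} X_j - \tilde{u}_{i, \text{M}}(\beta_i, \beta_{-i})} \geq t\right) \leq 2\exp\!\left(-\frac{N t^2}{2}\right),
\end{align*}
where the factor $2$ in the exponent's denominator comes from the range $[-1,1]$ having width $2$. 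Setting the right-hand side equal to $\delta/n$ and solving gives $t = \sqrt{\tfrac{2}{N}\log(2n/\delta)}$, which is exactly the claimed bound for a single agent; a union bound over the $n$ agents then upgrades this to hold simultaneously for all $i \in [n]$ with probability at least $1-\delta$.

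There is no real obstacle here, and this is precisely the payoff of working with \textit{ex ante} rather than \textit{ex interim} guarantees: the target quantity $\tilde{u}_{i, \text{M}}(\beta_i, \beta_{-i})$ is a single fixed real number rather than a supremum over a function class, so the empirical mean is an unbiased estimator of exactly one scalar and no uniform-convergence or pseudo-dimension machinery is needed—the distribution-free Hoeffding bound suffices. The only points that warrant a line of care in the full write-up are (i) the identification of $\tilde{u}_{i, \text{M}}$ with the joint-draw expectation via the law of total expectation, and (ii) the observation that the $X_j$ inherit independence directly from the i.i.d.\ structure of $\mathcal{D}$, so that Hoeffding applies verbatim.
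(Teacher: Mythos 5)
Your proof is correct and takes essentially the same route as the paper's: fix an agent, observe that the $N$ \textit{ex post} utilities form an i.i.d.\ bounded sample from $F^\beta$ with mean $\tilde{u}_{i,\text{M}}(\beta_i,\beta_{-i})$, apply Hoeffding with failure probability $\delta/n$, and finish with a union bound over agents. The only cosmetic difference is that you make explicit the tower-property reduction $\tilde{u}_{i,\text{M}}(\beta_i,\beta_{-i}) = \mathbb{E}_{(o,\theta)\sim F}[u_{i,\text{M}}(\theta_i,\beta_i(o_i),\beta_{-i}(o_{-i}))]$ and unwind the Hoeffding constant, both of which the paper leaves implicit.
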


\begin{sproof}
	We fix an agent $i \in [n]$ and apply Theorem~\ref{thm:hoeffding-inequality} to $u_{i, \text{M}}(\theta_i, \beta_i(o_i), \beta_{-i}(o_{-i}))$ with $\left(\theta_i, \beta_i(o_i), \beta_{-i}(o_{-i}) \right) \sim F^\beta$. A union bound over the $n$ agents finishes the statement.
\end{sproof}

It remains to estimate the best-response utility. For this purpose, we continue with the sampling step of our approach.
For mechanism $\text{M}$ and agent $i$, define the class of functions that map valuations and opponent bids to utility by $\tilde{\mathcal{F}}_{i, \text{M}}:= \left\{\giventhat{u_{i, \text{M}}(\argdot, b_i, \argdot): \Theta_i \times \mathcal{A}_{-i} \rightarrow \mathbb{R}}{ b_i \in \mathcal{A}_i}  \right\}$. The proof of the following theorem is in Appendix~\ref{sec:appendix-interdependent-proof-complete-bound}.
\begin{theorem} \label{thm:finite-sample-approximation-of-constant-best-response-utility-pseudo-dimension}
	With probability $1 - \delta$ over the draw of the $n$ sets $\mathcal{D}^{\beta}(\mathcal{B}_1), \dots, \mathcal{D}^{\beta}(\mathcal{B}_n)$, for partitions $\mathcal{B}_i = \left\{B_1, \dots, B_{N_{\mathcal{B}_i}} \right\}$ of $\mathcal{O}_i$ for every agent $i \in [n]$, we have
	\begin{align*}
		&\abs{\sup_{b_i \in \mathcal{A}_i} \mathbb{E}_{\giventhat{o_i, o_{-i}, \theta_i}{\left\{o_i \in B_k \right\}}} \left[u_{i, \text{M}}\left(\theta_i, b_i, \beta_{-i}(o_{-i}) \right) \right] - \sup_{b_i \in \mathcal{A}_i} \frac{1}{N_{B_k}} \sum_{j=1}^{N_{B_k}} u_{i, \text{M}}\left(\theta_{i}^{(j)}, b_i, \beta_{-i}(o_{-i}^{(j)})\right) } \\
		& \leq  \tilde{\varepsilon}_{i, \text{Pdim}}(N_{B_k}),\\
		&\mbox{ with } \tilde{\varepsilon}_{i, \text{Pdim}}\left(N_{B_k}\right) := 2\sqrt{\frac{2 d_i}{N_{B_k}} \log\left(\frac{e N_{B_k}}{d_i} \right) } + \sqrt{\frac{2}{N_{B_k}} \log \left(\frac{n N_{\mathcal{B}_{\text{max}}}}{\delta} \right)}, \mbox{ and } d_i := \text{Pdim} \left(\tilde{\mathcal{F}}_{i, \text{M}} \right).
	\end{align*}
\end{theorem}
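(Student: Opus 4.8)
The plan is to reduce Theorem~\ref{thm:finite-sample-approximation-of-constant-best-response-utility-pseudo-dimension} to the abstract generalization bound in Theorem~\ref{thm:pollard-pac-bound-general-distribution}, applied separately on each block $B_k$ and for each agent, then glued together by a union bound. First I would fix an agent $i \in [n]$ and a block $B_k \in \mathcal{B}_i$. As noted in the text, the conditional samples $\mathcal{D}^\beta(B_k)$ are i.i.d.\ draws from $F^\beta \giventhat{*}{\{o_i \in B_k\}}$; in particular the pairs $\bigl(\theta_i^{(j)}, \beta_{-i}(o_{-i}^{(j)})\bigr)$ for the $N_{B_k}$ samples falling into $B_k$ are i.i.d.\ from the pushforward of this conditional distribution onto $\Theta_i \times \mathcal{A}_{-i}$. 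I would then instantiate Theorem~\ref{thm:pollard-pac-bound-general-distribution} with the function class $\tilde{\mathcal{F}}_{i,\text{M}}$, whose members are exactly the maps $(\theta_i, b_{-i}) \mapsto u_{i,\text{M}}(\theta_i, b_i, b_{-i})$ indexed by $b_i \in \mathcal{A}_i$, and $\Phi$ the conditional distribution above, with sample size $N_{B_k}$ and confidence parameter chosen so the per-block failure probability is $\delta / (n N_{\mathcal{B}_{\text{max}}})$. This yields, with that probability, a uniform-over-$b_i$ bound of
\begin{align*}
\abs[\Big]{\frac{1}{N_{B_k}}\sum_{j=1}^{N_{B_k}} u_{i,\text{M}}\bigl(\theta_i^{(j)}, b_i, \beta_{-i}(o_{-i}^{(j)})\bigr) - \mathbb{E}_{\giventhat{o_i, o_{-i}, \theta_i}{\{o_i \in B_k\}}}\bigl[u_{i,\text{M}}(\theta_i, b_i, \beta_{-i}(o_{-i}))\bigr]}
\end{align*}
by $2\sqrt{\frac{2d_i}{N_{B_k}}\log\frac{eN_{B_k}}{d_i}} + \sqrt{\frac{2}{N_{B_k}}\log\frac{nN_{\mathcal{B}_{\text{max}}}}{\delta}} = \tilde{\varepsilon}_{i,\text{Pdim}}(N_{B_k})$.

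Second, I would pass from the uniform bound on the difference of the two quantities to a bound on the difference of their suprema. This is the standard elementary fact that if $\abs{f(b) - g(b)} \le \eta$ for all $b$, then $\abs{\sup_b f(b) - \sup_b g(b)} \le \eta$: for any $b$, $f(b) \le g(b) + \eta \le \sup_b g(b) + \eta$, so $\sup_b f(b) \le \sup_b g(b) + \eta$, and symmetrically. Applying this with $f(b_i)$ the empirical average over $\mathcal{D}^\beta(B_k)$ and $g(b_i)$ the conditional expectation gives exactly the claimed inequality on the block $B_k$, on the event of probability at least $1 - \delta/(nN_{\mathcal{B}_{\text{max}}})$.

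Third comes the bookkeeping: I need the stated bound to hold simultaneously for all agents $i \in [n]$ and all blocks $B_k \in \mathcal{B}_i$. There are at most $n \cdot N_{\mathcal{B}_{\text{max}}}$ such (agent, block) pairs, so a union bound over them, each with failure probability $\delta/(nN_{\mathcal{B}_{\text{max}}})$, gives total failure probability at most $\delta$. One should double-check that the randomness is handled cleanly: conditioned on the counts $N_{B_k}$ — equivalently on which samples land in which block — the samples within a block are still i.i.d.\ from the corresponding conditional, so Theorem~\ref{thm:pollard-pac-bound-general-distribution} applies conditionally and then integrates out; alternatively one can state everything as ``with probability $1-\delta$ over $\mathcal{D}^\beta$'' directly since the event in question is measurable with respect to the full dataset. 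The only mild subtlety — and the place I would be most careful — is this conditioning-on-the-partition-sizes argument, together with ensuring $d_i = \text{Pdim}(\tilde{\mathcal{F}}_{i,\text{M}})$ is the right (finite) complexity measure here, i.e.\ that restricting the data to a block does not change which function class governs the generalization gap (it does not, since $\tilde{\mathcal{F}}_{i,\text{M}}$ depends only on $i$ and $\text{M}$, not on the block or the conditioning event). Everything else is a direct substitution into the cited theorem.
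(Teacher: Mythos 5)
Your proposal matches the paper's own proof almost step for step: fix $(i, B_k)$, note that $\mathcal{D}^\beta(B_k)$ is an i.i.d.\ sample from the conditional distribution $F^\beta \mid \{o_i \in B_k\}$, invoke Theorem~\ref{thm:pollard-pac-bound-general-distribution} with function class $\tilde{\mathcal{F}}_{i,\text{M}}$ and per-block confidence $\delta/(n N_{\mathcal{B}_{\text{max}}})$ to get a uniform-over-$b_i$ bound, pass to suprema, and union-bound over agents and blocks. Your added caution about conditioning on the block counts $N_{B_k}$, and your explicit statement of the $\sup$-to-$\sup$ lemma, are sound refinements that the paper glosses over, but they do not change the argument.
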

We proceed with the \emph{discretization step} to identify a constant best-response for each $B_k \in \mathcal{B}_i$ over the bidding space $\mathcal{A}_i$. To this end, we assume $\mathcal{A}_i = [0, 1]^m$ for a suitable $m \in \mathbb{N}$. For a $w > 0$, denote with $\mathcal{G}_{w} \subset [0, 1]^m$ a finite $w$-grid. We make the following assumption.
\begin{assumption} \label{ass:bidding-dispersion-holds-interdependent-prior}
	Suppose that for mechanism $\text{M}$, each agent $i \in[n]$, and segment $B_k \in \mathcal{B}_i$, there exist $L_{i}, w_{i} \in \mathbb{R}$ and a function $v_{i, B_k}: \mathbb{R} \rightarrow \mathbb{R}$, such that with probability $1-\delta$ over the draw of the sets $\left\{\giventhat{\mathcal{D}^{\beta}(B_k)}{B_k \in \mathcal{B}_i, i \in [n]} \right\}$, the functions $u_{i, \text{M}}\left(\theta_i^{(1)}, \argdot, \beta_{-i}(\theta_{-i}^{(1)})\right), \dots, u_{i, \text{M}}\left(\theta_i^{(N_{B_k})}, \argdot, \beta_{-i}(\theta_{-i}^{(N_{B_k})})\right)$ are piecewise $L_i$-Lipschitz and $\left( w_{i}, v_{i, B_k}\left(w_{i} \right)\right)$-dispersed.
\end{assumption}

Under this assumption, we can provide the following approximation bounds by approximating a best-response over a finite subset of the action space. The proof of the following lemma is conceptually similar to the one of Theorem~\ref{thm:independent-prior-guarantee-by-dispersion-over-finite-grid} and can be found in Appendix~\ref{sec:appendix-interdependent-proof-complete-bound}.
\begin{lemma} \label{thm:dispersion-bound-single-segment-approximation-uniform-grid}
	Let $\delta > 0$, $\beta \in \tilde{\Sigma}$ be a strategy profile, and $\text{M}$ be a mechanism. Suppose that for each agent $i \in [n]$ and segment $B_k \in \mathcal{B}_i$, Assumption \ref{ass:bidding-dispersion-holds-interdependent-prior} holds for $w_i >0$ and $v_i(w_i)$. 
	Then, with probability $1- \delta$ over the draw of the sets $\left\{\giventhat{\mathcal{D}^{\beta}(\mathcal{B}_i)}{i \in [n]} \right\}$, agents $i \in [n]$, and segments $B_k \in \mathcal{B}_i$,
	\begin{align*}
		&\abs{\sup_{b_i \in \mathcal{A}_i} \frac{1}{N_{B_k}} \sum_{j=1}^{N_{B_k}} u_{i, \text{M}}\left(\theta_{i}^{(j)}, b_i, \beta_{-i}(o_{-i}^{(j)})\right) -  \max_{b_i \in \mathcal{G}_w} \frac{1}{N_{B_k}} \sum_{j=1}^{N_{B_k}} u_{i, \text{M}}\left(\theta_{i}^{(j)}, b_i, \beta_{-i}(o_{-i}^{(j)})\right)} \\
		&\leq \frac{N_{B_k} - v_{i, B_k}\left(w_{i} \right)}{N_{B_k}} L_{i} w_{i} + \frac{2v_{i, B_k}\left(w_{i} \right)}{N_{B_k}} =: \tilde{\varepsilon}_{i, \text{disp}}(N_{B_k}).
	\end{align*}
\end{lemma}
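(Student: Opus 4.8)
The plan is to mirror the proof strategy of Theorem~\ref{thm:independent-prior-guarantee-by-dispersion-over-finite-grid}, restricted to a single segment $B_k$ rather than ranging over all valuations. Fix an agent $i \in [n]$ and a segment $B_k \in \mathcal{B}_i$. The quantity we want to bound is the gap between taking the supremum of the empirical average over the full action space $\mathcal{A}_i = [0,1]^m$ and taking the maximum over the finite $w$-grid $\mathcal{G}_w$. Since $\mathcal{G}_w \subset \mathcal{A}_i$, the grid maximum is never larger than the full supremum, so only one direction of the inequality needs work: for any $b_i \in \mathcal{A}_i$ achieving (up to arbitrarily small slack) the full supremum, we must exhibit a nearby grid point $b_i' \in \mathcal{G}_w$ with $\norm{b_i - b_i'}_1 \le w$ whose empirical average is not much smaller.

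First I would invoke the defining property of the $w$-grid to pick, for the near-optimal $b_i$, a grid point $b_i' \in \mathcal{G}_w$ with $\norm{b_i - b_i'}_1 \le w$. Next, I would apply Assumption~\ref{ass:bidding-dispersion-holds-interdependent-prior}: with probability $1-\delta$ over the draw of $\{\mathcal{D}^\beta(\mathcal{B}_i)\}_{i \in [n]}$ (after a union bound over agents and segments, which I would fold into the $\delta$ bookkeeping exactly as in the analogous independent-prior argument), the functions $u_{i,\text{M}}(\theta_i^{(j)}, \argdot, \beta_{-i}(o_{-i}^{(j)}))$ for $j = 1, \dots, N_{B_k}$ are piecewise $L_i$-Lipschitz and $(w_i, v_{i,B_k}(w_i))$-dispersed. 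By the definition of dispersion, the ball of radius $w_i$ around $b_i$ is split by at most $v_{i,B_k}(w_i)$ of the underlying partitions; hence at most $v_{i,B_k}(w_i)$ of the summands can jump by the full range (bounded by $2\norm{u_{i,\text{M}}}_\infty \le 2$, since utilities map to $[-1,1]$), while the remaining at least $N_{B_k} - v_{i,B_k}(w_i)$ summands change by at most $L_i w_i$ along the segment from $b_i$ to $b_i'$. Averaging gives
\begin{align*}
\abs{\frac{1}{N_{B_k}}\sum_{j=1}^{N_{B_k}} u_{i,\text{M}}(\theta_i^{(j)}, b_i, \beta_{-i}(o_{-i}^{(j)})) - \frac{1}{N_{B_k}}\sum_{j=1}^{N_{B_k}} u_{i,\text{M}}(\theta_i^{(j)}, b_i', \beta_{-i}(o_{-i}^{(j)}))} \le \frac{N_{B_k} - v_{i,B_k}(w_i)}{N_{B_k}} L_i w_i + \frac{2 v_{i,B_k}(w_i)}{N_{B_k}},
\end{align*}
which is exactly $\tilde{\varepsilon}_{i,\text{disp}}(N_{B_k})$. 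Taking $b_i$ to approach the supremum and $b_i'$ its associated grid point, then letting the slack vanish, yields the claimed one-sided bound; combined with the trivial reverse inequality from $\mathcal{G}_w \subseteq \mathcal{A}_i$, this gives the absolute-value statement.

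I expect the main obstacle to be a bookkeeping subtlety rather than a conceptual one: ensuring that the dispersion event, which is stated for a fixed grid ball around an arbitrary point $p$, can be applied to the specific (data-dependent, hence random) near-optimal bid $b_i$. This is handled because dispersion as defined in Definition~\ref{def:dispersion} is a uniform statement over \emph{every} point $p \in \mathbb{R}^d$, so it applies simultaneously to all possible $b_i$ once the high-probability event holds; no additional union bound over bids is needed. A minor secondary point is that, unlike Theorem~\ref{thm:independent-prior-guarantee-by-dispersion-over-finite-grid}, here we do not need to stretch the grid by $L_{\beta_i}$, because we are searching for a \emph{constant} best-response bid over $B_k$ directly in action space rather than composing through the strategy $\beta_i$ — so only the single dispersion width $w_i$ enters, which is why the bound has just one $\tilde{\varepsilon}_{i,\text{disp}}$ term instead of the combination appearing in the independent-prior theorem.
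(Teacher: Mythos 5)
Your proposal is correct and follows essentially the same argument as the paper: split the $N_{B_k}$ summands into those that are $L_i$-Lipschitz between $b_i$ and its nearest grid point (contributing at most $L_i w_i$ each) and the at most $v_{i,B_k}(w_i)$ that jump (contributing at most $2\norm{u_{i,\text{M}}}_\infty$ each), then average. Your observations that the dispersion event is uniform over all balls—so it applies to the data-dependent near-optimal bid without further union bounding—and that no $L_{\beta_i}$-stretched grid is needed here are both accurate and match the paper's treatment.
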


With this foundation, we can present our main theorem, which combines this section's results to establish an approximation bound on the \textit{ex ante} utility loss. The proof combines Theorems~\ref{thm:constant-best-response-difference-bounded-over-partition}, \ref{thm:finite-sample-approximation-of-constant-best-response-utility-pseudo-dimension}, and Lemma~\ref{thm:dispersion-bound-single-segment-approximation-uniform-grid} and can be found in Appendix~\ref{sec:appendix-interdependent-proof-complete-bound}.
\begin{theorem} \label{thm:complete-approximation-bound-interdependent-prior}
	Let $\delta > 0$ and $\beta \in \tilde{\Sigma}$ be a strategy profile. Suppose that for each agent $i \in [n]$ and segment $B_k \in \mathcal{B}_i$, Assumption \ref{ass:bidding-dispersion-holds-interdependent-prior} holds.
	Then, with probability $1- 4\delta$ over the draw of the sets $\left\{\giventhat{\mathcal{D}^{\beta}(\mathcal{B}_i)}{i \in [n]} \right\}$, agents $i \in [n]$, and segments $B_k \in \mathcal{B}_i$,
	\begin{align*}
		&\tilde{\ell}_i(\beta_i, \beta_{-i}) = \sup_{\beta^{\prime}_i \in \Sigma_i} \tilde{u}_{i, \text{M}}(\beta^{\prime}_i, \beta_{-i}) - \tilde{u}_{i, \text{M}}(\beta_i, \beta_{-i})\\
		&\leq \sum_{k=1}^{N_{\mathcal{B}_i}} \frac{N_{B_k}}{N} \max_{b_i \in\mathcal{G}_{w_i}} \frac{1}{N_{B_k}} \sum_{j=1}^{N_{B_k}} u_{i, \text{M}} (\theta_i^{(j)}, b_i, \beta_{-i}(o_{-i}^{(j)})) - \frac{1}{N} \sum_{l=1}^{N} u_{i, \text{M}} (\theta_i^{(l)}, \beta_i(o_i^{(l)}), \beta_{-i}(o_{-i}^{(l)})) \\
		&+ 2 \sqrt{\frac{2}{N} \log\left(\frac{2n}{\delta} \right)} + \sum_{k=1}^{N_{\mathcal{B}_i}} \frac{N_{B_k}}{N} \min\left\{1, \left(\tau_{i, B_k} + \tilde{\varepsilon}_{i, \text{Pdim}}(N_{B_k}) + \tilde{\varepsilon}_{i, \text{disp}}(N_{B_k}) \right)\right\},
	\end{align*}
	where $\tau_{i, B_k}$, $\tilde{\varepsilon}_{i, \text{Pdim}}(N_{B_k})$, and $\tilde{\varepsilon}_{i, \text{disp}}(N_{B_k})$ are the constants defined in Theorems~\ref{thm:constant-best-response-difference-bounded-over-partition}, \ref{thm:finite-sample-approximation-of-constant-best-response-utility-pseudo-dimension}, and Lemma~\ref{thm:dispersion-bound-single-segment-approximation-uniform-grid}.	
\end{theorem}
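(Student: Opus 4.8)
The plan is to use the decomposition $\tilde\ell_i(\beta_i,\beta_{-i}) = \bigl(\sup_{\beta_i'\in\Sigma_i}\tilde u_{i,\text{M}}(\beta_i',\beta_{-i})\bigr) - \tilde u_{i,\text{M}}(\beta_i,\beta_{-i})$ and bound the best-response term and the played-strategy term separately --- the structural advantage of the \textit{ex ante} setting noted earlier. The played-strategy term is immediate: Theorem~\ref{thm:interdependent-prior-case-estimate-ex-ante-utility-hoeffding} replaces $\tilde u_{i,\text{M}}(\beta_i,\beta_{-i})$ by the empirical average $\frac1N\sum_{l=1}^N u_{i,\text{M}}(\theta_i^{(l)},\beta_i(o_i^{(l)}),\beta_{-i}(o_{-i}^{(l)}))$ up to $\sqrt{(2/N)\log(2n/\delta)}$, which is exactly the subtracted term in the claimed inequality; I would also note that it decomposes over the partition as $\sum_k\frac{N_{B_k}}{N}\cdot\frac1{N_{B_k}}\sum_{j:\,o_i^{(j)}\in B_k}u_{i,\text{M}}(\dots)$, so the whole bound can be organized cell by cell.

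For the best-response term I would chain the three remaining ingredients. First, Theorem~\ref{thm:constant-best-response-difference-bounded-over-partition} bounds $\sup_{\beta_i'}\tilde u_{i,\text{M}}(\beta_i',\beta_{-i})$ by the value of the best step function that is constant on each $B_k$, at the cost of $2\sum_k P(o_i\in B_k)\,\tau_{i,B_k}$; by the law of total expectation this step-function supremum equals $\sum_k P(o_i\in B_k)\sup_{b_k\in\mathcal A_i}\mathbb E_{o_i,\theta_i,o_{-i}\mid o_i\in B_k}\bigl[u_{i,\text{M}}(\theta_i,b_k,\beta_{-i}(o_{-i}))\bigr]$, i.e.\ a probability-weighted sum of conditional constant-bid best responses. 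Second, within each $B_k$, Theorem~\ref{thm:finite-sample-approximation-of-constant-best-response-utility-pseudo-dimension} replaces that conditional best response by $\sup_{b_i\in\mathcal A_i}\frac1{N_{B_k}}\sum_j u_{i,\text{M}}(\theta_i^{(j)},b_i,\beta_{-i}(o_{-i}^{(j)}))$ at the price $\tilde\varepsilon_{i,\text{Pdim}}(N_{B_k})$. Third, Lemma~\ref{thm:dispersion-bound-single-segment-approximation-uniform-grid} --- the within-$B_k$ analogue of Theorem~\ref{thm:independent-prior-guarantee-by-dispersion-over-finite-grid}, which uses Assumption~\ref{ass:bidding-dispersion-holds-interdependent-prior} to get dispersion of the \textit{ex post} utilities under the opponents' bidding distribution and accounts for the distortion of $\mathcal G_{w_i}$ under $\beta_i$ through its bi-Lipschitz constant --- replaces $\sup_{b_i\in\mathcal A_i}$ by the finite maximum $\max_{b_i\in\mathcal G_{w_i}}$ at the price $\tilde\varepsilon_{i,\text{disp}}(N_{B_k})$. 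Stacking these, each conditional best response is at most $\max_{b_i\in\mathcal G_{w_i}}\frac1{N_{B_k}}\sum_j u_{i,\text{M}}(\dots)+\tilde\varepsilon_{i,\text{Pdim}}(N_{B_k})+\tilde\varepsilon_{i,\text{disp}}(N_{B_k})$ and also, trivially, at most the range of the utilities; the latter is what licenses clipping the per-cell slacks at $1$.

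The remaining bookkeeping is the passage from the prior cell masses $P(o_i\in B_k)$ to the empirical masses $N_{B_k}/N$ in the statement. I would do this with a concentration argument --- equivalently, one can estimate the best step-function response by a single empirical average over all $N$ samples, which automatically decomposes cell-wise with weights $N_{B_k}/N$ --- and check that the resulting error is of the same order as, hence absorbable into, the per-cell terms already present (this also explains the coefficient $2$ on $\sqrt{(2/N)\log(2n/\delta)}$ and the fourth $\delta$). After this substitution the best-response part becomes $\sum_k\frac{N_{B_k}}{N}\max_{b_i\in\mathcal G_{w_i}}\frac1{N_{B_k}}\sum_j u_{i,\text{M}}(\dots)$ plus an $\frac{N_{B_k}}{N}$-weighted, $1$-clipped sum of $\tau_{i,B_k}+\tilde\varepsilon_{i,\text{Pdim}}(N_{B_k})+\tilde\varepsilon_{i,\text{disp}}(N_{B_k})$; subtracting the played-strategy empirical average in its cell-decomposed form and taking a union bound over the four high-probability events --- the Hoeffding estimate, the per-cell pseudo-dimension bound, the dispersion event of Assumption~\ref{ass:bidding-dispersion-holds-interdependent-prior}, and the cell-mass concentration --- yields the claim.

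I expect the assembly, not any single ingredient, to be the crux. In isolation each per-cell term is vacuous: $\tilde\varepsilon_{i,\text{Pdim}}(N_{B_k})$ and $\tilde\varepsilon_{i,\text{disp}}(N_{B_k})$ blow up when $N_{B_k}$ is small, and $\tau_{i,B_k}$ may be of order $1$. The content is that these enter weighted by $N_{B_k}/N$ and clipped at the trivial bound, so data-poor or highly interdependent cells cost only their (small) probability mass and the overall guarantee is a mass-weighted average. Making the clipping and re-weighting simultaneously compatible with the best-response upper bound, with the cell-decomposed played-strategy average, and with the $P(o_i\in B_k)\!\to\! N_{B_k}/N$ passage --- while keeping the bi-Lipschitz grid-distortion accounting exactly as in the independent-prior analysis --- is the delicate part.
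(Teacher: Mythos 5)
Your proposal matches the paper's proof: the same decomposition into the played-strategy term (Theorem~\ref{thm:interdependent-prior-case-estimate-ex-ante-utility-hoeffding}) and the best-response term, the same chaining of Theorem~\ref{thm:constant-best-response-difference-bounded-over-partition}, Theorem~\ref{thm:finite-sample-approximation-of-constant-best-response-utility-pseudo-dimension}, and Lemma~\ref{thm:dispersion-bound-single-segment-approximation-uniform-grid} per cell, the same clipping at $1$ justified by nonnegativity of the empirical grid maximum, the same $P(o_i\in B_k)\to N_{B_k}/N$ concentration step (the paper's Lemma~\ref{thm:aux-lemma-approximate-convex-combination-of-bounded-variables}) which accounts for both the extra $\sqrt{(2/N)\log(2n/\delta)}$ and the fourth $\delta$, and the same four-event union bound. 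One minor descriptive slip: the grid-distortion correction via $L_{\beta_i}$ is a feature of the independent-prior Theorem~\ref{thm:independent-prior-guarantee-by-dispersion-over-finite-grid} (where the grid is over $\theta_i$ and then mapped through $\beta_i$), not of Lemma~\ref{thm:dispersion-bound-single-segment-approximation-uniform-grid}, which searches constant bids directly over $\mathcal{G}_{w_i}$ with no such distortion; the bi-Lipschitz constant here only enters through the dispersion constants in Assumption~\ref{ass:bidding-dispersion-holds-interdependent-prior}.
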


\section{Guarantees on dispersion and pseudo-dimension for four mechanisms}
\label{sec:dispersion-and-pseudo-dimension-guarantees}
In this section, we report dispersion and pseudo-dimension guarantees for various mechanisms. These enable us to instantiate the bounds from the previous two sections, thus allowing us to assess the degree to which the empirical utility loss estimates correspond to the true utility losses.

We build upon the work of \citet{balcanEstimatingApproximateIncentive2019}, which offers dispersion and pseudo-dimension guarantees for a range of mechanisms. We demonstrate how to adapt their guarantees to our context---namely strategic bidding and not just for independent priors but also for interdependent priors. We study some of our settings in the body and the rest in the appendix; a detailed summary of all our guarantees can be found in Table~\ref{tab:dispersion-and-pseudo-dim-guarantees}.
\begin{table}[ht]
	\caption{Dispersion and pseudo-dimension guarantees for different auction mechanisms. Interchanging $\kappa_{i, B_k}$ and $N_{B_k}$ in the right column with $\kappa$ and $N$ gives the dispersion results for the independent prior case $v_{i}$.}
	\label{tab:dispersion-and-pseudo-dim-guarantees}
	\begin{tabular}{lll}
		\hline
		Mechanism &
		\begin{tabular}[c]{@{}l@{}}Pseudo-dimension guarantees\\ for $\hat{\mathcal{F}}_{i, \text{M}}$ and $\tilde{\mathcal{F}}_{i, \text{M}}$\end{tabular} &
		Dispersion guarantees \\ \hline
		First-price single-item auction &
		$\tilde{O}(1)$ &
		\begin{tabular}[c]{@{}l@{}}$w_i = O\left(1 / \left(\kappa_{i, B_k}L_{\beta^{-1}_{t_{\text{max}}}}\sqrt{N_{B_k}}\right)\right)$\\ $v_{i, B_k}(w_i) = \tilde{O}\left(n \sqrt{N_{B_k}} \right)$\end{tabular} \\ \hline
		\begin{tabular}[c]{@{}l@{}}First-price combinatorial\\ auction over $l$ items\end{tabular} &
		$O\left(l 2^l \log(n) \right)$ &
		\begin{tabular}[c]{@{}l@{}}$w_i = O\left(1 / \left(\kappa_{i, B_k}L^{2^{l+1}}_{\beta^{-1}_{t_{\text{max}}}}\sqrt{N_{B_k}}\right)\right)$\\ $v_{i, B_k}(w_i)=\tilde{O}\left((n+1)^{2l} \sqrt{N_{B_k}l} \right)$\end{tabular} \\ \hline
		\begin{tabular}[c]{@{}l@{}}Discriminatory auction\\ over $m$ units of a single good\end{tabular} &
		$O\left(m \log(n m) \right)$ &
		\begin{tabular}[c]{@{}l@{}}$w_i = O\left(1 / \left(\kappa_{i, B_k}L_{\beta^{-1}_{t_{\text{max}}}}\sqrt{N_{B_k}}\right)\right)$\\ $v_{i, B_k}(w_i)=\tilde{O}\left(n m^2 \sqrt{N_{B_k}} \right)$\end{tabular} \\ \hline
		\begin{tabular}[c]{@{}l@{}}Uniform-price auction\\ over $m$ units of a single good\end{tabular} &
		$O\left(m \log(n m) \right)$ &
		\begin{tabular}[c]{@{}l@{}}$w_i = O\left(1 / \left(\kappa_{i, B_k}L_{\beta^{-1}_{t_{\text{max}}}}\sqrt{N_{B_k}}\right)\right)$\\ $v_{i, B_k}(w_i)=\tilde{O}\left(n m^2 \sqrt{N_{B_k}} \right)$\end{tabular}
	\end{tabular}
\end{table}

\subsection{Dispersion guarantees under strategic bidding}
To adapt the dispersion guarantees from \citet{balcanEstimatingApproximateIncentive2019} to our context, two significant modifications are required. First, in situations involving interdependent priors, it is necessary to focus on the conditional prior distribution. Second, one needs to reason about the (conditional) bidding distribution $F^\beta$ instead of the prior distribution $F$.
To address the first challenge, we extend the assumption of $\kappa$-bounded distributions to the conditional prior distribution. We then apply Theorem~\ref{thm:bounded-prior-density-remains-bounded-under-bi-Lipschitz} to tackle the second challenge. As a result, a $\kappa$-bounded prior distribution transforms into a $\kappa L_{\beta^{-1}_{\text{max}}}$-bounded bidding distribution.
By making these adjustments, we can apply the dispersion guarantees from \citet{balcanEstimatingApproximateIncentive2019} to our specific situation with small modifications required in the original proofs.
We illustrate how to formulate and extend the dispersion guarantees for the first-price single-item auction. For the detailed statements on other mechanisms, see Appendix~\ref{sec:dispersion-and-pseudo-dim-appendix}.

\subsection*{First-price single-item auction}
In the first-price single-item auction, the item is awarded to the highest bidder, who then pays the amount of its bid. Each agent $i$ has a valuation $\theta_i \in [0, 1]$ for the item and submits a bid $b_i \in [0, 1]$. The utility function for agent $i$ is given by $u_{i, \text{M}}(\theta_i, b_i, b_{-i}) = \mathds{1}_{\{b_i > \norm{b_{-i}}_{\infty} \}}\left(\theta_i - b_i\right)$. We limit ourselves to present the statement for the interdependent prior case, as it incorporates both changes described above. For the statement with independent prior distributions, see Appendix~\ref{sec:disp-and-pdim-first-price-single-item-auction}. 

The following theorem asserts Assumption~\ref{ass:bidding-dispersion-holds-interdependent-prior} is valid for the first-price auction with interdependent prior distributions (Section~\ref{sec:verifying-equilibria-with-interdependent-priors}). The full proof is in Appendix~\ref{sec:disp-and-pdim-first-price-single-item-auction}.
\begin{theorem} \label{thm:dispersion-guarantee-for-FPSB-single-item}
	Let $(\beta_i, \beta_{-i}) \in \tilde{\Sigma}$. Assume that for each agent $i \in [n]$ and segment $B_k \in \mathcal{B}_i$, there exists $\kappa_{i, B_k}>0$, such that the conditional marginal distributions $F_{\giventhat{o_j}{\left\{o_i \in B_k \right\}}}$ for $j \in [n] \setminus \{i\}$ are $\kappa_{i, B_k}$ bounded. Then, for $w_i>0$, with probability at least $1 - \delta$ over the draw of the sets $\left\{\giventhat{\mathcal{D}^{\beta_{-i}}(\mathcal{B}_i)}{i \in [n]} \right\}$ for every $i \in [n]$ and $B_k \in \mathcal{B}_i$, the functions 
	$u_{i, \text{M}}\left(\theta_i^{(1)}, \cdot, \beta_{-i}(o_{-i}^{(1)}) \right), \dots, u_{i, \text{M}}\left(\theta_i^{(N_{B_k})}, \cdot, \beta_{-i}(o_{-i}^{(N_{B_k})}) \right)$ are piecewise 1-Lipschitz and $\left(w_i, v_{i, B_k}\left(w_{i} \right) \right)$-dispersed, with
	$v_{i, B_k}\left(w_{i} \right) := (n-1)w_i N_{B_k} \kappa_{i, B_k} L_{\beta^{-1}_{\text{max}}} 
	+ (n-1) \sqrt{2 N_{B_k} \log \left(\frac{2n(n-1) N_{\mathcal{B}_{\text{max}}}}{\delta} \right)} + 4(n-1) \sqrt{N_{B_k} \log \left(\frac{e N_{B_k}}{2} \right)}$.
\end{theorem}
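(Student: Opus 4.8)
The plan is to mirror the two-step structure already used for the independent-prior version (Theorem~\ref{thm:dispersion-guarantes-singe-item-fpsb-independent-prior}): first pin down the piecewise-Lipschitz structure of each sampled \textit{ex post} utility as a function of agent $i$'s bid, locating its single discontinuity, and then control how tightly those discontinuities can cluster by a concentration argument applied to the opponents' \emph{conditional} bidding distributions, invoking Theorem~\ref{thm:bounded-prior-density-remains-bounded-under-bi-Lipschitz} to transfer the $\kappa_{i,B_k}$-boundedness of the prior marginals to the bidding marginals.

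\textbf{Step 1 (piecewise $1$-Lipschitz, locating the jump).} Fix $i\in[n]$, $B_k\in\mathcal{B}_i$, and a sample index $j$. By the first-price rule, $u_{i,\text{M}}(\theta_i^{(j)}, b_i, \beta_{-i}(o_{-i}^{(j)})) = \mathds{1}_{\{b_i > \norm{\beta_{-i}(o_{-i}^{(j)})}_\infty\}}(\theta_i^{(j)} - b_i)$, which as a function of $b_i$ is identically $0$ on $[0, \norm{\beta_{-i}(o_{-i}^{(j)})}_\infty]$ and affine with slope $-1$ on $(\norm{\beta_{-i}(o_{-i}^{(j)})}_\infty, 1]$; hence it is piecewise $1$-Lipschitz with exactly one discontinuity, located at $\norm{\beta_{-i}(o_{-i}^{(j)})}_\infty = \max_{l\in[n]\setminus\{i\}}\beta_l(o_l^{(j)})$. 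Consequently every discontinuity among the $N_{B_k}$ functions lies in the finite set $\mathcal{C} := \bigcup_{j=1}^{N_{B_k}}\{\beta_l(o_l^{(j)}) : l\in[n]\setminus\{i\}\}$, and it suffices to bound, with the claimed probability, the number of points of $\mathcal{C}$ inside any interval of width $w_i$.

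\textbf{Step 2 (dispersion of $\mathcal{C}$).} Split $\mathcal{C}$ into the $n-1$ buckets $\mathcal{C}_l := \{\beta_l(o_l^{(j)}) : 1\le j\le N_{B_k}\}$ for $l\neq i$. Since the tuples in $\mathcal{D}^\beta$ are i.i.d. and we have conditioned on $\{o_i\in B_k\}$ (conditioning also on the realized value $N_{B_k}$), within each bucket the points $\beta_l(o_l^{(1)}),\dots,\beta_l(o_l^{(N_{B_k})})$ are i.i.d. draws from the conditional bidding distribution $F^{\beta_l}_{\giventhat{o_l}{\{o_i\in B_k\}}}$, whose underlying marginal $F_{\giventhat{o_l}{\{o_i\in B_k\}}}$ is $\kappa_{i,B_k}$-bounded by hypothesis; Theorem~\ref{thm:bounded-prior-density-remains-bounded-under-bi-Lipschitz} then bounds its density by $\kappa_{i,B_k}L_{\beta_l^{-1}} \le \kappa_{i,B_k}L_{\beta^{-1}_{\text{max}}}$. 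Applying the generic interval-dispersion lemma for i.i.d. samples from a bounded one-dimensional distribution (Lemma~\ref{thm:aux-dispersion-lemma-indepdent-and-buckets}), whose bound combines the expected count $w_i N_{B_k}\kappa_{i,B_k}L_{\beta^{-1}_{\text{max}}}$ with a deviation term of order $\sqrt{2N_{B_k}\log(\cdot)}$ and a term $4\sqrt{N_{B_k}\log(eN_{B_k}/2)}$ from the VC dimension of intervals, to each of the $n-1$ buckets and summing gives that every width-$w_i$ interval meets $\mathcal{C}$ in at most $v_{i,B_k}(w_i)$ points; the factors $2(n-1)$ and $N_{\mathcal{B}_{\text{max}}}$ appearing inside the logarithm are precisely the cost of union-bounding over the $n-1$ buckets and over the at most $nN_{\mathcal{B}_{\text{max}}}$ pairs $(i,B_k)$. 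A final union bound over all $i\in[n]$ and $B_k\in\mathcal{B}_i$ yields the statement simultaneously with probability at least $1-\delta$.

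\textbf{Main obstacle.} No individual estimate is delicate; the crux is the measurability/independence bookkeeping that lets Theorem~\ref{thm:bounded-prior-density-remains-bounded-under-bi-Lipschitz} be applied to $F^{\beta_l}_{\giventhat{o_l}{\{o_i\in B_k\}}}$ — one must verify that restricting the i.i.d. dataset to the event $\{o_i\in B_k\}$ genuinely yields, across $j$, independent draws with exactly the conditional marginal claimed — together with choosing the per-bucket and per-pair failure-probability budgets so that the logarithmic arguments in $v_{i,B_k}(w_i)$ come out exactly as stated.
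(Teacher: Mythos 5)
Your proposal matches the paper's own proof essentially verbatim: both fix $(i,B_k)$, note that $u_{i,\text{M}}(\theta_i^{(j)},\cdot,\beta_{-i}(o_{-i}^{(j)}))$ is piecewise $1$-Lipschitz with its sole jump at $\norm{\beta_{-i}(o_{-i}^{(j)})}_\infty$, reduce dispersion to counting points of $\bigcup_j\{\beta_l(o_l^{(j)})\}_{l\neq i}$ in width-$w_i$ intervals via the bucketed version of Lemma~\ref{thm:aux-dispersion-lemma-indepdent-and-buckets} after using Theorem~\ref{thm:bounded-prior-density-remains-bounded-under-bi-Lipschitz} to convert the $\kappa_{i,B_k}$-bounded conditional prior into a $\kappa_{i,B_k}L_{\beta^{-1}_{\text{max}}}$-bounded conditional bidding distribution, and finish with a union bound over agents and segments giving the $\delta/(nN_{\mathcal{B}_{\text{max}}})$ per-instance budget. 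The only addition you make is explicitly flagging the conditioning on $N_{B_k}$ and the need to verify the i.i.d.-conditional structure of the per-bucket samples, which the paper treats implicitly; this is a helpful clarification, not a departure.
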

\begin{sproof}
	For agent $i \in [n]$, apply Theorem~\ref{thm:bounded-prior-density-remains-bounded-under-bi-Lipschitz} to the marginal bidding distribution $F_{\giventhat{o_j}{\left\{o_i \in B_k \right\}}}^{\beta_{i}}$. Then, the $\kappa_{i, B_k}$-bounded density function for every agent $j \in [n]\setminus \{i \}$ and $B_k \in \mathcal{B}_i$ transforms into a $\kappa_{i, B_k} L_{\beta^{-1}_j}$-bounded bidding distribution. 
	Next, we determine that for a sample $j$, the discontinuity in the utility functions is located at the point $\norm{\beta_{-i}(o_{-i}^{(j)})}_{\infty}$. Following this, we apply standard dispersion results (as detailed in Appendix~\ref{sec:generic-dispersion-statements}) to restrict the number of points $\left\{\beta_l(o_l^{(j)}) \right\}_{j \in [N_{B_k}], l \in [n] \setminus \{i\}}$ within any interval of width $w_i$ with high probability.
	A suitable union bound finishes the statement.
\end{sproof}

We provide dispersion guarantees for three other mechanisms in Appendix~\ref{sec:dispersion-and-pseudo-dim-appendix}. A detailed summary of all our guarantees can be found in Table~\ref{tab:dispersion-and-pseudo-dim-guarantees}.

\subsection{Pseudo-dimension guarantees via delineability}
\citet{balcanEstimatingApproximateIncentive2019} build their pseudo-dimension guarantees on the concept of $(m, t)$-delineability~\citep{balcanGeneralTheorySample2018}. If one can show that a function class is $(m, t)$-delineable, one can bound its pseudo-dimension. \citet{balcanEstimatingApproximateIncentive2019} show $\hat{\mathcal{F}}_{i, \text{M}}$ is $(2m, t)$-delineable for several auction mechanisms to derive their bounds. We extend their statements to $\tilde{\mathcal{F}}_{i, \text{M}}$ by showing if $\hat{\mathcal{F}}_{i, \text{M}}$ is $(2m, t)$-delineable, then $\tilde{\mathcal{F}}_{i, \text{M}}$ is $(m, t)$-delineable. This way, we can readily extend their pseudo-dimension guarantees.
The concept of $(m, t)$-delineability is defined as follows.
\begin{definition}[$(m, t)$-delineable, \citet{balcanGeneralTheorySample2018}] \label{def:mt-delineable-functions}
	Let $\mathcal{P} \subset \mathbb{R}^m$ and $\mathcal{X}$ a vector-space. A class of functions $\mathcal{F} = \left\{\giventhat{f(\argdot, p): \mathcal{X} \rightarrow \mathbb{R} }{p\in \mathcal{P} }\right\}$ is $(m, t)$-dealineable if
	for any $v\in \mathcal{X}$, there is a set $\mathcal{H}$ of $t$ hyperplanes such that for any connected component $\mathcal{P}^\prime$ of $\mathcal{P} \setminus \mathcal{H}$, $f(v, p)$ is linear over $\mathcal{P}^\prime$. 
\end{definition}
The following theorem is similar to \citet{balcanGeneralTheorySample2018}'s main statement to bound the pseudo-dimension of an $(m, t)$-delineable function class. We slightly reformulated it to our setting.
\begin{theorem}[\citet{balcanGeneralTheorySample2018}]\label{thm:balcans-general-pseudo-dim-bound-for-mt-delineable-functions}
	If a function class $\mathcal{F}$ is $(m, t)$-dealineable, then $\text{Pdim}\left(\mathcal{F} \right) = O\left(m \log(mt) \right)$.
\end{theorem}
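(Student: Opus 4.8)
The plan is to argue directly from the shattering definition of pseudo-dimension (Definition~\ref{def:pseudo-dimension}): it suffices to show that any set $\{v_1,\dots,v_N\}\subset\mathcal{X}$ that is shatterable by $\mathcal{F}=\{f(\argdot,p):p\in\mathcal{P}\}$ must satisfy $N=O(m\log(mt))$. So fix such a set together with a witness vector $(z_1,\dots,z_N)\in\mathbb{R}^N$. Shattering means that, as $p$ ranges over $\mathcal{P}$, the Boolean vector $\big(\mathds{1}[f(v_j,p)\le z_j]\big)_{j\in[N]}$ realizes all $2^N$ possible values. I would upper bound the number of distinct vectors it can realize by a quantity polynomial in $N$, with the degree governed by $m$, and then solve for $N$.

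For the counting I would combine two ingredients. First, for each $j\in[N]$, applying $(m,t)$-delineability to $v=v_j$ yields a set $\mathcal{H}_j$ of $t$ hyperplanes in $\mathbb{R}^m$ such that $p\mapsto f(v_j,p)$ coincides with a single affine function on every connected component of $\mathcal{P}\setminus\mathcal{H}_j$. Form the union $\mathcal{H}=\bigcup_{j\in[N]}\mathcal{H}_j$ of at most $Nt$ hyperplanes; by the classical arrangement bound it partitions $\mathbb{R}^m$ into $O((Nt)^m)$ open cells, and since each such cell $C$ meets no hyperplane of any $\mathcal{H}_j$, the restriction of $f(v_j,\argdot)$ to $C\cap\mathcal{P}$ is a single affine function $\ell_{j,C}$ for every $j$. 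Second, fixing a cell $C$, the vector $\big(\mathds{1}[f(v_j,p)\le z_j]\big)_{j}$ is, for $p\in C\cap\mathcal{P}$, determined by the signs of the $N$ affine functions $p\mapsto\ell_{j,C}(p)-z_j$, and the number of sign patterns of $N$ affine functions on $\mathbb{R}^m$ is again $O(N^m)$ by the arrangement bound. Multiplying the two counts, the total number of realizable Boolean vectors over all of $\mathcal{P}$ is $O\big((Nt)^m\cdot N^m\big)=O\big((N^2t)^m\big)$.

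Shatterability therefore forces $2^N\le c\,(N^2t)^m$ for an absolute constant $c$, i.e.\ $N\le 2m\log_2 N+m\log_2 t+O(m)$. Invoking the standard fact that $x\le a\log_2 x+b$ (with $a\ge 1$) implies $x=O(a\log_2 a+b)$, applied with $a=2m$ and $b=m\log_2 t+O(m)$, gives $N=O(m\log m+m\log t)=O(m\log(mt))$. Since this bounds the size of every shatterable set, $\text{Pdim}(\mathcal{F})=O(m\log(mt))$.

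The step I expect to require the most care is the claim that passing to the cells of the combined arrangement $\mathcal{H}$ ``freezes'' the affine piece used by each $f(v_j,\argdot)$: a priori $C\cap\mathcal{P}$ could be disconnected and straddle two affine pieces of $f(v_j,\argdot)$. This is harmless when $\mathcal{P}$ is convex---which is the case in all our instantiations, where $\mathcal{P}=\mathcal{A}_i=[0,1]^m$---since then $C\cap\mathcal{P}$ is convex, hence contained in a single connected component of $\mathcal{P}\setminus\mathcal{H}_j$; in the fully general case one either adds convexity of $\mathcal{P}$ as a hypothesis or restricts attention to the (finitely many) affine pieces that actually intersect the probed cells. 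The remaining ingredients---the $O(H^m)$ bound on the number of cells of an arrangement of $H$ hyperplanes in $\mathbb{R}^m$, the $O(N^m)$ bound on sign patterns of affine functions, and solving the transcendental inequality for $N$---are routine.
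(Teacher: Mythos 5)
The paper states this theorem as a citation to \citet{balcanGeneralTheorySample2018} and provides no proof of its own, so there is no in-paper argument to compare against; your reconstruction is a self-contained version of the standard counting argument that underlies the cited result, and it is correct. The core chain is sound: each $v_j$ contributes $t$ hyperplanes, the combined arrangement of at most $Nt$ hyperplanes in $\mathbb{R}^m$ has $O((Nt)^m)$ cells, on each cell the $N$ functions $f(v_j,\argdot)$ become affine so that the label pattern is governed by the sign pattern of $N$ affine functions, which contributes another $O(N^m)$ factor, and solving $2^N \le c\,(N^2 t)^m$ via the standard $x \le a\log_2 x + b \Rightarrow x = O(a\log a + b)$ lemma (with $a = 2m \ge 2$) yields $N = O(m\log(mt))$. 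Your caveat is the right one to raise: the step from ``$C$ is a cell of the arrangement in $\mathbb{R}^m$'' to ``$f(v_j,\argdot)$ is a single affine function on $C\cap\mathcal{P}$'' genuinely needs $C\cap\mathcal{P}$ to lie in a single connected component of $\mathcal{P}\setminus\mathcal{H}_j$, and convexity of $\mathcal{P}$ (which holds in every instantiation in this paper, where $\mathcal{P}=\mathcal{A}_i=[0,1]^m$) is the cleanest way to guarantee it. This matches the hypotheses under which Balcan et al.\ establish the bound. One cosmetic point: the paper's definition of delineability uses the word ``linear,'' but as your argument implicitly assumes and as the paper's own applications make clear (e.g.\ $u_{i,\text{M}} = x_i\cdot\theta_i - p_i$), ``linear'' is to be read as ``affine''; your use of sign patterns of $\ell_{j,C}(p) - z_j$ is then exactly right.
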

We now give our statement that extends the pseudo-dimension results from $\hat{\mathcal{F}}_{i, \text{M}}$ to $\tilde{\mathcal{F}}_{i, \text{M}}$.
\begin{theorem}\label{thm:if-hatF-delieanble-then-tildeF-delineable}
	Let $\text{M}$ be a mechanism and $i \in [n]$. Suppose the function class $\hat{\mathcal{F}}_{i, \text{M}}$ is $(2m, t)$-delineable, then $\tilde{\mathcal{F}}_{i, \text{M}}$ is $(m, t)$-delineable.
\end{theorem}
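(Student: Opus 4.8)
The plan is to view $\tilde{\mathcal{F}}_{i,\text{M}}$ as a codimension-$m$ \emph{affine slice} of $\hat{\mathcal{F}}_{i,\text{M}}$ and to transport delineability through this slicing. Recall that in $\hat{\mathcal{F}}_{i,\text{M}}$ the parameter is the pair $(\theta_i,\hat\theta_i)=(\theta_i,b_i)\in\Theta_i\times\mathcal{A}_i\subseteq\mathbb{R}^{2m}$ and the free (domain) variable is $b_{-i}$, whereas in $\tilde{\mathcal{F}}_{i,\text{M}}$ the parameter is $b_i\in\mathcal{A}_i\subseteq\mathbb{R}^m$ and the free variable is $(\theta_i,b_{-i})$. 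So I would fix an arbitrary evaluation point $v=(\theta_i^0,b_{-i}^0)\in\Theta_i\times\mathcal{A}_{-i}$ for $\tilde{\mathcal{F}}_{i,\text{M}}$, and invoke the $(2m,t)$-delineability of $\hat{\mathcal{F}}_{i,\text{M}}$ at the domain point $b_{-i}^0$. This produces hyperplanes $H_1,\dots,H_t\subseteq\mathbb{R}^{2m}$ such that on every connected component of $(\Theta_i\times\mathcal{A}_i)\setminus\bigcup_j H_j$ the map $(\theta_i,b_i)\mapsto u_{i,\text{M}}(\theta_i,b_i,b_{-i}^0)$ is linear (affine).

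Next I would restrict to the affine slice $S:=\{\theta_i^0\}\times\mathbb{R}^m$, identified with $\mathbb{R}^m$ via the $b_i$-coordinate. An elementary linear-algebra dichotomy shows each $H_j\cap S$ is empty, a hyperplane of $S$, or all of $S$ (the last case forcing $H_j$ to have the special form $\{\theta_i:\ a\cdot\theta_i=d\}$ with no $b_i$-dependence and $a\cdot\theta_i^0=d$). Collect into $\mathcal{H}'$ exactly those $H_j\cap S$ that are hyperplanes of $S$, so $|\mathcal{H}'|\le t$; this is my candidate witness for $(m,t)$-delineability of $\tilde{\mathcal{F}}_{i,\text{M}}$ at $v$. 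In the generic situation where no $H_j$ contains $S$, one has $S\cap\bigcup_j H_j=\bigcup\mathcal{H}'$, so any connected component $C$ of $\mathcal{A}_i\setminus\bigcup\mathcal{H}'$, viewed inside $S$, is a connected subset of $(\Theta_i\times\mathcal{A}_i)\setminus\bigcup_j H_j$, hence lies in a single component on which $u_{i,\text{M}}(\cdot,\cdot,b_{-i}^0)$ is affine; restricting that affine function to $\theta_i=\theta_i^0$ shows $b_i\mapsto u_{i,\text{M}}(\theta_i^0,b_i,b_{-i}^0)$ is affine on $C$, which is exactly what is needed.

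The main obstacle is the degenerate case $S\subseteq H_j$ for some $j$: then the slice meets no component of the arrangement, so delineability of $\hat{\mathcal{F}}_{i,\text{M}}$ does not directly yield affinity on $S$. I would resolve this by a one-sided limit argument: for generic $b_i$ and small $s\neq0$, the perturbed point $(\theta_i^0+s\,a,\,b_i)$ lies in a component of $\mathbb{R}^{2m}\setminus\bigcup_j H_j$ on which $u_{i,\text{M}}(\cdot,\cdot,b_{-i}^0)$ equals a fixed affine function; letting $s\to0^{\pm}$ and using continuity of $\theta_i\mapsto u_{i,\text{M}}(\theta_i,b_i,b_{-i}^0)$ for fixed bids (which holds, indeed with affinity, for quasilinear mechanisms—so for all mechanisms we instantiate) identifies $u_{i,\text{M}}(\theta_i^0,b_i,b_{-i}^0)$ with the restriction of that affine function, hence affine in $b_i$ on the relevant piece; one then simply discards the degenerate $H_j$'s from $\mathcal{H}'$ (which only shrinks it). (One may alternatively observe that the delineating hyperplanes constructed for these mechanisms never depend on $\theta_i$ alone, so the degenerate case does not even arise.) Having established $(m,t)$-delineability of $\tilde{\mathcal{F}}_{i,\text{M}}$, an application of Theorem~\ref{thm:balcans-general-pseudo-dim-bound-for-mt-delineable-functions} yields $\text{Pdim}(\tilde{\mathcal{F}}_{i,\text{M}})=O(m\log(mt))$, which is how the pseudo-dimension bounds for $\hat{\mathcal{F}}_{i,\text{M}}$ carry over to $\tilde{\mathcal{F}}_{i,\text{M}}$.
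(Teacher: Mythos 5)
Your proposal is correct, and it takes a genuinely different route from the paper's. The paper works forward from the $(2m,t)$ arrangement $\hat{\mathcal{H}}$ by first exploiting the quasilinear form $u_{i,\text{M}}(\theta_i,b_i,b_{-i})=x_i(b_i,b_{-i})\cdot\theta_i-p_i(b_i,b_{-i})$: linearity in $(\theta_i,b_i)$ on an open product cell forces the discrete-valued allocation $x_i$ to be constant on the $b_i$-factor, so $u_{i,\text{M}}$ is affine in $\theta_i$ over \emph{all} of $\Theta_i$ for $b_i$ in that cell, and the delineating hyperplanes can then be replaced by $\theta_i$-independent ones which project to at most $t$ hyperplanes in $\mathcal{A}_i$. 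That argument bakes the quasilinear structure into the very first step and tacitly relies on the cells of $\hat{\mathcal{H}}$ being products $C^{\Theta_i}_l\times C^{\mathcal{A}_i}_l$. Your slicing argument is more modular: the generic case is a pure linear-algebra fact (a transverse affine slice of an arrangement of $t$ hyperplanes is an arrangement of at most $t$ hyperplanes, and a connected slice cell sits inside a single ambient cell) that needs no information about $u_{i,\text{M}}$ whatsoever, while the quasilinearity enters only to dispose of the degenerate case where $\{\theta_i^0\}\times\mathbb{R}^m\subseteq H_j$, via the global affinity of $\theta_i\mapsto u_{i,\text{M}}(\theta_i,b_i,b_{-i})$ (or, as you note, by observing the delineating hyperplanes constructed for these mechanisms never cut in $\theta_i$ alone). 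The upshot is that your generic step holds for any piecewise-linear $\hat{\mathcal{F}}_{i,\text{M}}$, making the dependence on the quasilinear payoff structure explicit and isolated, whereas the paper's proof uses it up front and leaves the resulting product structure of cells implicit. One small polish for your degenerate step: choose the perturbation direction $a$ transverse to all degenerate normals so the slices $H_j\cap(\{\theta_i^0+sa\}\times\mathbb{R}^m)$ of the degenerate $H_j$ are empty for $s\ne 0$; then the perturbed slice arrangements converge to the non-degenerate part $\mathcal{H}'$, and the affine pieces of $u_{i,\text{M}}(\theta_i^0+sa,\cdot,b_{-i}^0)$ converge to affine pieces of $u_{i,\text{M}}(\theta_i^0,\cdot,b_{-i}^0)$ by the global affinity in $\theta_i$, avoiding the pointwise-plus-density bookkeeping.
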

\begin{proof}
	For a $b_{-i} \in \mathcal{A}_{-i}$, let $C^{\Theta_i} \times C^{\mathcal{A}_i} \subset \Theta_i \times \mathcal{A}_i = [0, 1]^m \times [0, 1]^m$ be an open subset such that $u_{i, \text{M}}(\theta_i, b_i, b_{-i}) = x_i(b_i, b_{-i}) \cdot \theta_i - p_i(b_i, b_{-i})$ is linear in $(\theta_i, b_i)$ over $C^{\Theta_i} \times C^{\mathcal{A}_i}$. As the allocation $x_i(b_i, b_{-i}) \in \{0, 1\}^m$ and the price $p_i(b_i, b_{-i}$ are independent of $\theta_i$, the allocation $x_i$ has to be constant for all $b_i \in C^{\mathcal{A}_i}$, otherwise there would be a jump for a changing $\theta_i$. Therefore, $u_{i, \text{M}}(\theta_i, b_i, b_{-i})$ is linear in $\theta_i \in \Theta_i$ for $b_i \in C^{\mathcal{A}_i}$.
	
	Let $(\theta_i, b_{-i}) \in \Theta_i \times \mathcal{A}_{-i}$. As $\hat{\mathcal{F}}_{i, \text{M}}$ is $(2m, t)$-delineable, for $b_{-i}$, there exists a set $\hat{\mathcal{H}}$ of $t$ hyperplanes such that for any connected component $C^{\Theta_i}_l \times C^{\mathcal{A}_i}_l$ of $\Theta_i \times \mathcal{A}_i \setminus \hat{\mathcal{H}}$ the utility $u_{i, \text{M}}(\theta_i^\prime, b_i, b_{-i})$ is linear for $\theta_i^\prime \in C^{\Theta_i}_l$ and $b_i \in C^{\mathcal{A}_i}_l$. Denote with $\left\{C^{\Theta_i}_l \times C^{\mathcal{A}_i}_l \right\}_{l \in [N_t]}$ the set of connected components of $\Theta_i \times \mathcal{A}_i \setminus \hat{\mathcal{H}}$, where $N_t$ is the number of connected components. For $b_{-i}$, we need at most $t$ hyperplanes $\tilde{\mathcal{H}}$ so that $\mathcal{A}_i \setminus \tilde{\mathcal{H}} = \bigcup_{l \in [N_t]} C^{\mathcal{A}_i}_l$.    
	By the argument above, the allocation is fixed for $b_i \in C^\mathcal{A_i}_l$ for every $l \in [N_t]$ and $u_{i, \text{M}}(\theta_i^\prime, b_i, b_{-i})$ is linear in $\theta_i^\prime \in \Theta_i$. Therefore, $u_{i, \text{M}}(\theta_i, b_i, b_{-i})$ is linear in $b_i \in C^{\mathcal{A}_i}_l$. Therefore, $\tilde{\mathcal{F}}_{i, \text{M}}$ is $(m, t)$-delineable.
\end{proof}

Due to space restrictions, we direct readers to Appendix~\ref{sec:dispersion-and-pseudo-dim-appendix} for thorough descriptions of the mechanisms and the detailed guarantees derived by the approach described above.

\section{Conclusions and future research}
We introduced sampling-based methods for estimating the distance of a strategy profile from an \emph{ex interim} or \textit{ex ante} Bayesian Nash equilibrium. 
Our approach significantly broadens the scope of approximate equilibrium verification compared to prior methods, which rely on narrow assumptions like truthful bidding, single-item auctions, and/or complete knowledge of the prior. 
Notably, we enhance the sampling method proposed by \citet{balcanEstimatingApproximateIncentive2019} by extending it to allow strategic bidding, and correcting their prior assertion regarding its applicability to interdependent priors in the \textit{ex ante} scenario.

Our key contribution is the development of an empirical estimator for the utility loss, which---intuitively speaking---measures the maximum utility an agent can gain by deviating from its current strategy. We have effectively bounded the error between this empirical estimate and the true utility loss by employing a mixture of learning theory tools such as dispersion and pseudo-dimension. We established sufficient conditions for strategy profiles and a closeness criterion for conditional distributions that ensure that utility gains estimated through our finite subset of the strategy space closely approximate the maximum gains. We thus derived strong guarantees for a broad class of auctions with independent or interdependent priors, including the first-price single-item and combinatorial auction, discriminatory auction, and uniform-price auction.

In related research, we discussed several promising techniques to computationally determine equilibrium candidates in complex auctions. To better understand the implications of our results, a natural next step is to combine equilibrium computation with our method of verification to analyze practically relevant settings.

However, it is important to note that our current bounds on the utility loss scale exponentially with the complexity inherent in general combinatorial auctions. Recognizing this limitation, a valuable avenue for future research involves exploiting the unique structural characteristics of certain combinatorial auctions, such as those involving items that are substitutes or complements. By doing so, there is potential to derive bounds that scale polynomially rather than exponentially with the number of items. This could significantly enhance the efficiency and feasibility of applying our methods to a broader range of auction formats, thereby extending their practical applicability.

\bibliographystyle{unsrtnat}
\bibliography{My_Library,dairefs}  






\appendix
\section{Auxiliary lemmas and results}
In this section, we introduce some helpful concepts to proof our results.
\subsection{Bi-Lipschitz continuous functions}
We revisit some well-established results from existing literature. 

Formally, the restrictions on the rate of change for a bi-Lipschitz mapping are captured by the following bounds on the determinant of its Jacobian matrix. This is presented in the following lemma.
\begin{lemma}[\citet{verineExpressivityBiLipschitzNormalizing2023, federerGeometricMeasureTheory1996}] \label{thm:bound-determinante-of-jacobian-of-bi-lipschitz-map}
	Let $g: \mathcal{X} \subset \mathbb{R}^m \rightarrow \mathcal{Y}$ be a $(L_g, L_{g^{-1}})$-bi-Lipschitz function. Then, for all $x \in \mathcal{X}$ it holds that
	\begin{align*}
		\frac{1}{L_{g^{-1}}^m} \leq \abs{\det(\mathcal{J} g (x))} \leq L_g^m \text{ and } \frac{1}{L_g^m} \leq \abs{\det(\mathcal{J} g^{-1} (x))} \leq L_{g^{-1}}^m.
	\end{align*}
\end{lemma}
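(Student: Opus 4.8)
The plan is to prove the bound at every point where $g$ is differentiable, which by Rademacher's theorem is a subset of $\mathcal{X}$ of full Lebesgue measure (and this suffices for the change-of-variables applications of the lemma, e.g.\ in Theorem~\ref{thm:bounded-prior-density-remains-bounded-under-bi-Lipschitz}). The elementary fact I would record first is that an $L$-Lipschitz map has operator-norm-bounded derivative: if $h$ is $L$-Lipschitz and differentiable at $x$, then for every unit vector $v$ one has $\|\mathcal{J}h(x)v\| = \lim_{t\to 0}\|h(x+tv)-h(x)\|/|t| \le L$, so $\|\mathcal{J}h(x)\|_{\mathrm{op}} \le L$. Applying this to $g$ gives $\|\mathcal{J}g(x)\|_{\mathrm{op}}\le L_g$ at every differentiability point $x$.

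Next I would transfer the bound to the inverse. A bi-Lipschitz map sends Lebesgue-null sets to Lebesgue-null sets, so the set of $y$ where $g^{-1}$ fails to be differentiable is null and has a null $g^{-1}$-preimage; hence for a.e.\ $x$ both $g$ is differentiable at $x$ and $g^{-1}$ is differentiable at $y:=g(x)$. At such $x$, the chain rule applied to $g^{-1}\circ g = \mathrm{id}_{\mathcal{X}}$ yields $\mathcal{J}g^{-1}(y)\,\mathcal{J}g(x) = I_m$, so $\mathcal{J}g(x)$ is invertible with $(\mathcal{J}g(x))^{-1} = \mathcal{J}g^{-1}(y)$, and by the first step $\|(\mathcal{J}g(x))^{-1}\|_{\mathrm{op}} \le L_{g^{-1}}$. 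Now I would read off the determinant bounds from the singular values $\sigma_1\ge\cdots\ge\sigma_m>0$ of $\mathcal{J}g(x)$: the largest equals $\|\mathcal{J}g(x)\|_{\mathrm{op}}\le L_g$, and the smallest satisfies $1/\sigma_m = \|(\mathcal{J}g(x))^{-1}\|_{\mathrm{op}}\le L_{g^{-1}}$, so every $\sigma_i\in[1/L_{g^{-1}},\,L_g]$ and therefore
\[
 \frac{1}{L_{g^{-1}}^m} \;\le\; \prod_{i=1}^m \sigma_i \;=\; \bigl|\det(\mathcal{J}g(x))\bigr| \;\le\; L_g^m .
\]
The second pair of inequalities then follows either by applying the first pair to $g^{-1}$, which is $(L_{g^{-1}},L_g)$-bi-Lipschitz, or directly from $\det(\mathcal{J}g^{-1}(y)) = 1/\det(\mathcal{J}g(x))$.

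I expect the only delicate part to be the measure-theoretic bookkeeping: invoking Rademacher's theorem, and using that bi-Lipschitz maps preserve null sets so that $g^{-1}$ is differentiable at a.e.\ image point and the chain rule identity $\mathcal{J}g^{-1}(g(x))\,\mathcal{J}g(x)=I_m$ is legitimate almost everywhere. The subsequent linear-algebra step with singular values is routine. If one wants the statement literally "for all $x$" rather than "for a.e.\ $x$", one simply restricts attention to the differentiability points, which is the only regime in which the Jacobian is defined and the only one needed downstream.
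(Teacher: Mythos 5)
The paper does not prove this lemma; it states it as a known result with citations to Verine et al.\ and Federer, so there is no internal proof to compare against. Your argument is a correct and self-contained derivation of exactly the cited fact: bound $\|\mathcal{J}g(x)\|_{\mathrm{op}}\le L_g$ and $\|\mathcal{J}g^{-1}(g(x))\|_{\mathrm{op}}\le L_{g^{-1}}$ via directional-derivative limits, identify $(\mathcal{J}g(x))^{-1}$ with $\mathcal{J}g^{-1}(g(x))$ through the chain rule on $g^{-1}\circ g = \mathrm{id}_{\mathcal{X}}$, and then trap every singular value of $\mathcal{J}g(x)$ in $[1/L_{g^{-1}},\,L_g]$ so that the determinant bounds follow from $|\det|=\prod\sigma_i$. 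The measure-theoretic bookkeeping you flag (Rademacher differentiability a.e., bi-Lipschitz maps preserving null sets so that the chain-rule identity holds at a full-measure set of $x$) is exactly the care one needs to justify the lemma as stated for merely bi-Lipschitz $g$, and you are also right that the literal quantifier ``for all $x$'' should be read as ``at every differentiability point.'' One remark worth noting: in the paper's downstream use, the strategies $\beta_i\in\tilde{\Sigma}_i$ are assumed to be continuously differentiable, so for the intended application the Jacobian exists everywhere and the Rademacher/null-set argument is unnecessary; your version handles the more general case where one drops the $C^1$ assumption, which is the level of generality Federer's treatment addresses.
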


The change of variables formula serves as a foundational tool in our analysis, permitting the expression of the density function of a probability measure under a mapping that exhibits sufficient regularity, such as bi-Lipschitz maps. We consider the following version of the well-known change of variables formula. 
\begin{theorem}[Change of Variables, {\citet[p.12]{villaniOptimalTransport2009}}] \label{thm:change-of-variables}
	Let $\mathcal{X}, \mathcal{Y} \subset \mathbb{R}^m$ be open, bounded, and connected subsets. Let $\mu_0, \mu_1$ be two probability measures on $\mathcal{X}$ and $\mathcal{Y}$, respectively, that are absolutely continuous with respect to the Borel-measure $\lambda$. Let $T: \mathcal{X} \rightarrow \mathcal{Y}$ be an injective, locally Lipschitz function such that $\mu_1$ is the pushforward measure of $\mu_0$ under $T$, that is, $T_{\#}\mu_0 = \mu_1$. Then, it holds that
	\begin{align*}
		\phi_{\mu_0}(x) = \phi_{\mu_1}(T(x)) \abs{\det{\mathcal{J} T(x)}},
	\end{align*}
	where $\phi_{\mu_0}, \phi_{\mu_1}$ denote the density functions of $\mu_0$ and $\mu_1$, respectively, and $\mathcal{J} T$ denotes the Jacobian matrix of $T$.
\end{theorem}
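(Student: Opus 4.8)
The plan is to derive the identity from the \emph{area formula} for (locally) Lipschitz maps, together with Rademacher's theorem and the defining property of the pushforward measure; throughout, the Jacobian determinant $\abs{\det \mathcal{J} T(x)}$ and the asserted equality are to be understood $\lambda$-almost everywhere. As a preliminary step I would record the relevant regularity and measurability facts. Since $T$ is locally Lipschitz on the open set $\mathcal{X} \subset \mathbb{R}^m$, Rademacher's theorem guarantees that $T$ is differentiable at $\lambda$-a.e.\ $x \in \mathcal{X}$, so $\mathcal{J}T(x)$ and $\abs{\det \mathcal{J} T(x)}$ are well defined $\lambda$-a.e. Moreover, locally Lipschitz maps send $\lambda$-null sets to $\lambda$-null sets and Lebesgue-measurable sets to Lebesgue-measurable sets; hence $T(A)$ is measurable for every measurable $A \subseteq \mathcal{X}$, and by injectivity $T^{-1}(T(A)) = A$. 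Finally, $\mu_1$ charges no subset of $\mathcal{Y} \setminus T(\mathcal{X})$, since $\mu_1(\mathcal{Y} \setminus T(\mathcal{X})) = \mu_0(T^{-1}(\mathcal{Y} \setminus T(\mathcal{X}))) = \mu_0(\emptyset) = 0$.

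Next I would invoke the area formula (Federer): for the injective map $T$ and any nonnegative measurable $g$ on $\mathbb{R}^m$, extending from the Lipschitz case by exhausting $\mathcal{X}$ with compacta if necessary,
\[
  \int_{\mathcal{X}} g(T(x)) \, \abs{\det \mathcal{J} T(x)} \, d\lambda(x) = \int_{T(\mathcal{X})} g(y) \, d\lambda(y),
\]
the multiplicity function being identically $1$ by injectivity. Applying this with $g = \phi_{\mu_1} \cdot \mathds{1}_B$ for an arbitrary measurable $B \subseteq \mathcal{Y}$, and using that $\mu_1$ is concentrated on $T(\mathcal{X})$, gives $\int_{\mathcal{X}} \phi_{\mu_1}(T(x)) \, \mathds{1}_B(T(x)) \, \abs{\det \mathcal{J} T(x)} \, d\lambda(x) = \int_{B \cap T(\mathcal{X})} \phi_{\mu_1}(y)\, d\lambda(y) = \mu_1(B)$. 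On the other hand, the defining property of the pushforward and absolute continuity give $\mu_1(B) = \mu_0(T^{-1}(B)) = \int_{T^{-1}(B)} \phi_{\mu_0}(x)\, d\lambda(x)$. Since $\mathds{1}_B(T(x)) = \mathds{1}_{T^{-1}(B)}(x)$, equating the two expressions for $\mu_1(B)$ yields, for every measurable $B \subseteq \mathcal{Y}$,
\[
  \int_{T^{-1}(B)} \phi_{\mu_0}(x)\, d\lambda(x) = \int_{T^{-1}(B)} \phi_{\mu_1}(T(x)) \, \abs{\det \mathcal{J} T(x)}\, d\lambda(x).
\]

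To conclude, take an arbitrary measurable $A \subseteq \mathcal{X}$ and set $B := T(A)$; by the preliminary step $B$ is measurable and $T^{-1}(B) = A$. The displayed identity then reads $\int_A \phi_{\mu_0}\, d\lambda = \int_A (\phi_{\mu_1}\circ T)\cdot \abs{\det \mathcal{J} T}\, d\lambda$ for every measurable $A \subseteq \mathcal{X}$, which forces $\phi_{\mu_0}(x) = \phi_{\mu_1}(T(x))\,\abs{\det \mathcal{J} T(x)}$ for $\lambda$-a.e.\ $x \in \mathcal{X}$, as claimed.

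I expect the main obstacle to be entirely contained in the area formula for locally Lipschitz maps and its accompanying measurability facts—that Lipschitz images of measurable sets are measurable and that null sets are mapped to null sets—which are standard but substantial results of geometric measure theory; once they are granted, the remaining steps are bookkeeping. A secondary point requiring care is that $\abs{\det \mathcal{J} T}$ exists only $\lambda$-a.e.\ (via Rademacher), and the asserted equality is correspondingly an almost-everywhere statement, so each manipulation carries an implicit ``$\lambda$-a.e.''\ qualifier; one also needs the small observation, used above, that $\mu_1$ assigns no mass to $\mathcal{Y}\setminus T(\mathcal{X})$.
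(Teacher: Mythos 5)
The paper does not prove this statement at all: it is imported verbatim as a known result, cited to Villani's \emph{Optimal Transport: Old and New} (p.~12), and then used as a black box in Lemma~\ref{thm:boundedness-of-distribution-under-bi-lipschitz-map} and Theorem~\ref{thm:bounded-prior-density-remains-bounded-under-bi-Lipschitz}. So there is no in-paper argument to compare against; what can be assessed is whether your derivation is sound, and it is. Your route --- Rademacher's theorem for a.e.\ differentiability, the area formula for injective locally Lipschitz maps with multiplicity one, the observation that $\mu_1$ charges nothing outside $T(\mathcal{X})$, and the identification of the two expressions for $\mu_1(B)$ followed by the ``equal integrals over all measurable sets implies a.e.\ equality'' step --- is the standard geometric-measure-theoretic proof of this change-of-variables formula, and you correctly flag the two points that need care: measurability of $T(A)$ (via the Luzin-type property of Lipschitz maps) and the fact that the conclusion is only a $\lambda$-a.e.\ identity, which is all that a statement about densities can mean. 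The only substantive content you defer is the area formula itself, which is exactly the right thing to take as given here; granting it, the remaining bookkeeping is complete and correct.
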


The following well-known statement directly follows from Theorem \ref{thm:change-of-variables} and Lemma \ref{thm:bound-determinante-of-jacobian-of-bi-lipschitz-map}.
\begin{lemma}\label{thm:boundedness-of-distribution-under-bi-lipschitz-map}
	Let $\mathcal{X}, \mathcal{Y} \subset \mathbb{R}^m$ and $g: \mathcal{X} \rightarrow \mathcal{Y}$ be a $(L_g, L_{g^{-1}})$-bi-Lipschitz function. Furthermore, let $\mu$ be a probability measure over $\mathcal{X}$ with a $\kappa$-bounded density function $\phi_{\mu}$, i.e., $\sup_{x \in \mathcal{X}} \phi_{\mu}(x) \leq \kappa$. Then, the push-forward probability measure $g_{\#}\mu$ has a $\kappa \cdot L_{g^{-1}}^m$-bounded density function 
	\begin{align*}
		\sup_{y \in \mathcal{Y}}\phi_{g_{\#}\mu}(y) \leq \kappa \cdot L_{g^{-1}}^m.
	\end{align*}
\end{lemma}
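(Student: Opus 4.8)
The plan is to invoke the change-of-variables formula of Theorem~\ref{thm:change-of-variables} with $T = g$, $\mu_0 = \mu$, and $\mu_1 = g_{\#}\mu$, and then to bound the resulting Jacobian factor by Lemma~\ref{thm:bound-determinante-of-jacobian-of-bi-lipschitz-map}. First I would observe that a $(L_g, L_{g^{-1}})$-bi-Lipschitz map is, by definition, bijective from $\mathcal{X}$ onto $\mathcal{Y}$ and (locally) Lipschitz, so the hypotheses of Theorem~\ref{thm:change-of-variables} are met; hence $g_{\#}\mu$ admits a density $\phi_{g_{\#}\mu}$ with $\phi_{\mu}(x) = \phi_{g_{\#}\mu}(g(x)) \, \abs{\det \mathcal{J} g(x)}$ for almost every $x \in \mathcal{X}$.

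Next I would rearrange this identity to $\phi_{g_{\#}\mu}(g(x)) = \phi_{\mu}(x) / \abs{\det \mathcal{J} g(x)}$ and apply the lower bound $\abs{\det \mathcal{J} g(x)} \geq L_{g^{-1}}^{-m}$ from Lemma~\ref{thm:bound-determinante-of-jacobian-of-bi-lipschitz-map}, together with the assumed bound $\phi_{\mu}(x) \leq \kappa$. This gives $\phi_{g_{\#}\mu}(g(x)) \leq \kappa \, L_{g^{-1}}^m$ for (almost) every $x \in \mathcal{X}$. Since $g$ is surjective onto $\mathcal{Y}$, every $y \in \mathcal{Y}$ is of the form $y = g(x)$, and taking the supremum over $y \in \mathcal{Y}$ yields $\sup_{y \in \mathcal{Y}} \phi_{g_{\#}\mu}(y) \leq \kappa \, L_{g^{-1}}^m$, which is the claim.

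The only genuine subtlety — and hence the main obstacle — is matching the regularity hypotheses of Theorem~\ref{thm:change-of-variables}, which is stated for open, bounded, connected $\mathcal{X}, \mathcal{Y}$, to the bare assumption $\mathcal{X}, \mathcal{Y} \subset \mathbb{R}^m$ of the lemma. In our applications the domains are boxes such as $[0,1]^m$, so one works on the interior and notes the boundary is Lebesgue-null and thus irrelevant to the density bound; I would add this remark (or simply restrict the statement to such domains). A secondary, purely cosmetic point is that the change-of-variables identity and the Jacobian estimate hold only almost everywhere, so each ``$\sup$'' should be understood as an essential supremum — but that is precisely the sense in which $\kappa$-boundedness of $\phi_{\mu}$ is already meant, so no loss is incurred.
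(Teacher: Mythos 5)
Your proposal is correct and follows essentially the same route as the paper: both rest on Theorem~\ref{thm:change-of-variables} combined with the Jacobian-determinant bounds of Lemma~\ref{thm:bound-determinante-of-jacobian-of-bi-lipschitz-map}; the paper applies the change of variables with $T=g^{-1}$ and the upper bound on $\abs{\det \mathcal{J} g^{-1}}$, whereas you apply it with $T=g$ and the lower bound on $\abs{\det \mathcal{J} g}$, which is the same argument in mirror image. Your remarks about the regularity hypotheses of Theorem~\ref{thm:change-of-variables} and the almost-everywhere caveat are sound and, if anything, slightly more careful than the paper's own treatment.
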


\begin{proof}
	We start by using the change of variables formula from Theorem \ref{thm:change-of-variables}. Let $\mu_0 := g_{\#}\mu$ and $\mu_1 := g^{-1}_{\#} \left(g_{\#}\mu\right) = \mu$. Then, we get for $T=g^{-1}$ and $y \in \mathcal{Y}$
	\begin{align*}
		\phi_{\mu_0}(y) = \phi_{g_{\#}\mu}(y) \overset{\text{Thm} \ref{thm:change-of-variables}}{=} \phi_{\mu}(g^{-1}(y)) \cdot \abs{ \det(\mathcal{J} g^{-1}(y))} \leq \kappa \cdot \abs{ \det(\mathcal{J} g^{-1}(y))} \overset{\text{Lemma } \ref{thm:bound-determinante-of-jacobian-of-bi-lipschitz-map}}{\leq} \kappa \cdot L_{g^{-1}}^m.
	\end{align*}
\end{proof}

\subsection{Generic dispersion statements} \label{sec:generic-dispersion-statements}
We present several generic dispersion lemmas based on the work by \citet{balcanDispersionDataDrivenAlgorithm2018}, refining some of their statements to provide explicit guarantees rather than presenting results in big $O$ notation. This refinement requires minor adjustments to their proofs.
However, first we introduce the Hoeffding inequality, which is another well-known concentration bound. It provides a distribution-independent concentration bound, enabling an accurate sampling-based estimation of the expectation of a single random variable.

\begin{theorem}[{\citet{hoeffdingProbabilityInequalitiesSums1963}}] \label{thm:hoeffding-inequality}
	Let $X=X^{(1)}, \dots, X^{(N)}$ be i.i.d. random variables over $[-1, 1]$. Then, with probability at least $1 - \delta$,
	\begin{align*}
		\abs{\frac{1}{N}\sum_{j=1}^N X^{(j)} - \mathbb{E}\left[X\right] } \leq \sqrt{\frac{2}{N}\log\left(\frac{2}{\delta}\right)}.
	\end{align*}
\end{theorem}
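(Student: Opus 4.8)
The plan is to obtain this as a textbook application of the Chernoff bounding method, so in principle one may simply invoke \citet{hoeffdingProbabilityInequalitiesSums1963}; I nevertheless outline the argument. First I would center the variables: set $Y^{(j)} := X^{(j)} - \mathbb{E}[X^{(j)}]$, so that each $Y^{(j)}$ is i.i.d., zero-mean, and supported in an interval of length at most $2$ because $X^{(j)} \in [-1,1]$. Writing $\bar{Y} := \frac{1}{N}\sum_{j=1}^N Y^{(j)} = \frac{1}{N}\sum_{j=1}^N X^{(j)} - \mathbb{E}[X]$, it suffices to prove the two-sided deviation bound $\mathbb{P}\!\left(|\bar{Y}| \geq s\right) \leq 2 e^{-N s^2/2}$ for every $s > 0$ and then to choose $s$ so that the right-hand side equals $\delta$.

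Second, I would bound the upper tail. For $\lambda > 0$, Markov's inequality applied to $e^{\lambda N \bar{Y}}$ together with independence gives $\mathbb{P}(\bar{Y} \geq s) \leq e^{-\lambda N s}\prod_{j=1}^N \mathbb{E}\!\left[e^{\lambda Y^{(j)}}\right]$. The one genuinely analytic step --- and the hard part --- is Hoeffding's lemma: for a zero-mean random variable $Y$ taking values in $[a,b]$ one has $\mathbb{E}[e^{\lambda Y}] \leq e^{\lambda^2 (b-a)^2/8}$. I would prove it by bounding the convex function $y \mapsto e^{\lambda y}$ by its chord on $[a,b]$, taking expectations to reduce to a deterministic function $\psi(\lambda)$, and checking $\psi(0) = \psi'(0) = 0$ and $\psi''(\lambda) \leq (b-a)^2/4$, so that Taylor's theorem yields the bound. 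With $b - a \leq 2$ this gives $\mathbb{E}[e^{\lambda Y^{(j)}}] \leq e^{\lambda^2/2}$, hence $\mathbb{P}(\bar{Y} \geq s) \leq e^{-\lambda N s + N \lambda^2/2}$.

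Third, I would optimize the free parameter: the exponent $-\lambda N s + N \lambda^2/2$ is minimized at $\lambda = s$, giving $\mathbb{P}(\bar{Y} \geq s) \leq e^{-N s^2/2}$. Running the identical argument with $-Y^{(j)}$ in place of $Y^{(j)}$ controls the lower tail, and a union bound produces $\mathbb{P}(|\bar{Y}| \geq s) \leq 2 e^{-N s^2/2}$. Finally, solving $2 e^{-N s^2/2} = \delta$ for $s$ yields $s = \sqrt{\tfrac{2}{N}\log(2/\delta)}$, which is exactly the claimed bound; since $\mathbb{E}[\bar{Y}] = 0$, the event $|\bar{Y}| \leq s$ is precisely the stated conclusion. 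Everything except Hoeffding's lemma is routine bookkeeping, so I expect the only place requiring care is the moment-generating-function estimate for bounded variables.
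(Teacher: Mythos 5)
Your proof is correct. The paper does not prove this theorem --- it simply cites \citet{hoeffdingProbabilityInequalitiesSums1963} as a classical result --- so there is no internal argument to compare against, but your derivation is the standard one (centering, Chernoff bounding with Hoeffding's lemma giving $\mathbb{E}[e^{\lambda Y^{(j)}}]\le e^{\lambda^2/2}$ for range at most $2$, optimizing $\lambda=s$, a union bound over the two tails, and solving $2e^{-Ns^2/2}=\delta$), and the constants match the stated bound exactly.
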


We restate a well-known folklore lemma next, providing explicit bounds for uniform convergence for non-identical random variables. This is supported by well-established results regarding Rademacher complexity and the VC-dimension~\citep{DBLP:books/daglib/0034861}.
\begin{lemma}[{\citet[Lemma 2, p.23]{balcanDispersionDataDrivenAlgorithm2018}}] \label{thm:uniform-convergence-aux-lemma-dispersion}
	Let $S = \{z_1, \ldots, z_r\} \subset \mathbb{R} $ be a set of random variables where \( z_i \sim p_i \). For any \( \delta > 0 \), with probability at least \( 1 - \delta \) over the draw of the set \( S \),
	
	\[
	\sup_{a,b \in \mathbb{R}, a < b} \left( \left| \sum_{i=1}^{r} \mathds{1}_{z_i \in (a,b)} - \mathbb{E}_{S'} \left[ \sum_{i=1}^{r} \mathds{1}_{z'_i \in (a,b)} \right] \right| \right) \leq \sqrt{2r \log\left(\frac{2}{\delta}\right)} + 4\sqrt{r \log\left(\frac{e r}{2} \right)},
	\]
	
	where \( S' = \{z'_1, \ldots, z'_r\} \) is another sample drawn from \( p_1, \ldots, p_r \).
\end{lemma}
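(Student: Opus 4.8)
The plan is to read the left-hand side as a one-sided uniform deviation bound for the function class $\mathcal{H} = \{h_{a,b}\colon a<b\}$ of open-interval indicators $h_{a,b}(z) = \mathds{1}_{z \in (a,b)}$ on $\mathbb{R}$, and to control it by the textbook route: bounded-differences concentration, then symmetrization, then a Massart-type finite-class bound on the Rademacher complexity of $\mathcal{H}$. Write $\Phi(S) := \sup_{a<b}\bigl|\sum_{i=1}^r h_{a,b}(z_i) - \mathbb{E}_{S'}[\sum_{i=1}^r h_{a,b}(z_i')]\bigr|$. The cases $r \le 1$ are trivial, since then $\Phi(S) \le 1$, which is already below the claimed right-hand side, so assume $r \ge 2$.

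First, concentration of $\Phi$: changing a single coordinate $z_i$ changes $\sum_{i=1}^r h_{a,b}(z_i)$ by at most $1$ for every fixed $(a,b)$, hence changes $\Phi(S)$ by at most $1$, so McDiarmid's bounded-differences inequality gives, with probability at least $1-\delta$, $\Phi(S) \le \mathbb{E}[\Phi(S)] + \sqrt{(r/2)\log(1/\delta)} \le \mathbb{E}[\Phi(S)] + \sqrt{2r\log(2/\delta)}$. Second, bounding the mean: I would introduce a ghost sample $S' = \{z_1',\dots,z_r'\}$ with $z_i' \sim p_i$ independent of $S$ and run the standard symmetrization argument, which uses only independence of the $z_i$ (not identical distribution); the coupling $(z_i,z_i')$ is exchangeable coordinatewise, so inserting Rademacher signs $\sigma_i$ is distribution-preserving and yields $\mathbb{E}[\Phi(S)] \le 2\,\mathbb{E}_{S,\sigma}\bigl[\sup_{a<b}\bigl|\sum_{i=1}^r \sigma_i h_{a,b}(z_i)\bigr|\bigr]$. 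Conditioning on $S$, the vectors $(h_{a,b}(z_1),\dots,h_{a,b}(z_r)) \in \{0,1\}^r$ realize at most $\Pi_{\mathcal{H}}(r)$ distinct values, where $\Pi_{\mathcal{H}}$ is the growth function of $\mathcal{H}$; since the VC dimension of intervals on the line is $2$, the Sauer--Shelah lemma gives $\Pi_{\mathcal{H}}(r) \le (er/2)^2$. Each such vector has Euclidean norm at most $\sqrt{r}$, so Massart's finite-class lemma (applied to these vectors together with their negations, to cover the absolute value) bounds the inner expectation by $\sqrt{r}\,\sqrt{2\log(2\Pi_{\mathcal{H}}(r))}$; combining with the factor $2$ from symmetrization and simplifying the logarithm using $r \ge 2$ to absorb the $\log 2$ overhead yields $\mathbb{E}[\Phi(S)] \le 4\sqrt{r\log(er/2)}$.

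Adding the two bounds gives $\Phi(S) \le 4\sqrt{r\log(er/2)} + \sqrt{2r\log(2/\delta)}$ with probability at least $1-\delta$, which is exactly the claim. The genuinely routine steps are the invocations of McDiarmid, symmetrization, Sauer--Shelah, and Massart; the two points deserving care are (i) verifying that symmetrization goes through verbatim for independent but non-identically-distributed $z_i$ --- it does, because the ghost-sample coupling matches each $z_i$ with an independent copy of itself, so no joint-distribution structure is needed --- and (ii) tracking the universal constants through Sauer--Shelah and Massart so that the Rademacher term comes out as exactly $4\sqrt{r\log(er/2)}$ rather than a slightly larger multiple. I expect (ii) to be the main obstacle, such as it is: the mathematics is entirely standard, and the only real effort is keeping the constants sharp enough to match the stated inequality.
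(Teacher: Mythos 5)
Your overall route is the same as the paper's: both proofs reduce the uniform deviation to the Rademacher complexity of interval indicators via symmetrization, bound that complexity through the growth function of a VC-dimension-$2$ class (Sauer--Shelah plus Massart), and use a bounded-differences inequality to get the high-probability statement. The only structural difference is where McDiarmid is applied --- you apply it to $\Phi(S)$ itself and then bound $\mathbb{E}[\Phi(S)]$, whereas the paper first symmetrizes and then applies McDiarmid to the conditional Rademacher average $\mathbb{E}_\sigma\left[\sup_{a<b}\sum_i\sigma_i\mathds{1}_{z_i\in(a,b)}\right]$ as a function of $S$; these are interchangeable and both yield a deviation term of order $\sqrt{(r/2)\log(1/\delta)}$, comfortably inside the allotted $\sqrt{2r\log(2/\delta)}$.

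The one step that fails as written is your claim that the $\log 2$ from doubling the class "is absorbed" for $r\ge 2$. Massart over the realized vectors and their negations gives $\mathbb{E}[\Phi(S)]\le 2\sqrt{r}\,\sqrt{2\log\!\left(2\Pi_{\mathcal{H}}(r)\right)}\le 2\sqrt{r}\,\sqrt{2\log 2+4\log(er/2)}$, which strictly exceeds $4\sqrt{r\log(er/2)}=2\sqrt{r}\,\sqrt{4\log(er/2)}$ for every $r$, so the intermediate bound $\mathbb{E}[\Phi(S)]\le 4\sqrt{r\log(er/2)}$ does not follow from your computation. The lemma is still recoverable: the excess is at most $2\sqrt{r}\left(\sqrt{4L+2\log 2}-\sqrt{4L}\right)\le (\log 2)\sqrt{r}/\sqrt{L}\le (\log 2)\sqrt{r}$ with $L=\log(er/2)\ge 1$, while your McDiarmid term undershoots the budgeted $\sqrt{2r\log(2/\delta)}$ by at least about $\sqrt{r}$ uniformly in $\delta$, so the sum of your two terms does sit below the stated right-hand side --- but that redistribution has to be carried out explicitly rather than asserted termwise. (The paper avoids the $2\log 2$ by bounding the one-sided Rademacher average $\mathbb{E}_\sigma[\sup_h\sum_i\sigma_ih(z_i)]$ without the inner absolute value, which gives exactly $2\sqrt{r\log(er/2)}$ per factor of two; your handling of the absolute value via the doubled class is actually the more careful of the two, it just costs the constant you tried to wave away.)
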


\begin{proof}
	Let $\sigma$ be an $r$-dimensional vector of Rademacher random variables. The empirical Rademacher complexity is given by
	\begin{align*}
		\hat{R}_S(G) := \mathbb{E}_{\sigma} \left[\sup_{a,b \in \mathbb{R}, a < b} \frac{1}{r} \sum_{i=1}^{r} \sigma_i \mathds{1}_{z_i \in (a, b)} \right],
	\end{align*}
	where $G$ denotes the set of indicator functions over intervals. The empirical Rademacher complexity can be bounded via the VC-dimension $d = VCdim(G)$ by
	\begin{align*}
		\hat{R}_S(G) \leq \sqrt{\frac{2 \log\left(\frac{e r}{d}\right)}{r}},
	\end{align*}
	which uses Corollary 3.1 and 3.3 by \citet{DBLP:books/daglib/0034861}, and that we can bound the empirical Rademacher complexity by the Rademacher complexity for distribution-independent bounds.
	Therefore,
	\begin{align} \label{equ:uniform-convergence-aux-lemma-dispersion-1}
		r \hat{R}_S(G) \leq 2 \sqrt{r \log\left(\frac{e r}{d}\right)}.
	\end{align}
	Following the proof by \citet[Lemma 2, p.23]{balcanDispersionDataDrivenAlgorithm2018}, we derive
	\begin{align} \label{equ:uniform-convergence-aux-lemma-dispersion-2}
		\sup_{a,b \in \mathbb{R}, a < b} \left( \sum_{i=1}^{r} \mathds{1}_{z_i \in (a, b)} - \mathbb{E}_{S^\prime} \left[ \sum_{i=1}^{r} \mathds{1}_{z^\prime_i \in (a, b)} \right] \right) \leq 2 \mathbb{E}_{\sigma, S} \left[\sup_{a,b \in \mathbb{R}, a < b}  \sum_{i=1}^{r} \sigma_i \mathds{1}_{z_i \in (a, b)} \right],
	\end{align}
	and
	\begin{align} \label{equ:uniform-convergence-aux-lemma-dispersion-3}
		\abs{\mathbb{E}_{\sigma} \left[\sup_{a,b \in \mathbb{R}, a < b} \sum_{i=1}^{r} \sigma_i \mathds{1}_{z_i \in (a, b)} \right] - \mathbb{E}_{\sigma, S} \left[\sup_{a,b \in \mathbb{R}, a < b}  \sum_{i=1}^{r} \sigma_i \mathds{1}_{z_i \in (a, b)} \right]} \leq \sqrt{\frac{r}{2} \log\left(\frac{2}{\delta} \right) }.
	\end{align}
	Combining the results up until now results in
	\begin{align*}
		&\sup_{a,b \in \mathbb{R}, a < b} \left( \left| \sum_{i=1}^{r} 1_{z_i \in (a,b)} - \mathbb{E}_{S'} \left[ \sum_{i=1}^{r} 1_{z'_i \in (a,b)} \right] \right| \right)\\
		& \overset{\text{Equ}. \ref{equ:uniform-convergence-aux-lemma-dispersion-2}}{\leq} 2 \mathbb{E}_{\sigma, S} \left[\sup_{a,b \in \mathbb{R}, a < b}  \sum_{i=1}^{r} \sigma_i \mathds{1}_{z_i \in (a, b)} \right]\\
		&\leq 2 \abs{\mathbb{E}_{\sigma} \left[\sup_{a,b \in \mathbb{R}, a < b} \sum_{i=1}^{r} \sigma_i \mathds{1}_{z_i \in (a, b)} \right] - \mathbb{E}_{\sigma, S} \left[\sup_{a,b \in \mathbb{R}, a < b}  \sum_{i=1}^{r} \sigma_i \mathds{1}_{z_i \in (a, b)} \right]} \\
		&+ 2\abs{\mathbb{E}_{\sigma} \left[\sup_{a,b \in \mathbb{R}, a < b}  \sum_{i=1}^{r} \sigma_i \mathds{1}_{z_i \in (a, b)} \right] }\\
		& \overset{\text{Equ}. \ref{equ:uniform-convergence-aux-lemma-dispersion-1} \text{ and } \ref{equ:uniform-convergence-aux-lemma-dispersion-3}}{\leq} \sqrt{2r \log\left(\frac{2}{\delta}\right)} + 4\sqrt{r \log\left(\frac{e r}{2} \right)}.
	\end{align*}
	
\end{proof}

To prove dispersion we will use the following probabilistic lemma, showing that samples from $\kappa$-bounded distributions do not tightly concentrate.\\

\begin{lemma}[{\citet[Lemma 1, p.23]{balcanDispersionDataDrivenAlgorithm2018}}] \label{thm:aux-dispersion-lemma-indepdent-and-buckets}
	Let \( S = \{z_1, \ldots, z_r\} \subset \mathbb{R} \) be a collection of samples where each \( z_i \) is drawn from a \( \kappa \)-bounded distribution with density function \( p_i \). For any \( \delta \geq 0 \), the following statements hold with probability at least \( 1 - \delta \):
	
	\begin{enumerate}
		\item If the \( z_i \) are independent, then every interval of width \( w \) contains at most \( k = rw\kappa + \sqrt{2r \log\left(\frac{2}{\delta}\right)} + 4\sqrt{r \log\left(\frac{e r}{2} \right)} \) samples.
		\item If the samples can be partitioned into \( P \) buckets \( S_1, \ldots, S_P \) such that each \( S_i \) contains independent samples and \( |S_i| \leq M \), then every interval of width \( w \) contains at most \( k = Pw \kappa M + P\sqrt{2M \log\left(\frac{2P}{\delta}\right)} + 4P\sqrt{M \log\left(\frac{e M}{2} \right)} \) samples.
	\end{enumerate}
\end{lemma}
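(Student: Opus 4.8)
The plan is to obtain both parts from the interval uniform-convergence bound of Lemma~\ref{thm:uniform-convergence-aux-lemma-dispersion}, using only the elementary fact that a $\kappa$-bounded density places mass at most $\kappa w$ on any interval of length at most $w$.

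For part~1, I would fix $w>0$ and, for an arbitrary interval $(a,b)$ with $b-a\le w$, express the number of samples landing in it as $\sum_{i=1}^{r}\mathds{1}_{z_i\in(a,b)}$. Over an independent copy $S'$ its expectation is $\sum_{i=1}^{r}\int_a^b p_i(x)\,dx$, which is at most $r\kappa(b-a)\le r\kappa w$ uniformly over all such intervals since each $p_i\le\kappa$. Lemma~\ref{thm:uniform-convergence-aux-lemma-dispersion} then guarantees that, with probability at least $1-\delta$ over the draw of $S$, the empirical count exceeds this expectation by at most $\sqrt{2r\log(2/\delta)}+4\sqrt{r\log(er/2)}$ simultaneously for every interval. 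Adding the two bounds shows that on this event every interval of width at most $w$ contains at most $k=rw\kappa+\sqrt{2r\log(2/\delta)}+4\sqrt{r\log(er/2)}$ samples, which is the claimed bound.

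For part~2, I would decompose $S=S_1\cup\cdots\cup S_P$ with $|S_j|=M_j\le M$, where each $S_j$ consists of independent samples, and apply part~1 to each bucket $S_j$ with confidence parameter $\delta/P$: with probability at least $1-\delta/P$, every interval of width $w$ contains at most $M_j w\kappa+\sqrt{2M_j\log(2P/\delta)}+4\sqrt{M_j\log(eM_j/2)}$ samples from $S_j$. Since each term is nondecreasing in $M_j$ on $[1,\infty)$, this is at most $Mw\kappa+\sqrt{2M\log(2P/\delta)}+4\sqrt{M\log(eM/2)}$. A union bound over the $P$ buckets makes all $P$ of these events hold at once with probability at least $1-\delta$ (no independence across buckets is needed here), and since the count in any interval is the sum of the $P$ per-bucket counts, summing gives at most $Pw\kappa M+P\sqrt{2M\log(2P/\delta)}+4P\sqrt{M\log(eM/2)}=k$, as claimed.

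I do not expect a genuine obstacle: the hard part---taming the supremum over the infinite family of intervals via a VC/Rademacher argument---is already encapsulated in Lemma~\ref{thm:uniform-convergence-aux-lemma-dispersion}. The only points requiring care are bounding the interval-dependent expectation uniformly by $r\kappa w$ before combining in part~1, and, in part~2, checking the monotonicity in $M_j$ that lets us replace $M_j$ by $M$ and allocating the per-bucket failure probability as $\delta/P$ so the union bound closes.
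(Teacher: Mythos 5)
Your proof takes essentially the same route as the paper: Part~1 combines the uniform bound of Lemma~\ref{thm:uniform-convergence-aux-lemma-dispersion} with the elementary bound $\mathbb{E}[\text{count in }(a,b)]\le r\kappa w$, and Part~2 applies Part~1 per bucket with a union bound. Your write-up is, if anything, slightly more careful than the paper's, e.g.\ in noting that the per-bucket bound is nondecreasing in $M_j$ (and indeed $M\mapsto M\log(eM/2)$ is increasing for $M\ge 1$), so no gap.
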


\begin{proof}
	We consider Part 1 first. The expected number of samples that land in an interval $(a, b)$ of width $w$ is at most $w \kappa r$, since the probability that $z_i \in (a,b)$ is bounded by $w \kappa$. By Lemma \ref{thm:uniform-convergence-aux-lemma-dispersion}, we have that with probability at least $1 - \delta$ over the draw of the set $S$,
	\begin{align} \label{equ:aux-dispersion-lemma-indepdent-and-buckets-1}
		\sup_{a,b \in \mathbb{R}, a < b} \left( \left| \sum_{i=1}^{r} 1_{z_i \in (a,b)} - \mathbb{E}_{S'} \left[ \sum_{i=1}^{r} 1_{z'_i \in (a,b)} \right] \right| \right) \leq \sqrt{2r \log\left(\frac{2}{\delta}\right)} + 4\sqrt{r \log\left(\frac{e r}{2} \right)},
	\end{align}
	where $S^\prime = {z^\prime_1, \dots, z_r^\prime}$ is another sample from $p_1, \dots, p_r$. The number of elements in an interval $(a, b)$ satisfies
	\begin{align} \label{equ:aux-dispersion-lemma-indepdent-and-buckets-2}
		\sum_{i=1}^r \mathds{1}_{z_i \in (a, b)} \leq \mathbb{E}_{S^\prime} \left[\sum_{i=1}^r \mathds{1}_{z^\prime_i \in (a, b)} \right] + \abs{\sum_{i=1}^r \mathds{1}_{z_i \in (a, b)} - \mathbb{E}_{S^\prime} \left[\sum_{i=1}^r \mathds{1}_{z^\prime_i \in (a, b)} \right]}.
	\end{align}
	Combining Equations \ref{equ:aux-dispersion-lemma-indepdent-and-buckets-1} and \ref{equ:aux-dispersion-lemma-indepdent-and-buckets-2} implies that with probability of at least $1 - \delta$, every interval $(a, b)$ of width $w$ satisfies $\abs{S \cap (a, b)} \leq rw\kappa + \sqrt{2r \log\left(\frac{2}{\delta}\right)} + 4\sqrt{r \log\left(\frac{e r}{2} \right)}$.
	
	Part 2 follows by applying Part 1 to each bucket $S_i$ and taking a union bound over the buckets.
\end{proof}

\section{Dispersion and pseudo-dimension guarantees} \label{sec:dispersion-and-pseudo-dim-appendix}
We provide detailed statements regarding the dispersion and pseudo-dimension guarantees for several mechanisms. The descriptions of these market mechanisms are adapted from \citet{balcanEstimatingApproximateIncentive2019}.

\subsection{First-price single-item auction} \label{sec:disp-and-pdim-first-price-single-item-auction}
In the first-price auction, the item is awarded to the highest bidder, who then pays the amount of its bid. Each agent $i$ has a valuation $\theta_i \in [0, 1]$ for the item and submits a bid $b_i \in [0, 1]$. The utility function for agent $i$ is given by $u_{i, \text{M}}(\theta_i, b_i, b_{-i}) = \mathds{1}_{\{b_i > \norm{b_{-i}}_{\infty} \}}\left(\theta_i - b_i\right)$, where $\norm{b{-i}}_{\infty}$ denotes the highest bid among the other bidders.

In the context of independent prior distributions (Section~\ref{sec:independent-prior-statements}), we show Assumption~\ref{ass:general-dispersion-holds-independent-prior} is satisfied with the following statement.
\begin{theorem}
	Assume every agent $i \in [n]$ has a $\kappa$-bounded marginal prior distribution $F_{\theta_i}$. Let $\beta \in \tilde{\Sigma}$ be a strategy profile of bi-Lipschitz bidding strategies.
	With probability $1 - \delta$ for all agents $i \in [n]$ over the draw of the $n$ datasets $\mathcal{D}^{\beta_{-i}} := \left\{\beta_{-i}(\theta_{-i}^{(1)}), \dots, \beta_{-i}(\theta_{-i}^{(N)}) \right\}$,
	\begin{enumerate}
		\item For any $\theta_i\in [0, 1]$, the functions $u_{i, \text{M}}(\theta_i, \argdot, \beta_{-i}(\theta_{-i}^{(1)})), \dots, u_{i, \text{M}}(\theta_i, \argdot, \beta_{-i}(\theta_{-i}^{(N)}))$ are piecewise $1$-Lipschitz and $\left(w_i, v_i\left(w_i \right)\right)$-dispersed with $v_i\left(w_i \right) := (n-1)w_i N \kappa L_{\beta^{-1}_{\text{max}}} + (n-1) \sqrt{2N \log\left(\frac{2n(n-1)}{\delta}\right)} + 4(n-1) \sqrt{N \log\left(\frac{eN}{2}\right)}$.
		\item For any $b_i \in [0, 1]$ and $b_{-i} \in [0, 1]^{n-1}$, the function $u_{i, \text{M}}(\argdot, b_i, b_{-i})$ is $1$-Lipschitz continuous.
	\end{enumerate}
\end{theorem}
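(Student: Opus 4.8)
The plan is to handle the two claims separately, both by elementary case analysis on the first-price allocation rule, and to obtain the dispersion constant in claim~(1) by combining the bidding-density bound of Theorem~\ref{thm:bounded-prior-density-remains-bounded-under-bi-Lipschitz} with the generic concentration Lemma~\ref{thm:aux-dispersion-lemma-indepdent-and-buckets}.

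For claim~(2), fix $b_i \in [0,1]$ and $b_{-i} \in [0,1]^{n-1}$. Since the allocation indicator $\mathds{1}_{\{b_i > \norm{b_{-i}}_\infty\}}$ does not depend on $\theta_i$, the map $u_{i, \text{M}}(\argdot, b_i, b_{-i})$ is either identically $0$ or equals $\theta_i \mapsto \theta_i - b_i$, and either way it is $1$-Lipschitz; nothing probabilistic is needed here.

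For claim~(1), fix an agent $i$. First I would establish the piecewise-Lipschitz structure: for each sample $\beta_{-i}(\theta_{-i}^{(j)})$ and each $\theta_i$, the function $b_i \mapsto \mathds{1}_{\{b_i > \norm{\beta_{-i}(\theta_{-i}^{(j)})}_\infty\}}(\theta_i - b_i)$ is $0$ on $[0, \norm{\beta_{-i}(\theta_{-i}^{(j)})}_\infty]$ and affine with slope $-1$ on $(\norm{\beta_{-i}(\theta_{-i}^{(j)})}_\infty, 1]$, hence piecewise $1$-Lipschitz with a single discontinuity at $\norm{\beta_{-i}(\theta_{-i}^{(j)})}_\infty$, which coincides with one of the coordinates $\beta_l(\theta_l^{(j)})$, $l \in [n]\setminus\{i\}$. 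It therefore suffices to bound, with probability $1 - \delta/n$, the number of points of $\mathcal{C} := \bigcup_{j=1}^N \{\beta_l(\theta_l^{(j)})\}_{l \in [n]\setminus\{i\}}$ in any width-$w_i$ interval. Now the key step: for each fixed $l$, the points $\beta_l(\theta_l^{(1)}), \dots, \beta_l(\theta_l^{(N)})$ are i.i.d.\ from the bidding distribution $F^{\beta_l}_{\theta_l}$, whose density is bounded by $\kappa L_{\beta_l^{-1}} \le \kappa L_{\beta^{-1}_{\text{max}}}$ by Theorem~\ref{thm:bounded-prior-density-remains-bounded-under-bi-Lipschitz} (this is exactly where the hypothesis $\beta \in \tilde{\Sigma}$ is used). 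Thus $\mathcal{C}$ splits into $P = n-1$ buckets of $M = N$ independent $(\kappa L_{\beta^{-1}_{\text{max}}})$-bounded samples, and the bucketed part of Lemma~\ref{thm:aux-dispersion-lemma-indepdent-and-buckets} applied with failure parameter $\delta/n$ yields precisely the stated $v_i(w_i)$, the $\log(2P/\delta')$ term becoming $\log(2n(n-1)/\delta)$. A union bound over the $n$ agents completes the argument.

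I expect no genuine conceptual obstacle; the work is essentially bookkeeping — threading the confidence budget $\delta/n$ correctly through Lemma~\ref{thm:aux-dispersion-lemma-indepdent-and-buckets} and the final union bound, and noting that the discontinuity of $u_{i,\text{M}}(\theta_i, \argdot, \beta_{-i}(\theta_{-i}^{(j)}))$ is independent of $\theta_i$, so a single dispersion event serves simultaneously for all $\theta_i$. The one place the hypotheses do real work is the invocation of Theorem~\ref{thm:bounded-prior-density-remains-bounded-under-bi-Lipschitz}, which is exactly why bi-Lipschitz rather than merely continuous strategies are needed.
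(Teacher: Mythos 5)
Your proposal is correct and follows essentially the same route as the paper's own proof: identify the single discontinuity of $u_{i,\text{M}}(\theta_i,\argdot,\beta_{-i}(\theta_{-i}^{(j)}))$ at $\norm{\beta_{-i}(\theta_{-i}^{(j)})}_\infty$, bound the bidding density by $\kappa L_{\beta^{-1}_{\text{max}}}$ via Theorem~\ref{thm:bounded-prior-density-remains-bounded-under-bi-Lipschitz}, partition the discontinuity points into $n-1$ buckets of $N$ i.i.d.\ samples each and apply the bucketed case of Lemma~\ref{thm:aux-dispersion-lemma-indepdent-and-buckets} at level $\delta/n$, then union-bound over agents; the observation that part~(2) holds deterministically since the allocation is fixed also matches the paper.
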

\begin{proof}
	We start with the first part of the statement. Consider $i \in [n]$ and $\beta_{-i}(\theta_{-i}^{(j)}) \in \mathcal{D}^{\beta_{-i}}$ arbitrary. For any $\theta_i \in \Theta_i$ and bid $b_i \in \Theta_i$, we have $u_{i, \text{M}}(\theta_i, b_i, \beta_{-i}(\theta_{-i}^{(j)})) = \mathds{1}_{\left\{b_i > \norm{\beta_{-i}(\theta_{-i}^{(j)})}_{\infty} \right\}}\left(\theta_i - b_i\right)$. Therefore, if $b_i \leq \norm{\beta_{-i}(\theta_{-i}^{(j)})}_{\infty}$, then $u_{i, \text{M}}(\theta_i, b_i, \beta_{-i}(\theta_{-i}^{(j)}))$ is a constant function in $b_i$. On the other hand, if $b_i > \norm{\beta_{-i}(\theta_{-i}^{(j)})}_{\infty}$, then $u_{i, \text{M}}(\theta_i, b_i, \beta_{-i}(\theta_{-i}^{(j)}))$ is linear in $b_i$ with slope of $-1$. Consequently, we have for all $\theta_i \in \Theta_i$ and $\beta_{-i}(\theta_{-i}^{(j)}) \in \mathcal{D}^{\beta_{-i}}$, the function $u_{i, \text{M}}(\theta_i, \argdot, \beta_{-i}(\theta_{-i}^{(j)}))$ is piecewise $1$-Lipschitz continuous with a discontinuity at $\norm{\beta_{-i}(\theta_{-i}^{(j)})}_{\infty}$.
	
	We proceed with the dispersion constants $\left(w_i, v_i\left(w_i \right)\right)$. As discussed previously, the function $u_{i, \text{M}}(\theta_i, \argdot, \beta_{-i}(\theta_{-i}^{(j)}))$ can only have a discontinuity at a point in the set $\left\{\beta_l(\theta_l^{(j)}) \right\}_{\left\{l \in [n]\setminus\{i\} \right\}}$. Therefore, it is sufficient to guarantee with probability $1 - \frac{\delta}{n}$, at most $v_i\left(w_i \right)$ points in the set $\mathcal{C} := \bigcup_{j=1}^N \left\{\beta_l(\theta_l^{(j)}) \right\}_{\left\{l \in [n]\setminus\{i\}\right\}}$ fall within an interval of width $w_i$. The statement then follows over a union bound over the $n$ bidders. We apply Lemma \ref{thm:aux-dispersion-lemma-indepdent-and-buckets} in Appendix~\ref{sec:generic-dispersion-statements} to show this statement. For $l \in [n]\setminus \{i\}$, define $\mathcal{C}_l := \left\{\beta_l(\theta_l^{(j)}) \right\}_{\left\{j \in [N]\right\}}$. Then, within each $\mathcal{C}_l$, the samples are independently drawn from the bidding distribution $F^{\beta_l}_{\theta_l}$. Per assumption, the marginal prior $F_{\theta_l}$ is a $\kappa$-bounded distribution. By Theorem~\ref{thm:bounded-prior-density-remains-bounded-under-bi-Lipschitz}, the bidding distribution's density function satisfies $\norm{\phi_{F^{\beta_l}_{\theta_l}}}_\infty \leq L_{\beta_l^{-1}} \cdot \kappa \leq L_{\beta^{-1}_{\text{max}}} \cdot \kappa$. Therefore, the samples $\beta_l(\theta_l^{(j)})$ are drawn from a $\kappa L_{\beta^{-1}_{\text{max}}}$-bounded distribution. Therefore, with probability at most $1 - \frac{\delta}{n}$ every interval of width $w_i$ contains at most
	\begin{align*}
		v_i\left(w_i \right) = &(n-1) w_i \kappa L_{\beta^{-1}_{\text{max}}} N + (n-1) \sqrt{2N \log\left(\frac{2n(n-1)}{\delta} \right)} \\
		&+ 4(n-1) \sqrt{N \log\left( \frac{eN}{2}\right)}.
	\end{align*}
	The second statement can be seen as follows. For any given bids $b_i, b_{-i}$, the allocation is fixed. Therefore, $u_{i, \text{M}}(\argdot, b_i, b_{-i})$ is either constant if $b_i \leq \norm{b_{-i}}_\infty$ or linear with slope $1$ if $b_i > \norm{b_{-i}}_\infty$.
\end{proof}

\begin{customthm}{7.1}
	Let $(\beta_i, \beta_{-i}) \in \tilde{\Sigma}$. Assume that for each agent $i \in [n]$ and segment $B_k \in \mathcal{B}_i$, there exists $\kappa_{i, B_k}>0$, such that the conditional marginal distributions $F_{\giventhat{o_j}{\left\{o_i \in B_k \right\}}}$ for $j \in [n] \setminus \{i\}$ are $\kappa_{i, B_k}$ bounded. Then, for $w_i>0$, with probability at least $1 - \delta$ over the draw of the sets $\left\{\giventhat{\mathcal{D}^{\beta_{-i}}(\mathcal{B}_i)}{i \in [n]} \right\}$ for every $i \in [n]$ and $B_k \in \mathcal{B}_i$, the functions
	$u_{i, \text{M}}\left(\theta_i^{(1)}, \cdot, \beta_{-i}(o_{-i}^{(1)}) \right), \dots, u_{i, \text{M}}\left(\theta_i^{(N_{B_k})}, \cdot, \beta_{-i}(o_{-i}^{(N_{B_k})}) \right)$ are piecewise 1-Lipschitz and $\left(w_i, v_{i, B_k}\left(w_{i} \right) \right)$-dispersed, with
	$v_{i, B_k}\left(w_{i} \right) := (n-1)w_i N_{B_k} \kappa_{i, B_k} L_{\beta^{-1}_{\text{max}}} 
	+ (n-1) \sqrt{2 N_{B_k} \log \left(\frac{2n(n-1) N_{\mathcal{B}_{\text{max}}}}{\delta} \right)} + 4(n-1) \sqrt{N_{B_k} \log \left(\frac{e N_{B_k}}{2} \right)}$.
\end{customthm}
\begin{proof}
	We start with the first part of the statement. Consider $i \in [n]$ and $\beta_{-i}(o_{-i}^{(j)}) \in \mathcal{D}^{\beta_{-i}}(B_k)$ arbitrary. For any $\theta_i \in \Theta_i$ and bid $b_i \in \Theta_i$, we have $u_{i, \text{M}}(\theta_i, b_i, \beta_{-i}(o_{-i}^{(j)})) = \mathds{1}_{\left\{b_i > \norm{\beta_{-i}(o_{-i}^{(j)})}_{\infty} \right\}}\left(\theta_i - b_i\right)$. Therefore, if $b_i \leq \norm{\beta_{-i}(o_{-i}^{(j)})}_{\infty}$, then $u_{i, \text{M}}(\theta_i, b_i, \beta_{-i}(o_{-i}^{(j)}))$ is a constant function in $b_i$. On the other hand, if $b_i > \norm{\beta_{-i}(o_{-i}^{(j)})}_{\infty}$, then $u_{i, \text{M}}(\theta_i, b_i, \beta_{-i}(o_{-i}^{(j)}))$ is linear in $b_i$ with slope of $-1$. Consequently, we have for all $\left(\theta_i^{(j)}, \beta_{-i}(o_{-i}^{(j)}) \right) \in \mathcal{D}^{\beta_{-i}}$, the function $u_{i, \text{M}}(\theta_i^{(j)}, \argdot, \beta_{-i}(o_{-i}^{(j)}))$ is piecewise $1$-Lipschitz continuous with a discontinuity at $\norm{\beta_{-i}(o_{-i}^{(j)})}_{\infty}$.
	
	Fix agent $i \in [n]$ and $B_k \in \mathcal{B}_i$. For any $\theta_i \in \Theta_i$ and $b_{-i} \in \mathcal{A}_{-i}$, the function $u_{i, \text{M}}(\theta_i, \argdot, b_{-i})$ can only have a discontinuity at a point in the set $\left\{b_l \right\}_{l \in [n] \setminus \{i\}}$. Therefore, it is sufficient to guarantee with probability at least $1 - \frac{\delta}{n N_{\mathcal{B}_{\text{max}}}}$, at most $v_{i, B_k}\left(w_{i} \right)$ points in the set $C := \bigcup_{j=1}^{N_{B_k}} \left\{\beta_l(o_l^{(j)}) \right\}_{l \in [n] \setminus \{i\}}$ fall within any interval of width $w_i$. The statement then follows over a union bound over the $n$-bidders and up to $N_{\mathcal{B}_{\text{max}}}$ segments.
	
	We apply Lemma~\ref{thm:aux-dispersion-lemma-indepdent-and-buckets}. For $l \in [n] \setminus \{i\}$ define $C_l := \left\{\beta_l(o_l^{(j)}) \right\}_{j \in [N_{B_k}]}$. Within each $C_l$, the samples are independently drawn from the marginal conditional bidding distribution $F_{\giventhat{o_l}{\{o_i \in B_k\}}}^{\beta_l}$. Per assumption, $F_{\giventhat{o_l}{\{o_i \in B_k\}}}$ is a $\kappa_{i, B_k}$-bounded distribution. By Theorem~\ref{thm:bounded-prior-density-remains-bounded-under-bi-Lipschitz}, the conditional bidding distribution's density function satisfies $\norm{\phi_{F_{\giventhat{o_l}{\{o_i \in B_k\}}}^{\beta_l}}}_\infty \leq L_{\beta^{-1}_l} \cdot \kappa_{i, B_k} \leq L_{\beta^{-1}_{\text{max}}} \cdot \kappa_{i, B_k}$.
	The samples $\beta_l(o_l^{(j)})$ for $1 \leq j \leq N_{B_k}$ are drawn from a $L_{\beta^{-1}_{\text{max}}} \cdot \kappa_{i, B_k}$-bounded distribution. Therefore, with probability at most $1- \frac{\delta}{n N_{\mathcal{B}_{\text{max}}}}$ any interval of width $w_i$ contains at most
	\begin{align*}
		v_{i, B_k}\left(w_{i} \right) :=& (n-1)w_i \kappa_{i, B_k} L_{\beta^{-1}_{\text{max}}} \cdot N_{B_k} \\
		&+ (n-1) \sqrt{2 N_{B_k} \log \left(\frac{2n(n-1) N_{\mathcal{B}_{\text{max}}}}{\delta} \right)} + 4(n-1) \sqrt{N_{B_k} \log \left(\frac{e N_{B_k}}{2} \right)} \text{ samples}.
	\end{align*}
\end{proof}

\begin{theorem}[{\citet[Theorem~3.9]{balcanEstimatingApproximateIncentive2019}}]
	$\text{Pdim}\left(\hat{\mathcal{F}}_{i, \text{M}} \right) = 2$ for all $i \in [n]$.
\end{theorem}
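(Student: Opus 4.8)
The plan is to reduce to a one-dimensional function class and then prove the two matching inequalities $\text{Pdim}(\hat{\mathcal F}_{i,\text{M}})\ge 2$ and $\text{Pdim}(\hat{\mathcal F}_{i,\text{M}})\le 2$. For the first-price single-item auction, $u_{i,\text{M}}(\theta_i,b_i,b_{-i})=\mathds{1}_{\{b_i>\norm{b_{-i}}_\infty\}}(\theta_i-b_i)$ depends on $b_{-i}$ only through $s:=\norm{b_{-i}}_\infty\in[0,1]$, and (assuming at least two bidders) the map $b_{-i}\mapsto s$ is onto $[0,1]$. Hence $\text{Pdim}(\hat{\mathcal F}_{i,\text{M}})$ equals the pseudo-dimension of the class $\mathcal G=\{g_{c,t}:s\mapsto c\,\mathds{1}_{\{s<t\}}\}$ on $[0,1]$, where $t=b_i\in[0,1]$ and $c=\theta_i-b_i$. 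I will repeatedly use the following structure: each $g_{c,t}$ equals the single value $c$ on the ``active'' set $\{s<t\}$ and $0$ off it, so on any evaluation set $s_1<\cdots<s_d$ the active set is a prefix $\{s_1,\dots,s_k\}$, and a point $s_j$ lies in the target set $T$ (i.e., $g(s_j)\le z_j$) iff either $j\le k$ and $c\le z_j$, or $j>k$ and $z_j\ge 0$.

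\textbf{Lower bound.} I would shatter the two points $s_1=1/3$, $s_2=2/3$ (each realized by some $b_{-i}$ with the prescribed infinity norm) using targets $z_1=-1/2$, $z_2=-1/4$: the empty pattern is witnessed by $b_i=0$ (so $g\equiv 0$), $\{s_1\}$ by $(\theta_i,b_i)=(0,1/2)$, $\{s_2\}$ by $(\theta_i,b_i)=(5/8,1)$, and $\{s_1,s_2\}$ by $(\theta_i,b_i)=(0,1)$; verifying the eight inequalities is immediate from the formula for $g_{c,t}$.

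\textbf{Upper bound.} Suppose, for contradiction, that three points were shattered. Ties are ruled out first: if $s_a=s_b$ then $g(s_a)=g(s_b)$ for every $g$, and realizing both $T=\{s_a\}$ and $T=\{s_b\}$ forces $z_b<z_a$ and $z_a<z_b$. So assume $s_1<s_2<s_3$ with witnessing targets $z_1,z_2,z_3$. The crux is to show that the ``alternating'' patterns $T=\{s_1,s_3\}$ and $T=\{s_2\}$ cannot both be realized, by a case split on the sign of $z_2$ that exploits the prefix structure. If $z_2\ge 0$: realizing $\{s_1,s_3\}$ requires $s_2$ active (otherwise $g(s_2)=0\le z_2$ puts $s_2\in T$) with value $v>z_2$, hence $s_1$ is active too and $v\le z_1$, giving $z_1>z_2$, while $s_3\in T$ forces $z_3\ge 0$; but then $\{s_2\}$ is unrealizable, since a witness with $s_2$ inactive has $s_3$ inactive and therefore $s_3\in T$, and a witness with $s_2$ active has $v\le z_2$ and $v>z_1$, forcing $z_1<z_2$. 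The case $z_2<0$ is the mirror image with the two patterns swapped: realizing $\{s_2\}$ forces $z_1<z_2$ and $z_3<0$, after which $\{s_1,s_3\}$ fails by the symmetric argument. Either way shattering is impossible, so $\text{Pdim}(\hat{\mathcal F}_{i,\text{M}})\le 2$, and with the lower bound it equals $2$. The main obstacle is exactly this upper-bound case analysis; every step leans on the fact that the active region of $g_{c,t}$ is a prefix of $\{s_1,\dots,s_d\}$, which is precisely what obstructs the coexistence of $\{s_2\}$ and $\{s_1,s_3\}$.
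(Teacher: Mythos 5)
Your proof is correct and self-contained. Note that the paper itself does not prove this statement; it cites it as Theorem~3.9 of \citet{balcanEstimatingApproximateIncentive2019}, so there is no paper proof to compare against line by line. Your argument is a clean direct one: the reduction to $\mathcal{G}=\{s\mapsto c\,\mathds{1}_{\{s<t\}}\}$ via the surjection $b_{-i}\mapsto\|b_{-i}\|_\infty$ is sound (and the tie-breaking remark correctly handles the case of colliding preimages), the two-point shattering witnesses all lie in the admissible parameter region $c=\theta_i-b_i\in[-b_i,1-b_i]$, and the three-point impossibility argument correctly exploits the fact that each $g_{c,t}$ is constant on the prefix $\{s<t\}$ and zero on the suffix. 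The case split on $\operatorname{sign}(z_2)$ is the right mechanism: in either case, realizing one of the alternating patterns forces a strict inequality between $z_1$ and $z_2$ (and a sign condition on $z_3$) that the other alternating pattern contradicts. This is the same flavor of argument used by \citet{balcanEstimatingApproximateIncentive2019} for this mechanism (a direct combinatorial analysis of a one-dimensional threshold-with-payoff family), rather than the general $(m,t)$-delineability machinery, which would only give an $O(\log)$-type upper bound rather than the exact value $2$.
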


\subsection{First-price combinatorial auction}

There are $l$ items for sale. An agent's valuation space is represented by $\Theta_i = [0, 1]^{2^l}$, indicating its value for each possible bundle $a \subset [l]$. The valuation and bid for a bundle $a$ are denoted by $\theta_i[a]$ and $b_i[a]$, respectively. The allocation $x_i(b_i, b_{-i}) \in {0, 1}^{2^l}$ is determined as the solution to the \emph{winner determination} problem:
\begin{align*}
	& \text{maximize } \sum_{i \in [n]} x_i \cdot b_i \\
	& \text{subject to } x_i \cdot x_j = 0 \text{ for all } i, j \in [n], i \neq j.
\end{align*}
The price for agent $i$ is then given by $p_i(b_i, b_{-i}) = b_i \cdot x_i(b_i, b_{-i})$. 

We start with the dispersion guarantees. 
\begin{theorem}
	Let $(\beta_i, \beta_{-i}) \in \tilde{\Sigma}$. Assume that for each pair of agents $i, j \in [n]$ and each pair of bundles $a, a^\prime \subset [l]$, the joint marginal prior distribution $F_{\theta_i[a], \theta_j[a^\prime]}$ is $\kappa$-bounded. With probability $1 - \delta$ for all agents $i \in [n]$ over the draw of the $n$ datasets $\mathcal{D}^{\beta_{-i}} := \left\{\beta_{-i}(\theta_{-i}^{(1)}), \dots, \beta_{-i}(\theta_{-i}^{(N)}) \right\}$,
	\begin{enumerate}
		\item For any $\theta_i\in [0, 1]^{2^l}$, the functions $u_{i, \text{M}}(\theta_i, \argdot, \beta_{-i}(\theta_{-i}^{(1)})), \dots, u_{i, \text{M}}(\theta_i, \argdot, \beta_{-i}(\theta_{-i}^{(N)}))$ are piecewise $1$-Lipschitz and $\left(O\left(1 / \left(\kappa L^{2^{l+1}}_{\beta^{-1}_{\text{max}}} \sqrt{N}\right) \right), \tilde{O}\left((n+1)^{2l} \sqrt{N \cdot l} \right)\right)$-dispersed.
		\item For any $b_i \in [0, 1]^{2^l}$ and $b_{-i} \in [0, 1]^{(n-1)2^l}$, the function $u_{i, \text{M}}(\argdot, b_i, b_{-i})$ is $1$-Lipschitz continuous.
	\end{enumerate}
\end{theorem}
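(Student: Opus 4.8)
The plan is to mirror the argument that \citet{balcanEstimatingApproximateIncentive2019} use for the first-price combinatorial auction, substituting the opponents' bidding distribution for their prior distribution by means of Theorem~\ref{thm:bounded-prior-density-remains-bounded-under-bi-Lipschitz}, and tracking the dependence on the number of bundles $2^l$. \textbf{Step 1 (piecewise-Lipschitz structure, and Part 2).} Fix opponent bids $b_{-i} \in [0,1]^{(n-1)2^l}$. The allocation $x(b_i, b_{-i})$ is the $\argmax$ of the linear objective $\sum_t x_t \cdot b_t$ over the finite set of feasible allocations, so as $b_i$ ranges over $[0,1]^{2^l}$ it induces a partition $\mathcal{P}$ of the cube into polyhedral cells on each of which $x_i(b_i, b_{-i})$ equals a fixed $0/1$ vector. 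On such a cell, $u_{i, \text{M}}(\theta_i, b_i, b_{-i}) = x_i \cdot \theta_i - x_i \cdot b_i$ is affine in $b_i$ with gradient $-x_i$; since $\norm{x_i}_\infty \leq 1$ this is $1$-Lipschitz with respect to $\norm{\argdot}_1$, giving piecewise $1$-Lipschitzness. Symmetrically, for fixed $(b_i, b_{-i})$ the allocation is fixed and $u_{i, \text{M}}(\argdot, b_i, b_{-i})$ is affine in $\theta_i$ with gradient $x_i$ of sup-norm at most $1$, which is Part 2.

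\textbf{Step 2 (locating discontinuities).} A cell boundary of $\mathcal{P}$, and hence any discontinuity of $u_{i, \text{M}}(\theta_i, \argdot, b_{-i})$, can only lie on a hyperplane $H_{x,x'} = \left\{b_i : (x_i - x'_i) \cdot b_i = (x'_{-i} - x_{-i}) \cdot b_{-i}\right\}$ for a pair of feasible allocations $x, x'$ with $x_i \neq x'_i$ (a value tie between the two allocations). Since an allocation assigns each of the $l$ items to one of the $n$ agents or leaves it unassigned, there are at most $(n+1)^l$ feasible allocations, hence at most $(n+1)^{2l}$ such hyperplanes. Therefore, for each sample $j$, the partition $\mathcal{P}_j$ induced by $b_{-i} = \beta_{-i}(\theta_{-i}^{(j)})$ is refined by an arrangement of at most $(n+1)^{2l}$ hyperplanes, and these are the only possible locations of discontinuities.

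\textbf{Step 3 (dispersion, the main step).} Applying Theorem~\ref{thm:bounded-prior-density-remains-bounded-under-bi-Lipschitz} to the $\kappa$-bounded joint marginals $F_{\theta_i[a], \theta_j[a']}$ turns them into $\kappa L^{2^{l+1}}_{\beta^{-1}_{\text{max}}}$-bounded joint bidding marginals, because $\dim \Theta_i = 2^l$ contributes the exponent $2^l$ per agent. For a fixed ordered pair of allocations $(x, x')$, the hyperplanes $H^{(j)}_{x,x'}$ across the samples $j$ share the normal $x_i - x'_i$; intersecting with an axis-parallel line in a coordinate direction $e_a$ with $x_i[a] \neq x'_i[a]$, the intersection point's $a$-th coordinate is an affine function of the opponents' bid coordinates, whose law, conditioned on the remaining coordinates, inherits the $\kappa L^{2^{l+1}}_{\beta^{-1}_{\text{max}}}$-density bound. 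We bucket the $N (n+1)^{2l}$ hyperplanes into $(n+1)^{2l}$ buckets of size $N$, one per ordered allocation pair, with independent draws inside each bucket, and invoke Lemma~\ref{thm:aux-dispersion-lemma-indepdent-and-buckets}(2): with probability at least $1 - \delta / n$, every interval of width $w_i$ along such a line is hit by at most $(n+1)^{2l} w_i N \kappa L^{2^{l+1}}_{\beta^{-1}_{\text{max}}} + (n+1)^{2l} \sqrt{2N \log\left(\frac{2 n (n+1)^{2l}}{\delta}\right)} + 4 (n+1)^{2l} \sqrt{N \log\left(\frac{eN}{2}\right)}$ of the hyperplanes. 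Choosing $w_i = O\left(1 / \left(\kappa L^{2^{l+1}}_{\beta^{-1}_{\text{max}}} \sqrt{N}\right)\right)$ makes the first term $O\left((n+1)^{2l}\sqrt{N}\right)$, and since $\log\left((n+1)^{2l}\right) = O(l \log(n+1))$ the whole bound is $\tilde{O}\left((n+1)^{2l}\sqrt{N l}\right)$. An $\ell_1$-ball of radius $w_i$ in $[0,1]^{2^l}$ lies inside a product of coordinate intervals of width $2 w_i$, so a union bound over the coordinate directions and over the $n$ agents upgrades this to the claimed $\left(w_i, \tilde{O}\left((n+1)^{2l}\sqrt{N l}\right)\right)$-dispersion.

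\textbf{Main obstacle.} The crux is Step 3: the discontinuity hyperplanes are cut out by ties in an integer program whose optimal solution depends on all opponents' bids in a combinatorial fashion, so it is not obvious that they fail to concentrate near any point. The resolution, following \citet{balcanEstimatingApproximateIncentive2019}, is that once a pair of allocations is fixed, the hyperplane offset is \emph{linear} in the bounded-density bid coordinates, so conditioning on all but one coordinate reduces it to a one-dimensional bounded-density quantity to which Lemma~\ref{thm:aux-dispersion-lemma-indepdent-and-buckets} applies; the combinatorial blow-up is then absorbed entirely into the $(n+1)^{2l}$ prefactor and the logarithmic term, and the passage from prior to bidding distribution costs only the factor $L^{2^{l+1}}_{\beta^{-1}_{\text{max}}}$ in the attainable grid width.
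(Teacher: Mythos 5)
Your proposal is correct and follows essentially the same route as the paper: the paper's own proof consists precisely of your Step 3's opening move---applying Theorem~\ref{thm:bounded-prior-density-remains-bounded-under-bi-Lipschitz} to the joint marginal $F_{\theta_i[a],\theta_j[a']}^{\beta_i,\beta_j}$ to turn the $\kappa$-bounded prior into a $\kappa L_{\beta_i^{-1}}^{2^l} L_{\beta_j^{-1}}^{2^l} \le \kappa L^{2^{l+1}}_{\beta^{-1}_{\text{max}}}$-bounded bidding density---after which it simply cites Theorem~3.10 of \citet{balcanEstimatingApproximateIncentive2019} for the dispersion bound (and Theorem~3.11 for Part 2), whereas you reconstruct the contents of that citation (the hyperplane count $(n+1)^{2l}$, the per-pair bucketing, and Lemma~\ref{thm:aux-dispersion-lemma-indepdent-and-buckets}(2)). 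The only imprecision is in your phrase that the offset's law ``conditioned on the remaining coordinates, inherits the density bound'': a bound on a joint density does not in general bound a conditional density; what the argument actually needs is that the offset, as a sum over independent opponents, is the convolution of a bounded-density summand with the rest and therefore marginally bounded-density (a standard fact), but this is exactly the lemma Balcan et al.\ use and does not change the structure of the argument.
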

\begin{proof}
	For the first statement, apply Theorem~\ref{thm:bounded-prior-density-remains-bounded-under-bi-Lipschitz} to the joint marginal bidding distribution $F_{\theta_i[a], \theta_j[a^\prime]}^{\beta_{i, j}}$. Then, the $\kappa$-bounded density function for every pair of agents $i, j \in [n]$ and for all bundles $a, a^\prime \subset [l]$ transforms into a $\kappa L_{\beta^{-1}_i}^{2^l} L_{\beta^{-1}_j}^{2^l}$-bounded bidding distribution. Form this point onward, the proof for the first statement follows analogously to the proof of Theorem~3.10 of \citet{balcanEstimatingApproximateIncentive2019}.
	The second statement is a direct consequence of Theorem~3.11 from \citet{balcanEstimatingApproximateIncentive2019}.
\end{proof}

\begin{theorem}
	Let $(\beta_i, \beta_{-i}) \in \tilde{\Sigma}$. Assume that for each agent $i \in [n]$ and each pair of agents $j, j^\prime \in [n]\setminus \{i\}$, each pair of bundles $a, a^\prime \subset [l]$, and segment $B_k \in \mathcal{B}_i$, the joint marginal prior distribution $F_{\giventhat{o_j(a), o_{j^\prime}(a^\prime)}{\left\{o_i \in B_k \right\}}}$ is $\kappa_{i, B_k}$-bounded.
	Then, with probability at least $1 - \delta$ over the draw of the sets $\left\{\giventhat{\mathcal{D}^{\beta_{-i}}(B_k)}{B_k \in \mathcal{B}_i, i \in [n]} \right\}$ for every $i \in [n]$ and $B_k \in \mathcal{B}_i$, the functions $u_{i, \text{M}}\left(\theta_i^{(1)}, \cdot, \beta_{-i}(o_{-i}^{(1)}) \right), \dots, u_{i, \text{M}}\left(\theta_i^{(N_{B_k})}, \cdot, \beta_{-i}(o_{-i}^{(N_{B_k})}) \right)$ are piecewise 1-Lipschitz and \\
	$\left(O\left(1 / \left(\kappa_{i, B_k} L^{2^{l+1}}_{\beta^{-1}_{\text{max}}} \sqrt{N_{B_k}}\right) \right), \tilde{O}\left((n+1)^{2l} \sqrt{N_{B_k}l} \right)\right)$-dispersed.
\end{theorem}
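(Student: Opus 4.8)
The plan is to mirror the two–stage template used for the interdependent first-price single-item auction (Theorem~\ref{thm:dispersion-guarantee-for-FPSB-single-item}) and for the independent-prior combinatorial case, combining the structural analysis of the winner-determination problem from \citet{balcanEstimatingApproximateIncentive2019} with Theorem~\ref{thm:bounded-prior-density-remains-bounded-under-bi-Lipschitz} to pass from the conditional prior to the conditional bidding distribution. Throughout I would fix an agent $i \in [n]$ and a segment $B_k \in \mathcal{B}_i$, prove the dispersion statement with failure probability $\delta/(n N_{\mathcal{B}_{\text{max}}})$, and close with a union bound over the at most $n N_{\mathcal{B}_{\text{max}}}$ pairs $(i, B_k)$, absorbing the resulting $\log(n N_{\mathcal{B}_{\text{max}}})$ factors into the $\tilde{O}(\cdot)$.

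First I would establish the piecewise-Lipschitz claim. For a fixed realization $\theta_i$ and fixed opponent bids $b_{-i} = \beta_{-i}(o_{-i})$, the map $b_i \mapsto u_{i, \text{M}}(\theta_i, b_i, b_{-i}) = x_i(b_i, b_{-i}) \cdot \theta_i - b_i \cdot x_i(b_i, b_{-i})$ has allocation $x_i \in \{0,1\}^{2^l}$ that, by the analysis in \citet[proof of Theorem~3.11]{balcanEstimatingApproximateIncentive2019}, is constant on each cell of an arrangement of at most $\tilde{O}\!\left((n+1)^{2l}\right)$ hyperplanes in $b_i$-space, each such hyperplane being pinned down by a pair of bundles together with the opponents' bids. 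On each cell $u_{i, \text{M}}(\theta_i, \cdot, b_{-i})$ is affine with gradient $-x_i$, and since $\|x_i\|_\infty \le 1$ the function is $1$-Lipschitz with respect to the $\ell_1$-norm there; hence it is piecewise $1$-Lipschitz, for every sample.

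The substantive step is the dispersion bound. The discontinuity hyperplanes above depend only on the opponents' bid vector, so by the argument in \citet[Theorem~3.10]{balcanEstimatingApproximateIncentive2019} it suffices to control, for each of the finitely many hyperplane directions, how many of the corresponding hyperplanes — one per sample, and within each sample one per relevant pair of opponent-coordinate pairs — intersect a fixed ball of radius $w_i$; each such hyperplane is determined by the value of a pair of opponent bid coordinates of the form $\beta_j(o_j)[a]$, $\beta_{j'}(o_{j'})[a']$. By hypothesis the conditional joint marginal prior $F_{\giventhat{o_j(a), o_{j'}(a')}{\{o_i \in B_k\}}}$ is $\kappa_{i, B_k}$-bounded, so Theorem~\ref{thm:bounded-prior-density-remains-bounded-under-bi-Lipschitz} turns it into a conditional bidding distribution whose density is bounded by $\kappa_{i, B_k} L_{\beta_j^{-1}}^{2^l} L_{\beta_{j'}^{-1}}^{2^l} \le \kappa_{i, B_k} L_{\beta^{-1}_{\text{max}}}^{2^{l+1}}$. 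Within a fixed pair of opponents the relevant draws are i.i.d., but across opponent pairs they are not, so I would bucket the draws by opponent pair and invoke the bucketed part of Lemma~\ref{thm:aux-dispersion-lemma-indepdent-and-buckets}, exactly as in Balcan et al.; choosing $w_i = O\!\left(1/(\kappa_{i, B_k} L_{\beta^{-1}_{\text{max}}}^{2^{l+1}} \sqrt{N_{B_k}})\right)$ balances the deterministic ``$w_i N_{B_k}\kappa$'' term against the stochastic $\sqrt{N_{B_k}}$ term, giving a per-interval count of $\tilde{O}\!\left((n+1)^{2l}\sqrt{N_{B_k}\,l}\right)$ with probability at least $1 - \delta/(n N_{\mathcal{B}_{\text{max}}})$.

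The hard part is not any single step but faithfully importing \citet{balcanEstimatingApproximateIncentive2019}'s geometric dissection of the winner-determination hyperplane arrangement — enumerating the $\tilde{O}((n+1)^{2l})$ directions and identifying, for each, which pair of opponent bid coordinates governs the offset — and then verifying that the only two modifications required are: (i) replacing the unconditional prior by the conditional one, which is legitimate since $\kappa_{i, B_k}$-boundedness is assumed directly on the conditional 2D marginals; and (ii) tracking the $2^l$-dimensional Jacobian-determinant blow-up $L_{\beta^{-1}_{\text{max}}}^{2^{l+1}}$ coming out of Theorem~\ref{thm:bounded-prior-density-remains-bounded-under-bi-Lipschitz}, in place of the single-item factor $L_{\beta^{-1}_{\text{max}}}$. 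A final union bound over agents and segments, with the $\log(n N_{\mathcal{B}_{\text{max}}})$ terms swallowed into $\tilde{O}(\cdot)$, yields the stated $\left(w_i, v_{i, B_k}(w_i)\right)$-dispersion.
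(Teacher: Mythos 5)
Your proposal is correct and follows essentially the same route as the paper's own (very terse) proof: apply Theorem~\ref{thm:bounded-prior-density-remains-bounded-under-bi-Lipschitz} to the conditional joint marginal to obtain a $\kappa_{i, B_k} L_{\beta^{-1}_{\text{max}}}^{2^{l+1}}$-bounded bidding distribution, then run the argument of Theorem~3.10 of \citet{balcanEstimatingApproximateIncentive2019} with a union bound over agents and segments. Your elaboration of the piecewise $1$-Lipschitz structure and the bucketed-dispersion accounting supplies detail that the paper leaves implicit by deferring to the Balcan et al.\ proof.
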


\begin{proof}
	For agent $i \in [n]$, we apply Theorem~\ref{thm:bounded-prior-density-remains-bounded-under-bi-Lipschitz} to the joint marginal bidding distribution $F_{\giventhat{o_j[a], o_{j^\prime}[a^\prime]}{\left\{o_i \in B_k \right\}}}^{\beta_{i, j}}$. Then, the $\kappa_{i, B_k}$-bounded density function for every pair of agents $j, j^\prime \in [n] \setminus \{i \}$, for all bundles $a, a^\prime \subset [l]$, and $B_k \in \mathcal{B}_i$, transforms into a $\kappa_{i, B_k} L_{\beta^{-1}_j}^{2^l} L_{\beta^{-1}_{j^\prime}}^{2^l}$-bounded bidding distribution. From this point onward, the proof follows analogously to the proof of Theorem~3.10 of \citet{balcanEstimatingApproximateIncentive2019}.
\end{proof}

\begin{theorem}[{\citet[Theorem~3.12]{balcanEstimatingApproximateIncentive2019}}] \label{thm:first-price-combinatorial-pseudo-dim-balcan-hatF}
	For any agent $i \in [n]$, the pseudo-dimension of the function class $\hat{\mathcal{F}}_{i, \text{M}}$ is $O(l 2^l \log(n))$.
\end{theorem}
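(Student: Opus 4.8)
The plan is to derive the bound from the $(m,t)$-delineability machinery that the paper has already set up. By Definition~\ref{def:mt-delineable-functions} and Theorem~\ref{thm:balcans-general-pseudo-dim-bound-for-mt-delineable-functions}, it suffices to exhibit, for the first-price combinatorial auction, a value $t$ such that $\hat{\mathcal{F}}_{i,\text{M}}$ is $(2^{l+1},t)$-delineable: the functions in $\hat{\mathcal{F}}_{i,\text{M}}$ are parameterized by the pair $(\theta_i,b_i)\in\Theta_i\times\mathcal{A}_i=[0,1]^{2^l}\times[0,1]^{2^l}$ (the second coordinate $\hat\theta_i$ playing the role of agent $i$'s bid $b_i$), so the parameter dimension is $m=2^{l+1}$, and once delineability holds we obtain $\text{Pdim}(\hat{\mathcal{F}}_{i,\text{M}})=O(2^{l+1}\log(2^{l+1}t))$, which is $O(l 2^l\log n)$ as soon as $\log t = O(l\log n)$.

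First I would fix an arbitrary $b_{-i}\in\mathcal{A}_{-i}$ (the witness vector $v$ in Definition~\ref{def:mt-delineable-functions}) and construct the hyperplane set $\mathcal{H}$ in $(\theta_i,b_i)$-space on whose complement $u_{i,\text{M}}(\theta_i,b_i,b_{-i})$ is linear. The key observation is that, by the quasilinear form of the combinatorial first-price payoff, $u_{i,\text{M}}(\theta_i,b_i,b_{-i})=x_i(b_i,b_{-i})\cdot\theta_i-b_i\cdot x_i(b_i,b_{-i})$, and the allocation $x_i(b_i,b_{-i})\in\{0,1\}^{2^l}$ does not depend on $\theta_i$. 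Hence it is enough to partition $b_i$-space into cells on which the winner-determination optimum yields a fixed allocation vector $\bar x$; lifting such a cell to $(\theta_i,b_i)$-space, we have $u_{i,\text{M}}=\bar x\cdot\theta_i-b_i\cdot\bar x$, which is linear in $(\theta_i,b_i)$. The regions on which a fixed allocation is optimal are exactly the regions where one feasible allocation strictly beats every other in total reported welfare, and their boundaries are the tie hyperplanes $\{b_i:\sum_{j}x_j\cdot b_j=\sum_{j}x'_j\cdot b_j\}$ indexed by unordered pairs of feasible allocations $(x,x')$; with $b_{-i}$ fixed, each such equation is affine in $b_i$, and its lift (ignoring the $\theta_i$-coordinates) is a hyperplane in $\mathbb{R}^{2^{l+1}}$.

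Next I would bound the number of feasible allocations, hence $t$. Each of the $l$ items is either unallocated or assigned to one of the $n$ bidders, so there are at most $(n+1)^l$ realizable allocation vectors, giving at most $\binom{(n+1)^l}{2}\le(n+1)^{2l}$ tie hyperplanes. Thus $\mathcal{H}$ has at most $(n+1)^{2l}$ elements and $u_{i,\text{M}}(\cdot,\cdot,b_{-i})$ is linear over every connected component of $([0,1]^{2^l}\times[0,1]^{2^l})\setminus\mathcal{H}$, establishing $(2^{l+1},(n+1)^{2l})$-delineability. Substituting into Theorem~\ref{thm:balcans-general-pseudo-dim-bound-for-mt-delineable-functions} gives $\text{Pdim}(\hat{\mathcal{F}}_{i,\text{M}})=O(2^{l+1}\log(2^{l+1}(n+1)^{2l}))=O(l 2^l\log n)$, as claimed.

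I expect the main obstacle to be the careful treatment of the winner-determination step: one must fix a deterministic tie-breaking rule and argue that $b_i\mapsto x_i(b_i,b_{-i})$ is genuinely piecewise constant with polyhedral cells (so that on each open cell a single allocation strictly dominates and $u_{i,\text{M}}$ is linear there, not merely piecewise linear), and one must obtain the clean count $(n+1)^{2l}$ of distinct tie hyperplanes rather than a cruder bound that would spoil the $O(l 2^l\log n)$ target. Everything else---the reduction to counting allocation-constancy cells and the final substitution into the pseudo-dimension bound---is routine given the framework.
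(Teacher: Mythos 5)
Your proposal is correct and follows essentially the same route as the source: the paper states this result as a direct citation of \citet[Theorem~3.12]{balcanEstimatingApproximateIncentive2019}, whose proof is exactly the delineability argument you reconstruct, and the paper itself records the key intermediate fact---that $\hat{\mathcal{F}}_{i,\text{M}}$ is $(2^{l+1},(n+1)^{2l})$-delineable---in its proof of the corresponding bound for $\tilde{\mathcal{F}}_{i,\text{M}}$. Your counting of tie hyperplanes and the substitution into Theorem~\ref{thm:balcans-general-pseudo-dim-bound-for-mt-delineable-functions} match the constants used there, and the tie-breaking caveat you flag is handled by noting that allocations tied identically in $b_i$ must assign agent $i$ the same bundle, so they do not affect $u_{i,\text{M}}$.
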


\begin{theorem}
	For any agent $i \in [n]$, the pseudo-dimension of the function class $\tilde{\mathcal{F}}_{i, \text{M}}$ is $O(l 2^l \log(n))$.
\end{theorem}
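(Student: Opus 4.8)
The plan is to obtain the bound for $\tilde{\mathcal{F}}_{i,\text{M}}$ by reusing the pseudo-dimension machinery already in place for $\hat{\mathcal{F}}_{i,\text{M}}$, connected through our delineability-transfer result. Recall that in the first-price combinatorial auction over $l$ items the valuation and action spaces are $\Theta_i = \mathcal{A}_i = [0,1]^{2^l}$, so the relevant parameter-space dimension is $m = 2^l$. The proof of Theorem~\ref{thm:first-price-combinatorial-pseudo-dim-balcan-hatF} (Balcan et al.'s Theorem~3.12) does not merely assert a pseudo-dimension value; it proceeds by exhibiting, for each fixed opponent bid profile $b_{-i} \in \mathcal{A}_{-i}$, a family of $t = \mathrm{poly}(n, 2^l)$ hyperplanes in the $(\theta_i, b_i)$-parameter space across whose connected components $u_{i,\text{M}}(\theta_i, b_i, b_{-i})$ is linear -- that is, it shows $\hat{\mathcal{F}}_{i,\text{M}}$ is $(2m, t)$-delineable for such a $t$. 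The first step is therefore to extract this $(2\cdot 2^l, t)$-delineability statement explicitly from that argument: the winner-determination outcome, and hence $x_i$ and $p_i$, changes only across hyperplanes recording pairwise bundle-value comparisons among the $n$ agents, which is what yields the polynomial count for $t$.

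Second, I would apply Theorem~\ref{thm:if-hatF-delieanble-then-tildeF-delineable} directly: since $\hat{\mathcal{F}}_{i,\text{M}}$ is $(2m, t)$-delineable with $m = 2^l$, the class $\tilde{\mathcal{F}}_{i,\text{M}}$ is $(m, t)$-delineable with the \emph{same} $t$. The key point is that this transfer halves the ambient dimension but leaves the hyperplane count $t$ untouched, so the pseudo-dimension bound can only improve.

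Third, I would invoke Theorem~\ref{thm:balcans-general-pseudo-dim-bound-for-mt-delineable-functions}, which gives $\text{Pdim}(\tilde{\mathcal{F}}_{i,\text{M}}) = O(m\log(mt)) = O\big(2^l \log(2^l t)\big)$. Since $t = \mathrm{poly}(n, 2^l)$ (indeed $t$ is at worst $n^{O(l)}$), we have $\log(2^l t) = O(l + \log n) = O(l\log n)$, hence $\text{Pdim}(\tilde{\mathcal{F}}_{i,\text{M}}) = O(l 2^l \log n)$. Equivalently, this is bounded by $O(2m\log(2mt)) = O(l 2^l \log n)$, the quantity already established for $\hat{\mathcal{F}}_{i,\text{M}}$, so the conclusion carries over with no loss of order.

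The main obstacle I anticipate is bookkeeping rather than conceptual: the cited Theorem~3.12 of Balcan et al. is phrased only as a pseudo-dimension statement, so one must re-open its proof to confirm that it genuinely establishes $(2\cdot 2^l, t)$-delineability with a $t$ polynomial (or at worst $n^{O(l)}$) in $n$ and $2^l$ -- our transfer lemma consumes delineability, not the pseudo-dimension bound itself. Once that is verified, the remainder is a two-step chaining of Theorems~\ref{thm:if-hatF-delieanble-then-tildeF-delineable} and~\ref{thm:balcans-general-pseudo-dim-bound-for-mt-delineable-functions}, structurally identical to how every other pseudo-dimension guarantee is extended from $\hat{\mathcal{F}}_{i,\text{M}}$ to $\tilde{\mathcal{F}}_{i,\text{M}}$ in this section.
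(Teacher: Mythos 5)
Your proposal matches the paper's proof exactly: the paper extracts from Balcan et al.'s proof of Theorem~3.12 that $\hat{\mathcal{F}}_{i,\text{M}}$ is $(2^{l+1},(n+1)^{2l})$-delineable, applies Theorem~\ref{thm:if-hatF-delieanble-then-tildeF-delineable} to conclude $\tilde{\mathcal{F}}_{i,\text{M}}$ is $(2^{l},(n+1)^{2l})$-delineable, and then invokes Theorem~\ref{thm:balcans-general-pseudo-dim-bound-for-mt-delineable-functions} to obtain $O(l2^{l}\log n)$. Your estimate $t=n^{O(l)}$ is consistent with the exact value $(n+1)^{2l}$, and the final arithmetic is the same.
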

\begin{proof}
	\citet{balcanEstimatingApproximateIncentive2019} established in the proof of Theorem~\ref{thm:first-price-combinatorial-pseudo-dim-balcan-hatF} that for every $i \in [n]$, the function class $\hat{\mathcal{F}}_{i, \text{M}}$ is $(2^{l+1}, (n+1)^{2l}))$-delineable. By applying Theorem~\ref{thm:if-hatF-delieanble-then-tildeF-delineable}, we have $\tilde{\mathcal{F}}_{i, \text{M}}$ is $(2^{l}, (n+1)^{2l}))$-delineable. Subsequently, with an application of Theorem~\ref{thm:balcans-general-pseudo-dim-bound-for-mt-delineable-functions}, we find that the pseudo-dimension of $\tilde{\mathcal{F}}_{i, \text{M}}$ is $O(l 2^l \log(n)$.
\end{proof}

\subsection{Discriminatory auction} \label{sec:disp-and-pdim-discriminatory-auction}
In the discriminatory auction model, $m$ identical units of an item are for sale, with each agent $i \in [n]$ having a valuation vector $\theta_i \in [0, 1]^m$, indicating its willingness to pay for each additional unit. The valuation decreases with each additional unit, implying $\theta_i[1] \geq \theta_i[2] \geq \cdots \geq \theta_i[m]$. In total $nm$ bids $b_i[\mu]$ for $i \in [n]$ and $\mu \in [m]$ are submitted to the auctioneer. If $m_i$ of agent $i$'s bids are among the $m$ highest, it receives the units at its bid price, paying a cumulative amount based on the quantity awarded, i.e., $p_i = \sum_{\mu = 1}^{m_i} b_i[\mu]$.

\begin{theorem}
	Let $(\beta_i, \beta_{-i}) \in \tilde{\Sigma}$. Assume that for each agent $i\in [n]$ and unit $l \in [m]$, the marginal prior distribution $F_{\theta_i[l]}$ is $\kappa$-bounded. With probability $1 - \delta$ for all agents $i \in [n]$ over the draw of the $n$ datasets $\mathcal{D}^{\beta_{-i}} := \left\{\beta_{-i}(\theta_{-i}^{(1)}), \dots, \beta_{-i}(\theta_{-i}^{(N)}) \right\}$,
	\begin{enumerate}
		\item For any $\theta_i\in [0, 1]^m$, the functions $u_{i, \text{M}}(\theta_i, \argdot, \beta_{-i}(\theta_{-i}^{(1)})), \dots, u_{i, \text{M}}(\theta_i, \argdot, \beta_{-i}(\theta_{-i}^{(N)}))$ are piecewise $1$-Lipschitz and $\left(O\left(1 / \left(\kappa L_{\beta^{-1}_{\text{max}}} \sqrt{N}\right) \right), \tilde{O}\left(n m^2 \sqrt{N} \right)\right)$-dispersed.
		\item For any $b_i \in [0, 1]^m$ and $b_{-i} \in [0, 1]^{(n-1)m}$, the function $u_{i, \text{M}}(\argdot, b_i, b_{-i})$ is $1$-Lipschitz continuous.
	\end{enumerate}
\end{theorem}
\begin{proof}
	For the first statement, we apply Theorem~\ref{thm:bounded-prior-density-remains-bounded-under-bi-Lipschitz} to the marginal bidding distribution $F_{\theta_i[l]}^{\beta_{i}}$. Then, the $\kappa$-bounded density function for agent $i \in [n]$ and for units $l \in [l]$ transforms into a $\kappa L_{\beta^{-1}_i}$-bounded bidding distribution. Form this point onward, the proof for the first statement follows analogously to the proof of Theorem~3.16 of \citet{balcanEstimatingApproximateIncentive2019}.
	The second statement is a direct consequence of Theorem~3.17 from \citet{balcanEstimatingApproximateIncentive2019}.
\end{proof}

\begin{theorem}
	Let $(\beta_i, \beta_{-i}) \in \tilde{\Sigma}$. Assume that for each agent $i \in [n]$, agent $j \in [n]\setminus \{i\}$, unit $l \in [m]$, and segment $B_k \in \mathcal{B}_i$, the marginal prior distribution $F_{\giventhat{o_j[l]}{\left\{o_i \in B_k \right\}}}$ is $\kappa_{i, B_k}$-bounded.
	Then, with probability at least $1 - \delta$ over the draw of the sets $\left\{\giventhat{\mathcal{D}^{\beta_{-i}}(B_k)}{B_k \in \mathcal{B}_i, i \in [n]} \right\}$ for every $i \in [n]$ and $B_k \in \mathcal{B}_i$, the functions $u_{i, \text{M}}\left(\theta_i^{(1)}, \cdot, \beta_{-i}(o_{-i}^{(1)}) \right), \dots, u_{i, \text{M}}\left(\theta_i^{(N_{B_k})}, \cdot, \beta_{-i}(o_{-i}^{(N_{B_k})}) \right)$ are piecewise 1-Lipschitz and
	$\left(O\left(1 / \left(\kappa_{i, B_k} L_{\beta^{-1}_{\text{max}}} \sqrt{N_{B_k}}\right) \right), \tilde{O}\left(n m^2 \sqrt{N_{B_k}l} \right)\right)$-dispersed.
\end{theorem}

\begin{proof}
	For agent $i \in [n]$, apply Theorem~\ref{thm:bounded-prior-density-remains-bounded-under-bi-Lipschitz} to the marginal bidding distribution $F_{\giventhat{o_j[l]}{\left\{o_i \in B_k \right\}}}^{\beta_{i}}$. Then, the $\kappa_{i, B_k}$-bounded density function for every agent $j \in [n]\setminus \{i \}$, every unit $l \in [m]$, and $B_k \in \mathcal{B}_i$ transforms into a $\kappa_{i, B_k} L_{\beta^{-1}_j}$-bounded bidding distribution. Form this point onward, the proof for the first statement follows analogously to the proof of Theorem~3.16 of \citet{balcanEstimatingApproximateIncentive2019}.
\end{proof}

\begin{theorem}[{\citet[Theorem~3.18]{balcanEstimatingApproximateIncentive2019}}] \label{thm:discriminatory-auction-pseudo-dim-balcan-hatF}
	For any agent $i \in [n]$, we have $\text{Pdim}(\hat{\mathcal{F}}_{i, \text{M}})$ is $O(m \log(nm))$.
\end{theorem}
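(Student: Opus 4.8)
The plan is to prove this exactly as \citet[Theorem~3.18]{balcanEstimatingApproximateIncentive2019} do: show that $\hat{\mathcal{F}}_{i,\text{M}}$ is $(2m,t)$-delineable for $t = O(nm^{2})$ and then invoke Theorem~\ref{thm:balcans-general-pseudo-dim-bound-for-mt-delineable-functions}, which yields $\text{Pdim}(\hat{\mathcal{F}}_{i,\text{M}}) = O\!\left(2m\log(2m\cdot t)\right) = O(m\log(nm))$. Recall that a function in $\hat{\mathcal{F}}_{i,\text{M}}$ is $u_{i,\text{M}}(\theta_i,b_i,\argdot)\colon \mathcal{A}_{-i}\to[-1,1]$, so the delineability parameter is the pair $(\theta_i,b_i)\in\Theta_i\times\mathcal{A}_i=[0,1]^m\times[0,1]^m$, i.e. $\mathcal{P}\subset\mathbb{R}^{2m}$, while the free argument $v$ is the opponents' bid profile $b_{-i}\in\mathcal{A}_{-i}$.

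First I would fix an arbitrary $b_{-i}\in\mathcal{A}_{-i}$ and build the hyperplane arrangement $\mathcal{H}$ in $(\theta_i,b_i)$-space. In the discriminatory auction all $nm$ bids are pooled, the $m$ highest win, agent $i$ gets a unit for each of its components $b_i[\mu]$ in that top-$m$ set, and it pays the sum of those winning bids. Hence $(\theta_i,b_i)\mapsto u_{i,\text{M}}(\theta_i,b_i,b_{-i})$ can only fail to be affine when the winning index set of $b_i$ changes, which happens precisely when some $b_i[\mu]$ crosses one of the $(n-1)m$ fixed opponent bid values, or when two components $b_i[\mu],b_i[\nu]$ cross each other (this governs which of $i$'s own bids count as its highest winners). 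These are $m(n-1)m + \binom{m}{2} = O(nm^{2})$ hyperplanes of $\mathbb{R}^{2m}$ (axis-aligned $\{b_i[\mu]=c\}$ or diagonal $\{b_i[\mu]=b_i[\nu]\}$; none involve $\theta_i$, but each is still a genuine hyperplane). Set $\mathcal{H}$ equal to this collection, so $t=O(nm^{2})$.

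Next I would check that on every connected component $\mathcal{P}'$ of $(\Theta_i\times\mathcal{A}_i)\setminus\mathcal{H}$ the strict order of the full bid vector $(b_i,b_{-i})$ is constant, hence so is the number $m_i$ of units $i$ wins and the index set $S\subseteq[m]$ of its winning components; therefore $u_{i,\text{M}}(\theta_i,b_i,b_{-i}) = \sum_{\mu=1}^{m_i}\theta_i[\mu] - \sum_{\mu\in S}b_i[\mu]$ is affine in $(\theta_i,b_i)$ over $\mathcal{P}'$. This establishes $(2m,t)$-delineability with $t=O(nm^{2})$, and Theorem~\ref{thm:balcans-general-pseudo-dim-bound-for-mt-delineable-functions} then gives $\text{Pdim}(\hat{\mathcal{F}}_{i,\text{M}}) = O\!\left(2m\log(2m\cdot nm^{2})\right) = O(m\log(nm^{3})) = O(m\log(nm))$.

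The main obstacle is the bookkeeping in the second step: one must verify that crossings among $i$'s own components together with crossings against the fixed opponent bids pin down both \emph{which} of $i$'s bids win and — because the discriminatory rule charges a bidder its own winning bids — exactly which bids enter the payment sum, so that no further, non-hyperplane, boundaries appear; one should also confirm that ties form a measure-zero set that can be assigned consistently and that the count $O(nm^{2})$ is unaffected by whether the mechanism forces $b_i[1]\geq\cdots\geq b_i[m]$. Since the statement is quoted verbatim from \citet[Theorem~3.18]{balcanEstimatingApproximateIncentive2019}, one may alternatively just cite their proof; the companion bound $\text{Pdim}(\tilde{\mathcal{F}}_{i,\text{M}})=O(m\log(nm))$ would then follow from Theorem~\ref{thm:if-hatF-delieanble-then-tildeF-delineable} exactly as in the first-price combinatorial case.
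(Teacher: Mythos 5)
Your proposal is correct and takes essentially the same approach as the cited source: the paper does not reprove Theorem~\ref{thm:discriminatory-auction-pseudo-dim-balcan-hatF} but imports it verbatim from \citet[Theorem~3.18]{balcanEstimatingApproximateIncentive2019}, whose proof is exactly the $(2m,t)$-delineability argument you reconstruct, and the paper itself later records that this delineability parameter is $(2m, m^2(n-1))$ when deriving the companion bound for $\tilde{\mathcal{F}}_{i,\text{M}}$. Your extra $\binom{m}{2}$ self-crossing hyperplanes are not needed in the cited count because the payment rule $p_i=\sum_{\mu=1}^{m_i}b_i[\mu]$ implicitly requires the bid vector to be sorted decreasingly, so crossings among $i$'s own coordinates do not occur in the feasible domain; but since your $t=O(nm^2)$ differs from $m^2(n-1)$ only by a lower-order term, the resulting bound $O(m\log(nm))$ is unchanged.
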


\begin{theorem}
	For any agent $i \in [n]$, the pseudo-dimension of the function class $\tilde{\mathcal{F}}_{i, \text{M}}$ is $O(m \log(nm))$.
\end{theorem}
\begin{proof}
	\citet{balcanEstimatingApproximateIncentive2019} established in the proof of Theorem~\ref{thm:discriminatory-auction-pseudo-dim-balcan-hatF} that for every $i \in [n]$, the function class $\hat{\mathcal{F}}_{i, \text{M}}$ is $(2m, m^2(n-1))$-delineable. By applying Theorem~\ref{thm:if-hatF-delieanble-then-tildeF-delineable}, we have $\tilde{\mathcal{F}}_{i, \text{M}}$ is $(m, m^2(n-1))$-delineable. Subsequently, with an application of Theorem~\ref{thm:balcans-general-pseudo-dim-bound-for-mt-delineable-functions}, we find that the pseudo-dimension of $\tilde{\mathcal{F}}_{i, \text{M}}$ is $O(m \log(nm))$.
\end{proof}

\subsection{Uniform-price auction}
In the uniform-price auction model, the allocation mechanism parallels that of the discriminatory auction (Section~\ref{sec:disp-and-pdim-discriminatory-auction}). 
The uniform-price auction sells all $m$ units at a market-clearing price, with demand meeting supply. Following the principle that the market-clearing price is the highest bid not resulting in a sale~\citep{krishna2009auction}, we define $c_{-i} \in \mathbb{R}^m$ as the array of the top $m$ competing bids $b_{-i}$ against agent $i$, ordered in descending value. This means $c_{-i}[1] = \left\| b_{-i} \right\|_{\infty}$ is the highest of the opponents' bids, $c_{-i}[2]$ is the second-highest, and so on.
Agent $i$ secures exactly one unit if and only if its highest bid surpasses the lowest winning bid and its second-highest bid does not exceed the second-lowest winning bid, i.e., $b_i[1] > c_{-i}[m]$ and $b_i[2] < c_{-i}[m - 1]$. This condition extends to multiple units where agent $i$ wins exactly $m_i\geq 0$ units if its $m_i$th bid exceeds the corresponding winning bid and the next highest bid does not. The market-clearing price is set to $p = \max \left\{ b_i[m_i + 1], c_{-i}[m - m_i + 1] \right\}$, which is the maximum of the lowest winning bid and the highest losing bid. The final payment by agent $i$ is $m_i \cdot p$.

\begin{theorem}
	Let $(\beta_i, \beta_{-i}) \in \tilde{\Sigma}$. Assume that for each agent $i\in [n]$ and unit $l \in [m]$, the marginal prior distribution $F_{\theta_i[l]}$ is $\kappa$-bounded. With probability $1 - \delta$ for all agents $i \in [n]$ over the draw of the $n$ datasets $\mathcal{D}^{\beta_{-i}} := \left\{\beta_{-i}(\theta_{-i}^{(1)}), \dots, \beta_{-i}(\theta_{-i}^{(N)}) \right\}$,
	\begin{enumerate}
		\item For any $\theta_i\in [0, 1]^m$, the functions $u_{i, \text{M}}(\theta_i, \argdot, \beta_{-i}(\theta_{-i}^{(1)})), \dots, u_{i, \text{M}}(\theta_i, \argdot, \beta_{-i}(\theta_{-i}^{(N)}))$ are piecewise $1$-Lipschitz and $\left(O\left(1 / \left(\kappa L_{\beta^{-1}_{\text{max}}} \sqrt{N}\right) \right), \tilde{O}\left(n m^2 \sqrt{N} \right)\right)$-dispersed.
		\item For any $b_i \in [0, 1]^m$ and $b_{-i} \in [0, 1]^{(n-1)m}$, the function $u_{i, \text{M}}(\argdot, b_i, b_{-i})$ is $1$-Lipschitz continuous.
	\end{enumerate}
\end{theorem}
\begin{proof}
	For the first statement, apply Theorem~\ref{thm:bounded-prior-density-remains-bounded-under-bi-Lipschitz} to the marginal bidding distribution $F_{\theta_i[l]}^{\beta_{i}}$. Then, the $\kappa$-bounded density function for agent $i \in [n]$ and for units $l \in [l]$ transforms into a $\kappa L_{\beta^{-1}_i}$-bounded bidding distribution. Form this point onward, the proof for the first statement follows analogously to the remaining proof of Theorem~D.5 of \citet{balcanEstimatingApproximateIncentive2019}.
	The second statement is a direct consequence of Theorem~D.6 from \citet{balcanEstimatingApproximateIncentive2019}.
\end{proof}

\begin{theorem}
	Let $(\beta_i, \beta_{-i}) \in \tilde{\Sigma}$. Assume that for each agent $i \in [n]$, agent $j \in [n]\setminus \{i\}$, unit $l \in [m]$, and segment $B_k \in \mathcal{B}_i$, the marginal prior distribution $F_{\giventhat{o_j[l]}{\left\{o_i \in B_k \right\}}}$ is $\kappa_{i, B_k}$-bounded.
	Then, with probability at least $1 - \delta$ over the draw of the sets $\left\{\giventhat{\mathcal{D}^{\beta_{-i}}(B_k)}{B_k \in \mathcal{B}_i, i \in [n]} \right\}$ for every $i \in [n]$ and $B_k \in \mathcal{B}_i$, the functions $u_{i, \text{M}}\left(\theta_i^{(1)}, \cdot, \beta_{-i}(o_{-i}^{(1)}) \right), \dots, u_{i, \text{M}}\left(\theta_i^{(N_{B_k})}, \cdot, \beta_{-i}(o_{-i}^{(N_{B_k})}) \right)$ are piecewise 1-Lipschitz and
	$\left(O\left(1 / \left(\kappa_{i, B_k} L_{\beta^{-1}_{\text{max}}} \sqrt{N_{B_k}}\right) \right), \tilde{O}\left(n m^2 \sqrt{N_{B_k}l} \right)\right)$-dispersed.
\end{theorem}

\begin{proof}
	For agent $i \in [n]$, apply Theorem~\ref{thm:bounded-prior-density-remains-bounded-under-bi-Lipschitz} to the marginal bidding distribution $F_{\giventhat{o_j[l]}{\left\{o_i \in B_k \right\}}}^{\beta_{i}}$. Then, the $\kappa_{i, B_k}$-bounded density function for every agent $j \in [n]\setminus \{i \}$, every unit $l \in [m]$, and $B_k \in \mathcal{B}_i$ transforms into a $\kappa_{i, B_k} L_{\beta^{-1}_j}$-bounded bidding distribution. Form this point onward, the proof for the first statement follows analogously to the proof of Theorem~D.5 of \citet{balcanEstimatingApproximateIncentive2019}.
\end{proof}

\begin{theorem}[{\citet[Theorem~D.7]{balcanEstimatingApproximateIncentive2019}}] \label{thm:uniform-price-auction-pseudo-dim-balcan-hatF}
	For any agent $i \in [n]$, $\text{Pdim}(\hat{\mathcal{F}}_{i, \text{M}})$ is $O(m \log(nm))$.
\end{theorem}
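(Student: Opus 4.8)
The plan is to prove the pseudo-dimension bound through the notion of $(m,t)$-delineability (Definition~\ref{def:mt-delineable-functions}) and then apply Theorem~\ref{thm:balcans-general-pseudo-dim-bound-for-mt-delineable-functions}, which turns $(m,t)$-delineability into a pseudo-dimension bound of order $m\log(mt)$. Recall that for the uniform-price auction $\mathcal{A}_i=\Theta_i=[0,1]^m$, so the members of $\hat{\mathcal{F}}_{i,\text{M}}$ are the maps $b_{-i}\mapsto u_{i,\text{M}}(\theta_i,b_i,b_{-i})$ indexed by $(\theta_i,b_i)\in[0,1]^m\times[0,1]^m=[0,1]^{2m}$. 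I would show this class is $(2m,O(nm^2))$-delineable, which then yields $\text{Pdim}(\hat{\mathcal{F}}_{i,\text{M}})=O\big(2m\log(2m\cdot nm^2)\big)=O(m\log(nm))$, exactly in the spirit of \citet{balcanEstimatingApproximateIncentive2019}'s treatment of the discriminatory auction.

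First I would fix agent $i$ and an arbitrary opponents' bid vector $b_{-i}$ — the vector $v$ in the definition of delineability — and regard $u_{i,\text{M}}(\theta_i,b_i,b_{-i})$ purely as a function of the $2m$ free coordinates $(\theta_i,b_i)$. The candidate hyperplane set is $\mathcal{H}=\{\,b_i[\mu]=b_j[\nu]\ :\ j\in[n],\ \mu,\nu\in[m]\,\}$, i.e.\ every comparison between one of agent $i$'s $m$ bid components and one of the $nm$ components of the full bid vector (the $j=i$ terms fix the internal order of $b_i$), so $|\mathcal{H}|=O(nm^2)$. On any connected component of $[0,1]^{2m}\setminus\mathcal{H}$ the complete ordering of all $nm$ bids is constant, so (i) the number $m_i$ of units agent $i$ wins is constant, and (ii) which of the two candidates attains the market-clearing price $p=\max\{b_i[m_i+1],\,c_{-i}[m-m_i+1]\}$ is constant, where $c_{-i}$ is the fixed descending-sorted vector of opponent bids. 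Hence on that component $u_{i,\text{M}}=\sum_{\mu=1}^{m_i}\theta_i[\mu]-m_i p$ equals either $\sum_{\mu\le m_i}\theta_i[\mu]-m_i\,b_i[m_i+1]$, which is linear in $(\theta_i,b_i)$, or $\sum_{\mu\le m_i}\theta_i[\mu]-m_i\,c_{-i}[m-m_i+1]$, which is affine with a constant shift; either way it is affine in the free variables.

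From $(2m,O(nm^2))$-delineability, Theorem~\ref{thm:balcans-general-pseudo-dim-bound-for-mt-delineable-functions} gives $\text{Pdim}(\hat{\mathcal{F}}_{i,\text{M}})=O(m\log(nm))$, completing the argument; the one place where the uniform-price auction differs from the discriminatory auction of \citet{balcanEstimatingApproximateIncentive2019} is the $\max$ in the pricing rule, which is handled by noting that the relevant comparison $b_i[m_i+1]=c_{-i}[m-m_i+1]$ is already a member of $\mathcal{H}$ and therefore does not split any component. The main obstacle I anticipate is the combinatorial bookkeeping for the uniform-price mechanism rather than anything analytically deep: one must verify that, for every value $m_i\in\{0,\dots,m\}$ simultaneously, the event ``agent $i$ wins exactly $m_i$ units'' is cut out by $\mathcal{H}$ (using the characterization that agent $i$'s $\mu$-th bid is a winning bid iff $b_i[\mu]>c_{-i}[m-\mu+1]$, up to the descending-order and tie conventions), and that boundary/edge cases ($m_i=0$, $m_i=m$, ties lying on $\mathcal{H}$) need no separate treatment — ties are irrelevant for delineability since $f$ only has to be affine on the \emph{open} components of $\mathcal{P}\setminus\mathcal{H}$. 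A minor additional point is confirming that no extra hyperplanes are needed to describe $c_{-i}$ itself; they are not, because $c_{-i}$ is a constant determined by the fixed vector $b_{-i}$, so the polynomial dependence on $n$ enters only through the count $|\mathcal{H}|$ and is absorbed harmlessly inside the logarithm.
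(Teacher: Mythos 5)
Your proposal is correct, but note that the paper does not actually prove this theorem: it is imported wholesale as a citation of Theorem~D.7 of \citet{balcanEstimatingApproximateIncentive2019}, and the only trace of an argument the paper leaves is the remark (in the proof of the subsequent $\tilde{\mathcal{F}}_{i,\text{M}}$ statement) that Balcan et al.\ established $(2m,\,m^2(n-1))$-delineability of $\hat{\mathcal{F}}_{i,\text{M}}$. Your reconstruction follows exactly that route: fix $b_{-i}$, take the $O(nm^2)$ hyperplanes $\{b_i[\mu]=b_j[\nu]\}$ so that the complete sorted order of all $nm$ bids is constant on each connected component of $[0,1]^{2m}\setminus\mathcal{H}$, conclude that $m_i$ and the argmax in $p=\max\{b_i[m_i+1],\,c_{-i}[m-m_i+1]\}$ are locally constant, and hence that $u_{i,\text{M}}=\sum_{\mu\le m_i}\theta_i[\mu]-m_i\,p$ is affine in $(\theta_i,b_i)$ on each component; Theorem~\ref{thm:balcans-general-pseudo-dim-bound-for-mt-delineable-functions} then gives $O(m\log(nm))$. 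Your hyperplane count matches the $m^2(n-1)$ the paper attributes to Balcan et al.\ up to the additional $\binom{m}{2}$ internal comparisons among $b_i$'s own components, which you are right to include (they are needed to pin down $b_i[m_i+1]$ and can change the clearing price even when $m_i$ does not change), and they do not affect the asymptotic order. The one place I would tighten your write-up is the claim that the winning count is characterized by ``$b_i[\mu]>c_{-i}[m-\mu+1]$''; that characterization is only literal once you have passed to the internally sorted version of $b_i$, which is constant on a component, so it should be stated as a component-local identification rather than a global formula.
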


\begin{theorem}
	For any agent $i \in [n]$, the pseudo-dimension of the function class $\tilde{\mathcal{F}}_{i, \text{M}}$ is $O(m \log(nm))$.
\end{theorem}
\begin{proof}
	\citet{balcanEstimatingApproximateIncentive2019} established in the proof of Theorem~\ref{thm:uniform-price-auction-pseudo-dim-balcan-hatF} that for every $i \in [n]$, the function class $\hat{\mathcal{F}}_{i, \text{M}}$ is $(2m, m^2(n-1))$-delineable. By applying Theorem~\ref{thm:if-hatF-delieanble-then-tildeF-delineable}, we have $\tilde{\mathcal{F}}_{i, \text{M}}$ is $(m, m^2(n-1))$-delineable. Subsequently, with an application of Theorem~\ref{thm:balcans-general-pseudo-dim-bound-for-mt-delineable-functions}, we find that the pseudo-dimension of $\tilde{\mathcal{F}}_{i, \text{M}}$ is $O(m \log(nm))$.
\end{proof}

\section{Proofs to limit concentration of bidding distributions Section~\ref{sec:a-data-drive-approach-for-equilibrium-verification-bounds}} \label{sec:proofs-appendix-sampling-based-approach-for-verification}

\begin{customthm}{4.4}
	Denote with $\phi_{F_{o_i}}$ and $\phi_{F_{o_i, o_j}}$ the density functions for the marginal prior distributions $F_{o_i}$ and $F_{o_i, o_j}$ for any $i, j \in [n]$.
	Further assume that $\phi_{F_{o_i}}$ and $\phi_{F_{o_i, o_j}}$ are $\kappa$-bounded density functions for some $\kappa > 0$. Further, let $(\beta_i, \beta_{-i}) \in \tilde{\Sigma}$ be a strategy profile of bi-Lipschitz continuous bidding strategies. Then, the probability density functions of the bidding distributions $F^{\beta_i}_{o_i}$ and $F^{\beta_i, \beta_j}_{o_i, o_j}$ satisfy
	\begin{align*}
		\sup_{b_{i} \in \beta_{i}(\mathcal{O}_{i})} \phi_{F^{\beta_{i}}_{o_{i}}}(b_{i}) &\leq \kappa \cdot L_{\beta_i^{-1}}^m\\
		\sup_{(b_{i}, b_j) \in \beta_{i}(\mathcal{O}_{i}) \times \beta_{j}(\mathcal{O}_{j})} \phi_{F^{\beta_{i}, \beta_j}_{o_{i}, o_j}}(b_{i}, b_j) &\leq \kappa \cdot L_{\beta_i^{-1}}^m \cdot L_{\beta_j^{-1}}^m
	\end{align*}
	where $m$ denotes the dimension of $\mathcal{O}_i$.
\end{customthm}

\begin{proof}
	By the definition of bi-Lipschitz continuity, the function $\beta_i : \mathcal{O}_i \rightarrow \beta_i \left(\mathcal{O}_i\right)$ is invertible for any $i \in [n]$. We perform a change of variables (Theorem~\ref{thm:change-of-variables}) with $\mu_0 := \left(\beta_{i}\right)_{\#}F_{o_i}$ and $\mu_1 := \left(\beta_i^{-1} \right)_{\#} \left( \left(\beta_{i}\right)_{\#}F_{o_i} \right) = F_{o_i}$. Then, we have for $b_i \in \beta_i\left(\mathcal{O}_i \right)$
	\begin{align*}
		\phi_{F^{\beta_{i}}_{o_{i}}}(b_{i}) \overset{\text{Theorem } \ref{thm:change-of-variables}}{=} \phi_{F_{o_i}} \left(\beta_i^{-1}(b_i) \right) \cdot \abs{ \det(\mathcal{J} \beta_i^{-1}(b_i))} \overset{\text{Lemma } \ref{thm:bound-determinante-of-jacobian-of-bi-lipschitz-map}}{\leq} \kappa \cdot L_{\beta_i^{-1}}^m,
	\end{align*}
	where $m$ denotes the dimension of $\mathcal{O}_i$. We used a well-known bound on a bi-Lipschitz mapping's Jacobian determinant in the last step.
	
	For $i, j \in [n]$ with $i \neq j$, the functions $\beta_i$ and $\beta_j$ are independent from one another. That is, $\beta_i: \mathcal{O}_i \rightarrow \mathcal{A}_i$ and $\beta_j: \mathcal{O}_j \rightarrow \mathcal{A}_j$. The same holds for their inverses, so that the Jacobian matrix of $\beta_{i, j}^{-1} = \left(\beta_i^{-1}, \beta_j^{-1} \right)$ is a block matrix. That is, for $b_i \in \beta_i \left(\mathcal{O}_i \right)$ and $b_j \in \beta_j \left(\mathcal{O}_j \right)$
	\begin{align*}
		\left(\mathcal{J} \beta_{i, j}^{-1} \right) \left(b_i, b_j \right) = \left(\begin{array}{cc}
			\mathcal{J} \beta_i^{-1}(b_i) & 0 \\
			0 & \mathcal{J} \beta_j^{-1}(b_j)
		\end{array} \right)
	\end{align*}
	A well-known fact about the determinant of a block-matrix is that it equals the product of the blocks' determinants. By another application of the change of variables formula, we have
	\begin{align*}
		&\phi_{F^{\beta_{i}, \beta_j}_{o_{i}, o_j}}(b_{i}, b_j) \overset{\text{Theorem } \ref{thm:change-of-variables}}{=} \phi_{F_{o_i, o_j}} \left( \left(\beta_{i, j}^{-1} \right)^{-1} \left(b_i, b_j \right) \right) \cdot \abs{ \det\left( \mathcal{J} \left(\beta_{i, j}^{-1} \right)^{-1} \left(b_i, b_j \right) \right) } \\
		&\leq \kappa \abs{ \det(\mathcal{J} \beta_i^{-1}(b_i)) \cdot \det(\mathcal{J} \beta_j^{-1}(b_j))} \overset{\text{Lemma } \ref{thm:bound-determinante-of-jacobian-of-bi-lipschitz-map}}{\leq} \kappa \cdot L_{\beta_i^{-1}}^m \cdot L_{\beta_j^{-1}}^m.
	\end{align*}
\end{proof}

\section{Proofs for independent prior distributions Section~\ref{sec:independent-prior-statements}} \label{sec:appendix-independent-prior-proofs}

\begin{customthm}{5.1}
	Let $\delta > 0$, $\text{M}$ be a mechanism, and $\beta \in \Sigma$ a strategy profile. Then, it holds with probability $1-\delta$ for all agents $i \in [n]$ over the draw of datasets $\mathcal{D}^{\beta_{-1}}, \dots, \mathcal{D}^{\beta_{-n}}$ of valuation-bid queries,
	\begin{align*}
		&\sup_{\theta_i \in \Theta_i} \hat{\ell}_i(\theta_i, \beta_i(\theta_i), \beta_{-i}) = \sup_{\theta_i, \hat{\theta}_i \in \Theta_i} \hat{u}_{i, \text{M}}(\theta_i, \hat{\theta}_i, \beta_{-i}) - \hat{u}_{i, \text{M}}(\theta_i, \beta_i(\theta_i), \beta_{-i}) \\
		&\leq \sup_{\theta_i, \hat{\theta}_i \in \Theta_i} \frac{1}{N} \sum_{j=1}^N u_{i, \text{M}}(\theta_i, \hat{\theta}_i, \beta_{-i}(\theta_{-i}^{(j)})) - u_{i, \text{M}}(\theta_i, \beta_i(\theta_i), \beta_{-i}(\theta_{-i}^{(j)})) + \hat{\varepsilon}_{i, \text{Pdim}}(N, \delta),\\
		&\mbox{ where } \hat{\varepsilon}_{i, \text{Pdim}}(N, \delta) := 4\sqrt{\frac{2d_i}{N} \log\left( \frac{e N}{d_i}\right)} + 2\sqrt{\frac{2}{N} \log\left(\frac{2n}{\delta} \right)}, \mbox{ and } d_i=\text{Pdim}(\hat{\mathcal{F}}_{i, \text{M}}).
	\end{align*}
\end{customthm}

\begin{proof}
	Fix an arbitrary agent $i \in [n]$. Then we have with $\mathcal{D}^{\beta_{-i}} := \left\{\beta_{-i}(\theta_{-i}^{(1)}), \dots, \beta_{-i}(\theta_{-i}^{(N)}) \right\}$ that $\beta_{-i}(\theta_{-i}^{(j)}) \sim F^{\beta_{-i}}_{\theta_{-i}}$ is i.i.d. for $1 \leq j \leq N$. 
	Therefore, by applying Theorem \ref{thm:pollard-pac-bound-general-distribution}, we have with probability at least $1 - \frac{\delta}{2}$ for all $u_{i, \text{M}}(\theta_{i}, \hat{\theta}_i, \argdot) \in \hat{\mathcal{F}}_{i, \text{M}}$ that
	\begin{align*}
		\abs{\frac{1}{N} \sum_{j=1}^{N}u_{i, \text{M}}(\theta_{i}, \hat{\theta}_i, \beta_{-i}(\theta_{-i}^{(j)})) - \mathbb{E}_{\beta_{-i}(\theta_{-i})\sim F^{\beta_{-i}}}\left[u_{i, \text{M}}(\theta_{i}, \hat{\theta}_i, \beta_{-i}(\theta_{-i}))\right] } \\
		\leq 2\sqrt{\frac{2d_i}{N} \log\left( \frac{e N}{d_i}\right)} + \sqrt{\frac{2}{N} \log\left(\frac{2}{\delta} \right)} = \frac{1}{2}\hat{\varepsilon}_{i, \text{Pdim}}(N, n\delta).
	\end{align*}
	
	As this holds for all $\theta_{i}, \hat{\theta}_i \in \Theta_i$, we have with probability $1 - \delta$
	\begin{align*}
		&\sup_{\theta_i, \hat{\theta}_i \in \Theta_i} \mathbb{E}_{\beta_{-i}(\theta_{-i})\sim F^{\beta_{-i}}}\left[u_{i, \text{M}}(\theta_{i}, \hat{\theta}_i, \beta_{-i}(\theta_{-i}))\right] - \mathbb{E}_{\beta_{-i}(\theta_{-i})\sim F^{\beta_{-i}}}\left[u_{i, \text{M}}(\theta_{i}, \beta_{i}(\theta_{i}), \beta_{-i}(\theta_{-i}))\right] \\
		&\leq \sup_{\theta_i, \hat{\theta}_i \in \Theta_i} \mathbb{E}_{\beta_{-i}(\theta_{-i})\sim F^{\beta_{-i}}}\left[u_{i, \text{M}}(\theta_{i}, \hat{\theta}_i, \beta_{-i}(\theta_{-i}))\right] - \frac{1}{N} \sum_{j=1}^{N}u_{i, \text{M}}(\theta_{i}, \hat{\theta}_i, \beta_{-i}(\theta_{-i}^{(j)})) \\
		& + \frac{1}{N} \sum_{j=1}^{N}u_{i, \text{M}}(\theta_{i}, \hat{\theta}_i, \beta_{-i}(\theta_{-i}^{(j)})) - u_{i, \text{M}}(\theta_{i}, \beta_{i}(\theta_{i}), \beta_{-i}(\theta_{-i}^{(j)}))\\
		& + \frac{1}{N} \sum_{j=1}^{N}u_{i, \text{M}}(\theta_{i}, \beta_{i}(\theta_{i}), \beta_{-i}(\theta_{-i}^{(j)})) - \mathbb{E}_{\beta_{-i}(\theta_{-i})\sim F^{\beta_{-i}}}\left[u_{i, \text{M}}(\theta_{i}, \beta_{i}(\theta_{i}), \beta_{-i}(\theta_{-i}))\right] \\
		&\leq \sup_{\theta_i, \hat{\theta}_i \in \Theta_i} \frac{1}{N} \sum_{j=1}^{N}u_{i, \text{M}}(\theta_{i}, \hat{\theta}_i, \beta_{-i}(\theta_{-i}^{(j)})) - u_{i, \text{M}}(\theta_{i}, \beta_{i}(\theta_{i}), \beta_{-i}(\theta_{-i}^{(j)})) + \hat{\varepsilon}_{i, \text{Pdim}}(N, n\delta).
	\end{align*}
	Denote the event that the previous inequalities hold for agent $i$ by $A_i(\delta)$. Then, we have shown $P(A_i(\delta)) \geq 1 - \delta$ so far. It remains to show the bounds hold for all agents. We apply a union bound to the events $A_i\left(\frac{\delta}{n}\right)$, which gives
	\begin{align*}
		P\left(\bigcap_{i=1}^n A_i\left(\frac{\delta}{n}\right) \right) &= P\left( \left(\bigcup_{i=1}^n A_i\left(\frac{\delta}{n}\right)^\complement \right)^\complement \right) = 1 - P\left( \bigcup_{i=1}^n A_i\left(\frac{\delta}{n}\right)^\complement  \right) \\
		&\geq 1 - \sum_{i=1}^{n} P\left(A_i\left(\frac{\delta}{n}\right)^\complement \right) \geq 1 - n \frac{\delta}{n} = 1 - \delta.
	\end{align*}
	
\end{proof}

\begin{customthm}{5.3}
	Let $\delta > 0$ and $\text{M}$ be a mechanism. Furthermore, let $\beta \in \tilde{\Sigma}$ be a strategy profile.
	Given that Assumption \ref{ass:general-dispersion-holds-independent-prior} holds for $w_i >0$, $v_i(w_i)$, and $v_i(L_{\beta_i}w_i)$, we have with probability at least $1 -  3\delta$ over the draw of the datasets $\mathcal{D}^{\beta_{-1}}, \dots, \mathcal{D}^{\beta_{-n}}$ for every agent $i \in [n]$
	\begin{align*}
		& \sup_{\theta_i \in \Theta_i} \hat{\ell}_i(\theta_i, \beta_i(\theta_i), \beta_{-i}) = \sup_{\theta_i, \hat{\theta}_i \in \Theta_i} \hat{u}_{i, \text{M}}(\theta_i, \hat{\theta}_i, \beta_{-i}) - \hat{u}_{i, \text{M}}(\theta_i, \beta_i(\theta_i), \beta_{-i}) \\
		&\leq \sup_{\theta_i, \hat{\theta}_i \in \mathcal{G}_{w_i}} \frac{1}{N} \sum_{j=1}^N u_{i, \text{M}}(\theta_i, \hat{\theta}_i, \beta_{-i}(\theta_{-i}^{(j)})) - u_{i, \text{M}}(\theta_i, \beta_i(\theta_i), \beta_{-i}(\theta_{-i}^{(j)}))+\hat{\varepsilon}_i,\\
		&\mbox{ where } \ \hat{\varepsilon}_i:=4\sqrt{\frac{2d_i}{N} \log\left( \frac{e N}{d_i}\right)} + 2\sqrt{\frac{2}{N} \log\left(\frac{2n}{\delta} \right)} + 3\hat{\varepsilon}_{i, \text{disp}}(w_i) + \hat{\varepsilon}_{i, \text{disp}}(L_{\beta_i}w_i), \\
		&\mbox{ with } \ \hat{\varepsilon}_{i, \text{disp}}(x) := \frac{N - v_i\left(x \right)}{N} L_i x + \frac{2 v_i\left(x \right)}{N}, \text{ and } d_i=\operatorname{Pdim}\left(\hat{\mathcal{F}}_{i, \text{M}}\right).
	\end{align*}
\end{customthm}
\begin{proof}
	For a $(w, v)$-dispersed set of $N$ functions, with probability $1 - \delta$, at most $v$ jump discontinuities fall within a ball of radius $w$. Therefore, within any ball of radius $w$, at least $N-v$ functions are Lipschitz continuous, and at most $v$ are not. 
	Let $w_i > 0$ and $v_i\left(w_i \right)$ be the function from the dispersion guarantees from Assumption \ref{ass:general-dispersion-holds-independent-prior}. Define $\hat{\varepsilon}_{i, \text{disp}}(w_i) := \frac{N - v_i\left(w_i \right)}{N} L_i w_i + \frac{2 v_i\left(w_i \right)}{N}$. Then, with probability at least $1 - \delta$, the following conditions hold:
	\begin{enumerate}
		\item For all $i \in [n]$, valuations $\theta_i \in \Theta_i$, and reported valuations $\hat{\theta}_i, \hat{\theta}_i^\prime \in \Theta_i$ with $\norm{\hat{\theta}_i - \hat{\theta}_i^\prime}_1 \leq w_i$, we have
		\begin{align} \label{equ:proof-main-independent-priors-dispersion-for-bids}
			\begin{split}
				\abs{ \frac{1}{N} \sum_{j=1}^N u_{i, \text{M}}(\theta_i, \hat{\theta}_i, \beta_{-i}(\theta_{-i}^{(j)})) - u_{i, \text{M}}(\theta_i, \hat{\theta}_i^\prime, \beta_{-i}(\theta_{-i}^{(j)})) }
				\leq \hat{\varepsilon}_{i, \text{disp}}(w_i)
			\end{split}
		\end{align}
		\item For all $i \in [n]$, reported valuations $\hat{\theta}_i \in \Theta_i$, and valuations $\theta_i, \theta_i^\prime \in \Theta_i$ with $\norm{\theta_i - \theta_i^\prime}_1 \leq w_i$, we have
		\begin{align} \label{equ:proof-main-independent-priors-dispersion-for-valuations}
			\begin{split}
				\abs{ \frac{1}{N} \sum_{j=1}^N u_{i, \text{M}}(\theta_i, \hat{\theta}_i, \beta_{-i}(\theta_{-i}^{(j)})) - u_{i, \text{M}}(\theta_i^\prime, \hat{\theta}_i, \beta_{-i}(\theta_{-i}^{(j)})) }
				\leq \hat{\varepsilon}_{i, \text{disp}}(w_i).
			\end{split}
		\end{align}
	\end{enumerate}
	Let $\theta_i, \hat{\theta}_i \in \Theta_i$. By the definition of $\mathcal{G}_{w_i}$, there exist points $p, \hat{p} \in \mathcal{G}_{w_i}$ such that $\norm{\theta_i - p}_1 \leq w_i$ and $\norm{\hat{\theta}_i - \hat{p}}_1 \leq w_i$. 
	Equation \ref{equ:proof-main-independent-priors-dispersion-for-valuations} results in
	\begin{align*}
		\abs{ \frac{1}{N} \sum_{j=1}^N u_{i, \text{M}}(\theta_i, \hat{p}, \beta_{-i}(\theta_{-i}^{(j)})) - u_{i, \text{M}}(p, \hat{p}, \beta_{-i}(\theta_{-i}^{(j)})) }
		\leq \hat{\varepsilon}_{i, \text{disp}}(w_i),
	\end{align*}
	and
	\begin{align*}
		\abs{ \frac{1}{N} \sum_{j=1}^N u_{i, \text{M}}(p, \beta_i(p), \beta_{-i}(\theta_{-i}^{(j)})) - u_{i, \text{M}}(\theta_i, \beta_i(p), \beta_{-i}(\theta_{-i}^{(j)})) }
		\leq \hat{\varepsilon}_{i, \text{disp}}(w_i).
	\end{align*}
	Equation \ref{equ:proof-main-independent-priors-dispersion-for-bids} gives
	\begin{align*}
		\abs{ \frac{1}{N} \sum_{j=1}^N u_{i, \text{M}}(\theta_i, \hat{\theta}_i, \beta_{-i}(\theta_{-i}^{(j)})) - u_{i, \text{M}}(\theta_i, \hat{p}_i, \beta_{-i}(\theta_{-i}^{(j)})) }
		\leq \hat{\varepsilon}_{i, \text{disp}}(w_i).
	\end{align*}
	Due to the Lipschitz continuity of $\beta_i$, we have $\norm{\beta_i(\theta_i) - \beta_i(p)}_1 \leq L_{\beta_i} w_i$. An additional application of Assumption \ref{ass:general-dispersion-holds-independent-prior} and Equation \ref{equ:proof-main-independent-priors-dispersion-for-bids} gives with probability at least $1 - \delta$
	\begin{align*}
		\abs{ \frac{1}{N} \sum_{j=1}^N u_{i, \text{M}}(\theta_i, \beta_i(p), \beta_{-i}(\theta_{-i}^{(j)})) - u_{i, \text{M}}(\theta_i, \beta_i, \beta_{-i}(\theta_{-i}^{(j)})) }
		\leq \hat{\varepsilon}_{i, \text{disp}}(L_{\beta_i} w_i).
	\end{align*}
	Therefore, combining these statements, we have with probability at least $1 - 2\delta$
	\begin{align*}
		&\abs{ \frac{1}{N} \sum_{j=1}^N u_{i, \text{M}}(\theta_i, \hat{\theta}_i, \beta_{-i}(\theta_{-i}^{(j)})) - u_{i, \text{M}}(\theta_i, \beta_i(\theta_i), \beta_{-i}(\theta_{-i}^{(j)})) } \\
		&\leq \abs{ \frac{1}{N} \sum_{j=1}^N u_{i, \text{M}}(\theta_i, \hat{\theta}_i, \beta_{-i}(\theta_{-i}^{(j)})) - u_{i, \text{M}}(\theta_i, \hat{p}, \beta_{-i}(\theta_{-i}^{(j)})) } \\
		&+ \abs{ \frac{1}{N} \sum_{j=1}^N u_{i, \text{M}}(\theta_i, \hat{p}, \beta_{-i}(\theta_{-i}^{(j)})) - u_{i, \text{M}}(p, \hat{p}, \beta_{-i}(\theta_{-i}^{(j)})) } \\
		&+ \abs{ \frac{1}{N} \sum_{j=1}^N u_{i, \text{M}}(p, \hat{p}, \beta_{-i}(\theta_{-i}^{(j)})) - u_{i, \text{M}}(p, \beta_i(p), \beta_{-i}(\theta_{-i}^{(j)})) } \\
		&+ \abs{ \frac{1}{N} \sum_{j=1}^N u_{i, \text{M}}(p, \beta_i(p), \beta_{-i}(\theta_{-i}^{(j)})) - u_{i, \text{M}}(\theta_i, \beta_i(p), \beta_{-i}(\theta_{-i}^{(j)})) } \\
		&+ \abs{ \frac{1}{N} \sum_{j=1}^N u_{i, \text{M}}(\theta_i, \beta_i(p), \beta_{-i}(\theta_{-i}^{(j)})) - u_{i, \text{M}}(\theta_i, \beta_i(\theta_i), \beta_{-i}(\theta_{-i}^{(j)})) } \\
		& \leq \abs{ \frac{1}{N} \sum_{j=1}^N u_{i, \text{M}}(p, \hat{p}, \beta_{-i}(\theta_{-i}^{(j)})) - u_{i, \text{M}}(p, \beta_i(p), \beta_{-i}(\theta_{-i}^{(j)})) } \\
		&+ 3\hat{\varepsilon}_{i, \text{disp}}(w_i) + \hat{\varepsilon}_{i, \text{disp}}(L_{\beta_i} w_i).
	\end{align*}
	The statement is complete with an additional application of Theorem \ref{thm:pseudo-dim-guarantee-independent-priors-interim-utility}. That is, in total, three different events with probability $1 - \delta$ need to hold. The first comes from the pseudo-dimension concentration bound of Theorem \ref{thm:pseudo-dim-guarantee-independent-priors-interim-utility}. The two other events are the dispersion guarantees from Assumption \ref{ass:general-dispersion-holds-independent-prior} for balls of width $w_i$ and $L_{\beta_i} \cdot w_i$. The combination of these statements gives with probability at least $1-3\delta$
	\begin{align*}
		& \sup_{\theta_i \in \Theta_i} \hat{\ell}_i(\theta_i, \beta_i(\theta_i), \beta_{-i}) = \sup_{\theta_i, \theta_i^\prime \in \Theta_i} \hat{u}_{i, \text{M}}(\theta_i, \theta_i^\prime, \beta_{-i}) - \hat{u}_{i, \text{M}}(\theta_i, \beta_i(\theta_i), \beta_{-i}) \\
		&\leq \sup_{\theta_i, \hat{\theta}_i \in \mathcal{G}_{w_i}} \frac{1}{N} \sum_{j=1}^N u_{i, \text{M}}(\theta_i, \hat{\theta}_i, \beta_{-i}(\theta_{-i}^{(j)})) - u_{i, \text{M}}(\theta_i, \beta_i(\theta_i), \beta_{-i}(\theta_{-i}^{(j)}))\\
		&+ 4\sqrt{\frac{2d_i}{N} \log\left( \frac{e N}{d_i}\right)} + 2\sqrt{\frac{2}{N} \log\left(\frac{2n}{\delta} \right)} + 3\hat{\varepsilon}_{i, \text{disp}}(w_i) + \hat{\varepsilon}_{i, \text{disp}}(L_{\beta_i}w_i).
	\end{align*}
	
\end{proof}

\section{Proofs for interdependent prior distributions Section~\ref{sec:verifying-equilibria-with-interdependent-priors}} \label{sec:appendix-interdependent-prior-proofs}

This section provides the detailed proofs for the error bounds in approximating the \textit{ex ante} utility loss for interdependent prior distributions.

\subsection{Proof of Theorem \ref{thm:constant-best-response-difference-bounded-over-partition}} \label{sec:appendix-interdependent-proof-tv-distance-over-segments}

The partition $\mathcal{B}_i$ determines which segments of the observation space $\mathcal{O}_i$ can be considered collectively. For each element $B$ within $\mathcal{B}_i$, we identify a constant best-response. We show that the error made by this procedure can be bounded in terms of the total variation distance between prior distributions conditioned on observations from $B$.
We show this for a single segment $B \in \mathcal{B}_i$ first.


\begin{lemma} \label{thm:constant-best-response-difference-bounded-on-single-set}
	Let $B \subset \mathcal{O}_i$ and $\beta_{-i} \in \Sigma_{-i}$ be an opponent strategy profile for agent $i$.
	Then one can bound the largest difference of the \textit{ex interim} best-response utility and the utility of a constant best-response over $B$ by
	\begin{align*}
		&\sup_{o_i \in B} \sup_{b_i \in \mathcal{A}_i} \mathbb{E}_{o_{-i}, \theta_{i}|o_i} \left[u_{i, \text{M}}(\theta_i, b_i, \beta_{-i}(o_{-i})) \right] \\
		&- \sup_{b_i \in \mathcal{A}_i} \mathbb{E}_{\giventhat{\tilde{o}_i, o_{-i}, \theta_i}{ \{o_i \in B\}}} \left[u_{i, \text{M}}\left(\theta_i, b_i, \beta_{-i}(o_{-i}) \right) \right] \leq 2 \norm{u_{i, \text{M}}}_\infty \cdot \sup_{\hat{o}_i, \hat{o}_i^\prime \in B} d_{\text{TV}}\left(F_{\giventhat{\theta_i, o_{-i}}{\hat{o}_i}}, F_{\giventhat{\theta_i, o_{-i}}{\hat{o}_i^\prime}} \right).
	\end{align*}
	Furthermore, the difference between a best-response over the set $\restr{\Sigma_{i}}{B}$ of bidding functions restricted to $B$ and a constant best-response is bounded by
	\begin{align*}
		&\sup_{\beta_i^\prime \in \restr{\Sigma_{i}}{B}} \mathbb{E}_{\giventhat{o_i, o_{-i}, \theta_i}{ \{o_i \in B\}}} \left[u_{i, \text{M}}\left(\theta_i, \beta_i^\prime(o_i), \beta_{-i}(o_{-i}) \right) \right] \\
		&- \sup_{b_i \in \mathcal{A}_i} \mathbb{E}_{\giventhat{o_i, o_{-i}, \theta_i}{ \{o_i \in B\}}} \left[u_{i, \text{M}}\left(\theta_i, b_i, \beta_{-i}(o_{-i}) \right) \right] \leq 2 \norm{u_{i, \text{M}}}_\infty \cdot \sup_{\hat{o}_i, \hat{o}_i^\prime \in B} d_{\text{TV}}\left(F_{\giventhat{\theta_i, o_{-i}}{\hat{o}_i}}, F_{\giventhat{\theta_i, o_{-i}}{\hat{o}_i^\prime}} \right).
	\end{align*}
\end{lemma}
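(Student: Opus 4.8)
The plan is to reduce both inequalities to a single pointwise estimate coming from Theorem~\ref{thm:bound-distance-of-integrals-by-total-variation}. Write $\tau_B := \sup_{\hat{o}_i, \hat{o}_i' \in B} d_{\text{TV}}\bigl(F_{\giventhat{\theta_i, o_{-i}}{\hat{o}_i}}, F_{\giventhat{\theta_i, o_{-i}}{\hat{o}_i'}}\bigr)$. For any fixed bid $b_i \in \mathcal{A}_i$, applying Theorem~\ref{thm:bound-distance-of-integrals-by-total-variation} with the bounded function $g(\theta_i, o_{-i}) := u_{i,\text{M}}(\theta_i, b_i, \beta_{-i}(o_{-i}))$ and the measures $F_{\giventhat{\theta_i, o_{-i}}{o_i}}$, $F_{\giventhat{\theta_i, o_{-i}}{o_i'}}$ gives, for all $o_i, o_i' \in B$,
\[
\mathbb{E}_{o_{-i}, \theta_i \mid o_i}\!\left[u_{i,\text{M}}(\theta_i, b_i, \beta_{-i}(o_{-i}))\right] \le \mathbb{E}_{o_{-i}, \theta_i \mid o_i'}\!\left[u_{i,\text{M}}(\theta_i, b_i, \beta_{-i}(o_{-i}))\right] + 2\norm{u_{i,\text{M}}}_\infty \tau_B .
\]
This is the only step that invokes the total variation distance; everything afterwards is bookkeeping with suprema and the law of total expectation.

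For the first inequality I would use the tower property $\mathbb{E}_{\tilde{o}_i, o_{-i}, \theta_i \mid \{o_i \in B\}}[\,\cdot\,] = \mathbb{E}_{\tilde{o}_i \mid \{o_i \in B\}}\,\mathbb{E}_{o_{-i}, \theta_i \mid \tilde{o}_i}[\,\cdot\,]$, noting that the law of $\tilde{o}_i$ conditioned on $\{o_i \in B\}$ is supported on $B$ (here $\tilde{o}_i$ is just $o_i$ renamed). Fixing $o_i \in B$ and $b_i$ and integrating the displayed estimate over $\tilde{o}_i$ yields $\mathbb{E}_{o_{-i}, \theta_i \mid o_i}[u_{i,\text{M}}(\theta_i, b_i, \beta_{-i}(o_{-i}))] \le \mathbb{E}_{\tilde{o}_i, o_{-i}, \theta_i \mid \{o_i \in B\}}[u_{i,\text{M}}(\theta_i, b_i, \beta_{-i}(o_{-i}))] + 2\norm{u_{i,\text{M}}}_\infty \tau_B$; taking the supremum over $b_i$ on both sides (the $b_i$-dependent term on the right being absorbed into $\sup_{b_i}\mathbb{E}_{\,\cdot\,\mid\{o_i\in B\}}[\,\cdot\,]$) and then the supremum over $o_i \in B$ on the left is exactly the first claim.

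For the second inequality, for any $\beta_i' \in \restr{\Sigma_i}{B}$ I would again apply the tower property over $o_i$ and then dominate the inner expectation pointwise by its supremum over constant bids:
\[
\mathbb{E}_{o_i, o_{-i}, \theta_i \mid \{o_i \in B\}}\!\left[u_{i,\text{M}}(\theta_i, \beta_i'(o_i), \beta_{-i}(o_{-i}))\right] \le \sup_{o_i \in B}\ \sup_{b_i \in \mathcal{A}_i} \mathbb{E}_{o_{-i}, \theta_i \mid o_i}\!\left[u_{i,\text{M}}(\theta_i, b_i, \beta_{-i}(o_{-i}))\right],
\]
using once more that the conditional law of $o_i$ is supported on $B$. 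The right-hand side is precisely the left-hand side of the first inequality, so substituting the bound just proven and then taking the supremum over $\beta_i'$ completes the proof. Crucially, this route never requires selecting an optimal bid or strategy as a measurable function of $o_i$—we only ever bound an expectation by a supremum, which is valid pointwise—so no measurable-selection argument is needed. The only genuine care point is the order in which suprema and expectations are taken, and keeping in mind that Theorem~\ref{thm:bound-distance-of-integrals-by-total-variation} furnishes a two-sided bound of which only one direction is used; that is as close to an obstacle as this lemma has.
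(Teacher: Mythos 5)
Your proposal is correct and follows essentially the same route as the paper's proof: apply Theorem~\ref{thm:bound-distance-of-integrals-by-total-variation} for a fixed bid, average over $\tilde{o}_i$ conditioned on $\{o_i \in B\}$ via the tower property, pass to suprema, and reduce the second claim to the first by bounding the functional best response by $\sup_{o_i \in B}\sup_{b_i}$ of the \textit{ex interim} utility. The only (immaterial) difference is that the paper fixes an $\epsilon$-optimal constant bid $b_i^*$ and lets $\epsilon \to 0$, whereas you argue for every fixed $b_i$ and then take the supremum.
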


\begin{proof}
	Let $\epsilon > 0$ and $o_i \in B$. Choose $b_i^* \in \mathcal{A}_i$ such that it is within $\epsilon$ of the best-response utility, that is, 
	\begin{align} \label{equ:chosen-eps-constant-interim-best-response}
		\sup_{b_i \in \mathcal{A}_i} \mathbb{E}_{o_{-i}, \theta_i | o_i} \left[u_{i, \text{M}}(\theta, b_i, \beta_{-i}(o_{-i})) \right] - \epsilon \leq \mathbb{E}_{o_{-i}, \theta_i | o_i} \left[u_{i, \text{M}}(\theta, b_i^*, \beta_{-i}(o_{-i})) \right].
	\end{align}	
	Then, 
	\begin{align}
		&\sup_{b_i \in \mathcal{A}_i} \mathbb{E}_{o_{-i}, \theta_{i}|o_i} \left[u_{i, \text{M}}(\theta_i, b_i, \beta_{-i}(o_{-i})) \right] - \sup_{b_i^\prime \in \mathcal{A}_i} \mathbb{E}_{\giventhat{\tilde{o}_i, o_{-i}, \theta_i}{ \{\tilde{o}_i \in B\}}} \left[u_{i, \text{M}}\left(\theta_i, b_i^\prime, \beta_{-i}(o_{-i}) \right) \right] \nonumber \\
		& \overset{\text{Equ. } \ref{equ:chosen-eps-constant-interim-best-response}}{\leq} \mathbb{E}_{o_{-i}, \theta_{i}|o_i} \left[u_{i, \text{M}}(\theta_i, b_i^*, \beta_{-i}(o_{-i})) \right] - \sup_{b_i^\prime \in \mathcal{A}_i} \mathbb{E}_{\giventhat{\tilde{o}_i, o_{-i}, \theta_i}{ \{\tilde{o}_i \in B\}}} \left[u_{i, \text{M}}\left(\theta_i, b_i^\prime, \beta_{-i}(o_{-i}) \right) \right] + \epsilon \nonumber \\
		& \leq \mathbb{E}_{o_{-i}, \theta_{i}|o_i} \left[u_{i, \text{M}}(\theta_i, b_i^*, \beta_{-i}(o_{-i})) \right] - \mathbb{E}_{\giventhat{\tilde{o}_i, o_{-i}, \theta_i}{ \{\tilde{o}_i \in B\}}} \left[u_{i, \text{M}}\left(\theta_i, b_i^*, \beta_{-i}(o_{-i}) \right) \right] + \epsilon  \nonumber \\
		& = \mathbb{E}_{\giventhat{\tilde{o}_i}{{ \{\tilde{o}_i \in B\}}}} \left[\mathbb{E}_{o_{-i}, \theta_{i}|o_i} \left[u_{i, \text{M}}(\theta_i, b_i^*, \beta_{-i}(o_{-i})) \right] - \mathbb{E}_{o_{-i}, \theta_{i}|\tilde{o}_i} \left[u_{i, \text{M}}(\theta_i, b_i^*, \beta_{-i}(o_{-i})) \right] \right] + \epsilon \nonumber \\
		&\overset{\text{Theorem }\ref{thm:bound-distance-of-integrals-by-total-variation}}{\leq} \mathbb{E}_{\giventhat{\tilde{o}_i}{{ \{\tilde{o}_i \in B\}}}} \left[2 \norm{u_{i, \text{M}}}_\infty \cdot d_{\text{TV}}\left(F_{\giventhat{\theta_i, o_{-i}}{o_i}}, F_{\giventhat{\theta_i, o_{-i}}{\tilde{o}_i}} \right)  \right] + \epsilon \nonumber \\
		& \leq  2 \norm{u_{i, \text{M}}}_\infty \mathbb{E}_{\giventhat{\tilde{o}_i}{{ \{\tilde{o}_i \in B\}}}} \left[ \sup_{\hat{o}_i, \hat{o}_i^\prime \in B} d_{\text{TV}}\left(F_{\giventhat{\theta_i, o_{-i}}{\hat{o}_i}}, F_{\giventhat{\theta_i, o_{-i}}{\hat{o}_i^\prime}} \right)   \right] + \epsilon \nonumber \\
		&= 2 \norm{u_{i, \text{M}}}_\infty \cdot \sup_{\hat{o}_i, \hat{o}_i^\prime \in B} d_{\text{TV}}\left(F_{\giventhat{\theta_i, o_{-i}}{\hat{o}_i}}, F_{\giventhat{\theta_i, o_{-i}}{\hat{o}_i^\prime}} \right) + \epsilon. \label{equ:proof-interim-constant-br-tv-distance-interim-case}
	\end{align}
	As $\epsilon$ and $o_i$ were chosen arbitrarily, the first statement follows.
	
	For the second statement, observe that the best-response \textit{ex ante} utility over $B$ is bounded by the largest \textit{ex interim} best-response utility over $B$. More specifically,
	\begin{align}
		&\sup_{\beta_i^\prime \in \restr{\Sigma_{i}}{B}} \mathbb{E}_{\giventhat{o_i, o_{-i}, \theta_i}{ \{o_i \in B\}}} \left[u_{i, \text{M}}\left(\theta_i, \beta_i^\prime(o_i), \beta_{-i}(o_{-i}) \right) \right] \nonumber \\
		& = \sup_{\beta_i^\prime \in \restr{\Sigma_{i}}{B}} \mathbb{E}_{\giventhat{o_i}{ \{o_i \in B\}}} \left[ \mathbb{E}_{o_{-i}, \theta_i | o_i} \left[u_{i, \text{M}}\left(\theta_i, \beta_i^\prime(o_i), \beta_{-i}(o_{-i}) \right) \right] \right] \nonumber \\
		& \leq  \mathbb{E}_{\giventhat{o_i}{ \{o_i \in B\}}}  \left[ \sup_{\beta_i^\prime \in \restr{\Sigma_{i}}{B}} \mathbb{E}_{o_{-i}, \theta_i | o_i} \left[u_{i, \text{M}}\left(\theta_i, \beta_i^\prime(o_i), \beta_{-i}(o_{-i}) \right) \right] \right] \nonumber \\
		&= \mathbb{E}_{\giventhat{o_i}{ \{o_i \in B\}}}  \left[ \sup_{b_i \in \mathcal{A}_i} \mathbb{E}_{o_{-i}, \theta_i | o_i} \left[u_{i, \text{M}}(\theta_i, b_i, \beta_{-i}(o_{-i})) \right] \right] \nonumber \\
		&\leq \sup_{o_i \in B} \sup_{b_i \in \mathcal{A}_i} \mathbb{E}_{o_{-i}, \theta_i | o_i} \left[u_{i, \text{M}}(\theta_i, b_i, \beta_{-i}(o_{-i})) \right]. \label{equ:proof-interim-constant-br-tv-distance-ex-ante-bounded-by-interim}
	\end{align}
	Therefore, using the first statement, we get
	\begin{align*}
		&\sup_{\beta_i^\prime \in \restr{\Sigma_{i}}{B}} \mathbb{E}_{\giventhat{o_i, o_{-i}, \theta_i}{ \{o_i \in B\}}} \left[u_{i, \text{M}}\left(\theta_i, \beta_i^\prime(o_i), \beta_{-i}(o_{-i}) \right) \right] - \sup_{b_i^\prime \in \mathcal{A}_i} \mathbb{E}_{\giventhat{\tilde{o}_i, o_{-i}, \theta_i}{ \{\tilde{o}_i \in B\}}} \left[u_{i, \text{M}}\left(\theta_i, b_i^\prime, \beta_{-i}(o_{-i}) \right) \right] \\
		& \overset{\text{Equ. } \ref{equ:proof-interim-constant-br-tv-distance-ex-ante-bounded-by-interim}}{\leq} \sup_{o_i \in B} \sup_{b_i \in \mathcal{A}_i} \mathbb{E}_{o_{-i}, \theta_i | o_i} \left[u_{i, \text{M}}(\theta_i, b_i, \beta_{-i}(o_{-i})) \right] - \sup_{b_i^\prime \in \mathcal{A}_i} \mathbb{E}_{\giventhat{\tilde{o}_i, o_{-i}, \theta_i}{ \{\tilde{o}_i \in B\}}} \left[u_{i, \text{M}}\left(\theta_i, b_i^\prime, \beta_{-i}(o_{-i}) \right) \right] \\
		& \overset{\text{Equ. } \ref{equ:proof-interim-constant-br-tv-distance-interim-case}}{\leq} 2 \norm{u_{i, \text{M}}}_\infty \cdot \sup_{\hat{o}_i, \hat{o}_i^\prime \in B} d_{\text{TV}}\left(F_{\giventhat{\theta_i, o_{-i}}{\hat{o}_i}}, F_{\giventhat{\theta_i, o_{-i}}{\hat{o}_i^\prime}} \right).
	\end{align*} 
\end{proof}

The previous lemma indicates that the error incurred by employing a constant best-response, as opposed to a functional one over a set $B$, can be managed provided that the conditional distribution does not change too much. This simplifies the utility loss estimation process considerably, as the error introduced by constant best-responses can be bounded by the maximum total variation distance of the conditional distributions for observations from $B$.
The following theorem expands upon this result, applying it across the entire partition $\mathcal{B}_i$ of $\mathcal{O}_i$.

\begin{customthm}{6.3}
	Let $\mathcal{B}_i = \left\{B_1, \dots, B_{N_{\mathcal{B}_i}} \right\}$ be a partition of $\mathcal{O}_i$.
	The difference between a best-response utility over function space to best-responses that are constant for every $B_k$ satisfies
	\begin{align*}
		&\sup_{\beta_i^\prime \in \Sigma_{i}} \tilde{u}_{i, \text{M}}(\beta_i^\prime, \beta_{-i}) - \sup_{b \in \mathcal{A}_i^{N_{\mathcal{B}_i}}} \tilde{u}_{i, \text{M}} \left(\sum_{k=1}^{N_{\mathcal{B}_i}} b_k \mathds{1}_{B_k}, \beta_{-i} \right) \leq 2 \sum_{k=1}^{N_{\mathcal{B}_i}} P(o_i \in B_k) \tau_{i, B_k},
	\end{align*}
	with $\tau_{i, B_k} := \sup_{\hat{o}_i, \hat{o}_i^\prime \in B_k} d_{\text{TV}}\left(F_{\giventhat{\theta_i, o_{-i}}{\hat{o}_i}}, F_{\giventhat{\theta_i, o_{-i}}{\hat{o}_i^\prime}} \right)$.
	If there exists a constant $L_{B_k} > 0$ such that $d_{\text{TV}}\left(F_{\giventhat{\theta_i, o_{-i}}{\hat{o}_i}}, F_{\giventhat{\theta_i, o_{-i}}{\hat{o}_i^\prime}} \right) \leq L_{B_k} \norm{o_i - o_i^\prime}$ for $o_i, o_i^\prime \in B_k$, then $\tau_{i, B_k} \leq L_{B_k} \text{diam}(B_k)$, where $\text{diam}(B_k)$ denotes $B_k$'s diameter.
\end{customthm}

\begin{proof}
	Let $o_i \in \mathcal{O}_i$. Then, there exists a unique $B_k \in \mathcal{B}_i$ such that $o_i \in B_k$. The error between the \textit{ex interim} best-response utility and the constant best-response utility over $B_k$ can be bounded by
	\begin{align*}
		&\sup_{b_i \in \mathcal{A}_i} \mathbb{E}_{o_{-i}, \theta_{i}|o_i} \left[u_{i, \text{M}}(\theta_i, b_i, \beta_{-i}(o_{-i})) \right] - \sup_{b_i \in \mathcal{A}_i} \mathbb{E}_{\giventhat{\tilde{o}_i, o_{-i}, \theta_i}{\left\{\tilde{o}_i \in B_k \right\}}} \left[u_{i, \text{M}}\left(\theta_i, b_i, \beta_{-i}(o_{-i}) \right) \right] \\
		& \overset{\text{Lemma } \ref{thm:constant-best-response-difference-bounded-on-single-set}}{\leq} 2 \norm{u_{i, \text{M}}}_\infty \cdot \sup_{\hat{o}_i, \hat{o}_i^\prime \in B_k} d_{\text{TV}}\left(F_{\giventhat{\theta_i, o_{-i}}{\hat{o}_i}}, F_{\giventhat{\theta_i, o_{-i}}{\hat{o}_i^\prime}} \right) = 2 \norm{u_{i, \text{M}}}_\infty \tau_{i, B_k}.
	\end{align*}
	We rewrite the best-response \textit{ex ante} utilities using the law of total expectation. For the first term follows
	\begin{align*}
		\sup_{\beta_i^\prime \in \Sigma_{i}} \tilde{u}_{i, \text{M}}(\beta_i^\prime, \beta_{-i}) &= \sup_{\beta_i^\prime \in \Sigma_{i}} \mathbb{E}_{o_i, o_{-i}, \theta_i} \left[u_{i, \text{M}}\left(\theta_i, \beta_i^\prime(o_i), \beta_{-i}(o_{-i}) \right) \right] \\
		&= \sup_{\beta_i^\prime \in \Sigma_{i}} \sum_{k=1}^{N_{\mathcal{B}_i}} P \left(o_i \in B_k \right) \mathbb{E}_{o_i, o_{-i}, \theta_i | \left\{o_i \in B_k \right\}} \left[u_{i, \text{M}}\left(\theta_i, \beta_i^\prime(o_i), \beta_{-i}(o_{-i}) \right) \right] \\
		& = \sum_{k=1}^{N_{\mathcal{B}_i}} P \left(o_i \in B_k \right) \sup_{\beta_i^\prime \in \restr{\Sigma_{i}}{B_k}} \mathbb{E}_{o_i, o_{-i}, \theta_i | \left\{o_i \in B_k \right\}} \left[u_{i, \text{M}}\left(\theta_i, \beta_i^\prime(o_i), \beta_{-i}(o_{-i}) \right) \right],
	\end{align*}
	where $\restr{\Sigma_{i}}{B_k}$ denotes the restriction of the bidding strategies to $B_k$.
	We have for the second term
	\begin{align*}
		&\sup_{b \in \mathcal{A}_i^{N_{\mathcal{B}_i}}} \mathbb{E}_{o_i, o_{-i}, \theta_i} \left[u_{i, \text{M}}\left(\theta_i, \sum_{k=1}^{N_{\mathcal{B}_i}} b_k \mathds{1}_{B_k}(o_i), \beta_{-i}(o_{-i}) \right) \right] \\
		& = \sup_{b \in \mathcal{A}_i^{N_{\mathcal{B}_i}}} \sum_{k=1}^{N_{\mathcal{B}_i}} P \left(o_i \in B_k \right) \mathbb{E}_{o_i, o_{-i}, \theta_i | \left\{o_i \in B_k \right\}} \left[u_{i, \text{M}}\left(\theta_i, b_k, \beta_{-i}(o_{-i}) \right) \right] \\
		& = \sum_{k=1}^{N_{\mathcal{B}_i}} P \left(o_i \in B_k \right) \sup_{b_k \in \mathcal{A}_i} \mathbb{E}_{o_i, o_{-i}, \theta_i | \left\{o_i \in B_k \right\}} \left[u_{i, \text{M}}\left(\theta_i, b_k, \beta_{-i}(o_{-i}) \right) \right].
	\end{align*}
	Combing these two transformations gives
	\begin{align*}
		&\sup_{\beta_i^\prime \in \Sigma_{i}} \mathbb{E}_{o_i, o_{-i}, \theta_i} \left[u_{i, \text{M}}\left(\theta_i, \beta_i^\prime(o_i), \beta_{-i}(o_{-i}) \right) \right] \\
		&- \sup_{b \in \mathcal{A}_i^{N_{\mathcal{B}_i}}} \mathbb{E}_{o_i, o_{-i}, \theta_i} \left[u_{i, \text{M}}\left(\theta_i, \sum_{k=1}^{N_{\mathcal{B}_i}} b_k \mathds{1}_{B_k}(o_i), \beta_{-i}(o_{-i}) \right) \right] \\
		& \leq \sum_{k=1}^{N_{\mathcal{B}_i}} P \left(o_i \in B_k \right) \left(\sup_{\beta_i^\prime \in \restr{\Sigma_{i}}{B_k}} \mathbb{E}_{o_i, o_{-i}, \theta_i | \left\{o_i \in B_k \right\}} \left[u_{i, \text{M}}\left(\theta_i, \beta_i^\prime(o_i), \beta_{-i}(o_{-i}) \right) \right] \right. \\
		& - \left. \sup_{b_k \in \mathcal{A}_i} \mathbb{E}_{o_i, o_{-i}, \theta_i | \left\{o_i \in B_k \right\}} \left[u_{i, \text{M}}\left(\theta_i, b_k, \beta_{-i}(o_{-i}) \right) \right] \right)\\
		& \overset{\text{Lemma } \ref{thm:constant-best-response-difference-bounded-on-single-set}}{\leq} 2 \norm{u_{i, \text{M}}}_\infty \sum_{k=1}^{N_{\mathcal{B}_i}} P \left(o_i \in B_k \right) \tau_{i, B_k}.
	\end{align*}
	
	For arbitrary $o_i, o_i^\prime \in B_k$, we have $\norm{o_i - o_i^\prime} \leq \text{diam}(B_k)$. Therefore, if there exists a constant $L_{B_k} > 0$ such that $d_{\text{TV}}\left(g_{B_k}(o_i), g_{B_k}(o_i^\prime) \right) \leq L_{B_k} \norm{o_i - o_i^\prime}$ for $o_i, o_i^\prime \in B_k$, then $\tau_{i, B_k} \leq L_{B_k} \text{diam}(B_k)$.
\end{proof}

\subsection{Proof of Theorems~\ref{thm:interdependent-prior-case-estimate-ex-ante-utility-hoeffding} and \ref{thm:finite-sample-approximation-of-constant-best-response-utility-pseudo-dimension}} \label{sec:appendix-interdependent-proof-sampling-based-estimate}

\begin{customthm}{6.4}
	Let $\beta \in \Sigma$ be a strategy profile. With probability $1 - \delta$ over the draw of the dataset $\mathcal{D}^{\beta}$, we have for every agent $i \in [n]$
	\begin{align*}
		\abs{\tilde{u}_{i, \text{M}}(\beta_i, \beta_{-i}) - \frac{1}{N} \sum_{j=1}^N u_{i, \text{M}}\left(\theta_{i}^{(j)}, \beta_{i}(o_i^{(j)}), \beta_{-i}(o_{-i}^{(j)})\right) } \leq \sqrt{\frac{2}{N}\log\left(\frac{2 n}{\delta}\right)}.
	\end{align*}
\end{customthm}

\begin{proof}
	Fix an agent $i \in [n]$. $u_{i, \text{M}}(\theta_i, \beta_i(o_i), \beta_{-i}(o_{-i}))$ with $\left(\theta_i, \beta_i(o_i), \beta_{-i}(o_{-i}) \right) \sim F^\beta$ is a random variable with a distribution over $[-1, 1]$.
	
	The values $u_{i, \text{M}}\left(\theta_{i}^{(1)}, \beta_{i}(o_i^{(1)}), \beta_{-i}(o_{-i}^{(1)})\right), \dots, u_{i, \text{M}}\left(\theta_{i}^{(N)}, \beta_{i}(o_i^{(N)}), \beta_{-i}(o_{-i}^{(N)})\right)$ are i.i.d. samples from this distribution with $\left(\theta_{i}^{(j)}, \beta_{i}(o_i^{(j)}), \beta_{-i}(o_{-i}^{(j)})\right)$ coming from the dataset $\mathcal{D}^{\beta}$ for $1 \leq j \leq N$.
	By applying Hoeffding's inequality (Theorem~\ref{thm:hoeffding-inequality}), we get with probability at least $1 - \frac{\delta}{n}$
	\begin{align*}
		\abs{\tilde{u}_{i, \text{M}}(\beta_i, \beta_{-i}) - \frac{1}{N} \sum_{j=1}^N u_{i, \text{M}}\left(\theta_{i}^{(j)}, \beta_{i}(o_i^{(j)}), \beta_{-i}(o_{-i}^{(j)})\right) } \leq \sqrt{\frac{2}{N}\log\left(\frac{2 n}{\delta}\right)}.
	\end{align*}
	The statement follows by applying a union bound over the set of agents $[n]$. 
\end{proof}

\begin{customthm}{6.5}
	With probability $1 - \delta$ over the draw of the $n$ sets $\mathcal{D}^{\beta}(\mathcal{B}_1), \dots, \mathcal{D}^{\beta}(\mathcal{B}_n)$, for partitions $\mathcal{B}_i = \left\{B_1, \dots, B_{N_{\mathcal{B}_i}} \right\}$ of $\mathcal{O}_i$ for every agent $i \in [n]$, we have
	\begin{align*}
		&\abs{\sup_{b_i \in \mathcal{A}_i} \mathbb{E}_{\giventhat{o_i, o_{-i}, \theta_i}{\left\{o_i \in B_k \right\}}} \left[u_{i, \text{M}}\left(\theta_i, b_i, \beta_{-i}(o_{-i}) \right) \right] - \sup_{b_i \in \mathcal{A}_i} \frac{1}{N_{B_k}} \sum_{j=1}^{N_{B_k}} u_{i, \text{M}}\left(\theta_{i}^{(j)}, b_i, \beta_{-i}(o_{-i}^{(j)})\right) } \\
		& \leq  \tilde{\varepsilon}_{i, \text{Pdim}}(N_{B_k}),\\
		&\mbox{ with } \tilde{\varepsilon}_{i, \text{Pdim}}\left(N_{B_k}\right) := 2\sqrt{\frac{2 d_i}{N_{B_k}} \log\left(\frac{e N_{B_k}}{d_i} \right) } + \sqrt{\frac{2}{N_{B_k}} \log \left(\frac{n N_{\mathcal{B}_{\text{max}}}}{\delta} \right)}, \mbox{ and } d_i := \text{Pdim} \left(\tilde{\mathcal{F}}_{i, \text{M}} \right).
	\end{align*}
\end{customthm}

\begin{proof}
	Fix an agent $i \in [n]$ and a segment $B_k \in \mathcal{B}_i$. Note that we can write the \textit{ex ante} utility given the event $\left\{o_i \in B_k \right\}$ and bid $b_i \in \mathcal{A}_i$ as
	\begin{align*}
		&\mathbb{E}_{\giventhat{o_i, o_{-i}, \theta_i}{\left\{o_i \in B_k \right\}}} \left[u_{i, \text{M}}\left(\theta_i, b_i, \beta_{-i}(o_{-i}) \right) \right] \\
		&= \mathbb{E}_{o_i, \beta_{-i}(o_{-i}), \theta_i \sim \giventhat{F^{\beta_{-i}}}{\left\{o_i \in B_k \right\}}} \left[u_{i, \text{M}}\left(\theta_i, b_i, \beta_{-i}(o_{-i}) \right) \right].
	\end{align*}
	
	By Theorem \ref{thm:pollard-pac-bound-general-distribution}, we get that with probability at least $1 - \frac{\delta}{nN_{\mathcal{B}_{\text{max}}}}$ over the draw of $\mathcal{D}^{\beta_{-i}}(B_k)$ for all $b_i \in \mathcal{A}_i$,
	\begin{align*}
		&\mathbb{E}_{o_i, \beta_{-i}(o_{-i}), \theta_i \sim \giventhat{F^{\beta_{-i}}}{\left\{o_i \in B_k \right\}}} \left[u_{i, \text{M}}\left(\theta_i, b_i, \beta_{-i}(o_{-i}) \right) \right] - \frac{1}{N_{B_k}} \sum_{j=1}^{N_{B_k}} u_{i, \text{M}}\left(\theta_{i}^{(j)}, b_i, \beta_{-i}(o_{-i}^{(j)})\right) \\
		& \leq 2 \sqrt{\frac{2 d_i}{N_{B_k}} \log\left(\frac{e N_{B_k}}{d_i} \right) } + \sqrt{\frac{2}{N_{B_k}} 
			\log\left(\frac{n N_{\mathcal{B}_{\text{max}}}}{\delta} \right)}.
	\end{align*}
	Since this holds for all $b_i \in \mathcal{A}_i$, we get the statement for the choice of $i$ and $B_k$. Taking a union bound over all $i \in [n]$ and segments $B_k \in \mathcal{B}_i$ yields the final statement.
\end{proof}

\subsection{Proof of Theorem~\ref{thm:complete-approximation-bound-interdependent-prior}} \label{sec:appendix-interdependent-proof-complete-bound}
\begin{customlemma}{6.7}
	Let $\delta > 0$, $\beta \in \tilde{\Sigma}$ be a strategy profile, and $\text{M}$ be a mechanism. Suppose that for each agent $i \in [n]$ and segment $B_k \in \mathcal{B}_i$, Assumption \ref{ass:bidding-dispersion-holds-interdependent-prior} holds for $w_i >0$ and $v_i(w_i)$. 
	Then, with probability $1- \delta$ over the draw of the sets $\left\{\giventhat{\mathcal{D}^{\beta}(\mathcal{B}_i)}{i \in [n]} \right\}$, agents $i \in [n]$, and segments $B_k \in \mathcal{B}_i$,
	\begin{align*}
		&\abs{\sup_{b_i \in \mathcal{A}_i} \frac{1}{N_{B_k}} \sum_{j=1}^{N_{B_k}} u_{i, \text{M}}\left(\theta_{i}^{(j)}, b_i, \beta_{-i}(o_{-i}^{(j)})\right) -  \max_{b_i \in \mathcal{G}_w} \frac{1}{N_{B_k}} \sum_{j=1}^{N_{B_k}} u_{i, \text{M}}\left(\theta_{i}^{(j)}, b_i, \beta_{-i}(o_{-i}^{(j)})\right)} \\
		&\leq \frac{N_{B_k} - v_{i, B_k}\left(w_{i} \right)}{N_{B_k}} L_{i} w_{i} + \frac{2v_{i, B_k}\left(w_{i} \right)}{N_{B_k}} =: \tilde{\varepsilon}_{i, \text{disp}}(N_{B_k}).
	\end{align*}
\end{customlemma}

\begin{proof}
	Fix agent $i \in [n]$ and $B_k \in \mathcal{B}_i$.
	Let $b_i, b_i^\prime \in \mathcal{A}_i = [0, 1]^m$ with $\norm{b_i - b_i^\prime}_1 \leq  w_{i}$. When considering the following difference for a specific $j \in [N_{B_k}]$
	\begin{align} \label{equ:single-sample-difference-either-lipschitz-or-discontinuous}
		u_{i, \text{M}}\left(\theta_{i}^{(j)}, b_i, \beta_{-i}(o_{-i}^{(j)})\right) - u_{i, \text{M}}\left(\theta_{i}^{(j)}, b_i^\prime, \beta_{-i}(o_{-i}^{(j)})\right),
	\end{align}
	then, either $u_{i, \text{M}}\left(\theta_{i}^{(j)}, \cdot, \beta_{-i}(o_{-i}^{(j)})\right)$ is $L_i$-Lipschitz continuous over $[b_i, b_i^\prime]$ or there is a jump discontinuity. In the first case, we can bound the difference in Equation \ref{equ:single-sample-difference-either-lipschitz-or-discontinuous} by $L_i\norm{b_i - b_i^\prime}_1$, and in the second, we can bound it by $2\norm{u_{i, \text{M}}}_\infty$. While the second bound is trivial, dispersion guarantees that with high probability this case can happen at most $\frac{v_{i, B_k}\left(w_{i} \right)}{N_{B_k}}$ times. Therefore, by the definition of dispersion, we know that with probability $1-\delta$ over the draw of the sets $\left\{\giventhat{\mathcal{D}^{\beta}(B_k)}{B_k \in \mathcal{B}_i, i \in [n]} \right\}$, for mechanism $\text{M}$, agents $i \in [n]$, and segments $B_k \in \mathcal{B}_i$, we have for all $b_i, b_i^\prime \in \mathcal{A}_i = [0, 1]^m$ with $\norm{b_i - b_i^\prime}_1 \leq  w_{i}$ that
	\begin{align*}
		&\abs{\frac{1}{N_{B_k}} \sum_{j=1}^{N_{B_k}} u_{i, \text{M}}\left(\theta_{i}^{(j)}, b_i, \beta_{-i}(o_{-i}^{(j)})\right) - u_{i, \text{M}}\left(\theta_{i}^{(j)}, b_i^\prime, \beta_{-i}(o_{-i}^{(j)})\right)} \\
		&\leq \frac{N_{B_k} - v_{i, B_k}\left(w_{i} \right)}{N_{B_k}} L_{i} w_{i} + \frac{v_{i, B_k}\left(w_{i} \right)}{N_{B_k}}2\norm{u_{i, \text{M}}}_\infty.
	\end{align*}
	Let $b_i \in \mathcal{A}_i$ be arbitrary. By the definition of $\mathcal{G}_w$, there must be a point $p \in \mathcal{G}_w$ such that $\norm{b_i - p}_1 \leq  w_{i}$. Therefore, with probability $1 - \delta$
	\begin{align*}
		&\abs{\frac{1}{N_{B_k}} \sum_{j=1}^{N_{B_k}} u_{i, \text{M}}\left(\theta_{i}^{(j)}, b_i, \beta_{-i}(o_{-i}^{(j)})\right) - \frac{1}{N_{B_k}} \sum_{j=1}^{N_{B_k}} u_{i, \text{M}}\left(\theta_{i}^{(j)}, p, \beta_{-i}(o_{-i}^{(j)})\right)}\\
		&\leq \frac{N_{B_k} - v_{i, B_k}\left(w_{i} \right)}{N_{B_k}} L_{i} w_{i} + \frac{v_{i, B_k}\left(w_{i} \right)}{N_{B_k}}2\norm{u_{i, \text{M}}}_\infty.
	\end{align*}
\end{proof}

\begin{lemma}\label{thm:aux-lemma-approximate-convex-combination-of-bounded-variables}
	Let for every agent $i \in [n]$, $\mathcal{B}_i = \left\{B_1, \dots, B_{N_{\mathcal{B}_i}} \right\}$ be a partition of $\mathcal{O}_i$, $a_1, \dots, a_{N_{\mathcal{B}_i}} \in [0, 2]$, and $\delta \in (0, 1)$. Then, with probability at least $1 - \delta$ over the draw of the dataset $\mathcal{D}$, we have for all agents $i \in [n]$
	\begin{align*}
		\abs{\sum_{k=1}^{N_{\mathcal{B}_i} } \left(P(o_i \in B_k) - \frac{N_{B_k}}{N} \right) a_k} \leq \sqrt{\frac{2}{N} \log\left( \frac{2n}{\delta} \right)}.
	\end{align*}
\end{lemma}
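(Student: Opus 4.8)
The plan is to reduce the statement to a single application of the Hoeffding inequality (Theorem~\ref{thm:hoeffding-inequality}) together with a union bound over the agents. First I would fix an agent $i \in [n]$ and, for each data point $j \in [N]$, introduce the random variable
\begin{align*}
	X^{(j)} := \sum_{k=1}^{N_{\mathcal{B}_i}} a_k \mathds{1}_{\left\{o_i^{(j)} \in B_k\right\}}.
\end{align*}
Since $\mathcal{B}_i$ is a partition of $\mathcal{O}_i$, exactly one indicator is nonzero for each $j$, so $X^{(j)} = a_k$ for the unique $k$ with $o_i^{(j)} \in B_k$; in particular $X^{(j)} \in [0, 2]$. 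Because the observations $o_i^{(1)}, \dots, o_i^{(N)}$ are i.i.d. draws from $F_{o_i}$, the $X^{(j)}$ are i.i.d. Counting how many samples fall into each $B_k$ gives $\frac{1}{N}\sum_{j=1}^N X^{(j)} = \sum_{k=1}^{N_{\mathcal{B}_i}} \frac{N_{B_k}}{N} a_k$, and by linearity of expectation $\mathbb{E}\bigl[X^{(j)}\bigr] = \sum_{k=1}^{N_{\mathcal{B}_i}} P(o_i \in B_k)\, a_k$, so the quantity to be bounded is exactly $\abs{\frac{1}{N}\sum_{j=1}^N X^{(j)} - \mathbb{E}\bigl[X^{(j)}\bigr]}$.

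Next I would address the fact that Theorem~\ref{thm:hoeffding-inequality} is stated for random variables bounded in $[-1,1]$, whereas $X^{(j)} \in [0,2]$. To this end I apply it to the shifted variables $Y^{(j)} := X^{(j)} - 1 \in [-1,1]$; the centered empirical mean is invariant under the shift, i.e. $\frac{1}{N}\sum_{j} Y^{(j)} - \mathbb{E}\bigl[Y^{(j)}\bigr] = \frac{1}{N}\sum_{j} X^{(j)} - \mathbb{E}\bigl[X^{(j)}\bigr]$. Applying Theorem~\ref{thm:hoeffding-inequality} with confidence parameter $\delta/n$ then yields, for the fixed agent $i$, with probability at least $1 - \delta/n$ over the draw of $\mathcal{D}$,
\begin{align*}
	\abs{\sum_{k=1}^{N_{\mathcal{B}_i}} \left(P(o_i \in B_k) - \frac{N_{B_k}}{N}\right) a_k} \leq \sqrt{\frac{2}{N}\log\left(\frac{2n}{\delta}\right)}.
\end{align*}

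Finally I would take a union bound over the $n$ agents: the event that the displayed inequality fails for agent $i$ has probability at most $\delta/n$, hence the probability that it fails for some agent is at most $\delta$, and the complementary event gives the claim. There is essentially no hard part here; the only point requiring a moment's care is the range mismatch between the $[0,2]$-valued summary variables and the $[-1,1]$ hypothesis of the Hoeffding bound as stated, which the translation by $1$ resolves without altering the centered sum. Everything else is bookkeeping (partition counting, linearity of expectation, the i.i.d.\ property inherited from the $o_i^{(j)}$), and since the scalars $a_k$ are fixed and do not depend on $\mathcal{D}$, no uniform-convergence machinery such as pseudo-dimension or dispersion is needed.
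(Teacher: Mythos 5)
Your proof is correct and takes essentially the same approach as the paper's: define the per-sample summary variable $\sum_k a_k \mathds{1}_{\{o_i^{(j)} \in B_k\}}$, observe that its empirical mean and expectation are exactly the two sides of the claimed inequality, apply Hoeffding's inequality with confidence $\delta/n$ after shifting by $-1$ to land in $[-1,1]$, and union bound over the $n$ agents. In fact your handling of the range is cleaner than the paper's, which defines $Y_i := \sum_k \mathds{1}_{B_k}(o_i)(a_k - 1)$ but then erroneously states $Y_i \in [0,2]$ rather than $[-1,1]$; your explicit note that the centered difference is invariant under the shift makes the step transparent.
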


\begin{proof}
	Fix an agent $i \in [n]$. Define the random variable $Y_i:= \sum_{k=1}^{N_{\mathcal{B}_i}} \mathds{1}_{B_k}(o_i) (a_k - 1)$, where $o_i \sim F_{o_i}$. As $\mathcal{B}_i$ is a partition and $a_k \in [0, 2]$, we know $Y_i \in [0, 2]$. We have with probability $1 - \frac{\delta}{n}$
	\begin{align*}
		&\abs{\sum_{k=1}^{N_{\mathcal{B}_i} } \left(P(o_i \in B_k) - \frac{N_{B_k}}{N} \right) a_k} 
		= \abs{\sum_{k=1}^{N_{\mathcal{B}_i} } \mathbb{E}_{o_i}\left[\mathds{1}_{B_k}(o_i) \right] a_k - \sum_{k=1}^{N_{\mathcal{B}_i} }\frac{N_{B_k}}{N} a_k} \\
		& = \abs{\mathbb{E}_{o_i} \left[\sum_{k=1}^{N_{\mathcal{B}_i} } \mathds{1}_{B_k}(o_i) a_k \right] - \frac{1}{N} \sum_{j=1}^N \sum_{k=1}^{N_{\mathcal{B}_i} } \mathds{1}_{B_k}(o_i^{(j)}) a_k  }\\
		&\overset{\text{Theorem }\ref{thm:hoeffding-inequality}}{\leq} \sqrt{\frac{2}{N} \log\left( \frac{2n}{\delta} \right)},
	\end{align*}
	where we used the Hoeffding inequality on the for i.i.d. draws of the random variable $Y_i$. A union bound over the agents $[n]$ completes the proof.
\end{proof}

\begin{customthm}{6.8}
	Let $\delta > 0$ and $\beta \in \tilde{\Sigma}$ be a strategy profile. Suppose that for each agent $i \in [n]$ and segment $B_k \in \mathcal{B}_i$, Assumption \ref{ass:bidding-dispersion-holds-interdependent-prior} holds.
	Then, with probability $1- 4\delta$ over the draw of the sets $\left\{\giventhat{\mathcal{D}^{\beta}(\mathcal{B}_i)}{i \in [n]} \right\}$, agents $i \in [n]$, and segments $B_k \in \mathcal{B}_i$,
	\begin{align*}
		&\tilde{\ell}_i(\beta_i, \beta_{-i}) = \sup_{\beta^{\prime}_i \in \Sigma_i} \tilde{u}_{i, \text{M}}(\beta^{\prime}_i, \beta_{-i}) - \tilde{u}_{i, \text{M}}(\beta_i, \beta_{-i})\\
		&\leq \sum_{k=1}^{N_{\mathcal{B}_i}} \frac{N_{B_k}}{N} \max_{b_i \in\mathcal{G}_{w_i}} \frac{1}{N_{B_k}} \sum_{j=1}^{N_{B_k}} u_{i, \text{M}} (\theta_i^{(j)}, b_i, \beta_{-i}(o_{-i}^{(j)})) - \frac{1}{N} \sum_{l=1}^{N} u_{i, \text{M}} (\theta_i^{(l)}, \beta_i(o_i^{(l)}), \beta_{-i}(o_{-i}^{(l)})) \\
		&+ 2 \sqrt{\frac{2}{N} \log\left(\frac{2n}{\delta} \right)} + \sum_{k=1}^{N_{\mathcal{B}_i}} \frac{N_{B_k}}{N} \min\left\{1, \left(\tau_{i, B_k} + \tilde{\varepsilon}_{i, \text{Pdim}}(N_{B_k}) + \tilde{\varepsilon}_{i, \text{disp}}(N_{B_k}) \right)\right\},
	\end{align*}
	where $\tau_{i, B_k}$, $\tilde{\varepsilon}_{i, \text{Pdim}}(N_{B_k})$, and $\tilde{\varepsilon}_{i, \text{disp}}(N_{B_k})$ are the constants defined in Theorems~\ref{thm:constant-best-response-difference-bounded-over-partition}, \ref{thm:finite-sample-approximation-of-constant-best-response-utility-pseudo-dimension}, and Lemma~\ref{thm:dispersion-bound-single-segment-approximation-uniform-grid}.
\end{customthm}

\begin{proof}
	Fix $i \in [n]$. The \textit{ex ante} utility loss consists of the best-response utility and the \textit{ex ante} utility of the strategy profile $\beta$. We start by approximating the \textit{ex ante} utility of $\beta$. By Theorem~\ref{thm:interdependent-prior-case-estimate-ex-ante-utility-hoeffding}, with probability $1 - \delta$ over the draw of the dataset $\mathcal{D}^{\beta} = \left\{ \left(\theta^{\left(l\right)}, o^{\left(l\right)}, \beta(o^{\left(l\right)}) \right): 1\leq l \leq N \right\}$
	\begin{align} \label{equ:proof-main-theorem-interdependent-prior-ex-ante-utility}
		\abs{\tilde{u}_{i, \text{M}}(\beta_i, \beta_{-i}) - \frac{1}{N} \sum_{l=1}^{N} u_{i, \text{M}} (\theta_i^{(l)}, \beta_i(o_i^{(l)}), \beta_{-i}(o_{-i}^{(l)}))} \leq \sqrt{ \frac{2}{N} \log \left(\frac{2n}{\delta} \right) }.
	\end{align}
	Let's consider the estimation error to the best-response utility next. We can rewrite the best-response \textit{ex ante} utility to
	\begin{align*}
		&\sup_{\beta_i^\prime \in \Sigma_{i}} \mathbb{E}_{o_i, o_{-i}, \theta_i} \left[u_{i, \text{M}}\left(\theta_i, \beta_i^\prime(o_i), \beta_{-i}(o_{-i}) \right) \right] \\
		& = \sum_{k=1}^{N_{\mathcal{B}_i}} P \left(o_i \in B_k \right) \sup_{\beta_i^\prime \in \restr{\Sigma_{i}}{B_k}} \mathbb{E}_{o_i, o_{-i}, \theta_i | \left\{o_i \in B_k \right\}} \left[u_{i, \text{M}}\left(\theta_i, \beta_i^\prime(o_i), \beta_{-i}(o_{-i}) \right) \right].
	\end{align*}
	The inner terms, i.e., the difference of the best-response \textit{ex ante} utility to our estimator over each $B_k \in \mathcal{B}_i$, can be bounded by
	\begin{align} \label{equ:proof-main-theorem-interdependent-prior-br-utility-to-estimator-difference-single-segment-bounded-by-one}
		\sup_{\beta_i^\prime \in \restr{\Sigma_{i}}{B_k}} \mathbb{E}_{o_i, o_{-i}, \theta_i | \left\{o_i \in B_k \right\}} \left[u_{i, \text{M}}\left(\theta_i, \beta_i^\prime(o_i), \beta_{-i}(o_{-i}) \right) \right] - \max_{b_i \in\mathcal{G}_{w_i}} \frac{1}{N_{B_k}} \sum_{j=1}^{N_{B_k}} u_{i, \text{M}} (\theta_i^{(j)}, b_i, \beta_{-i}(o_{-i}^{(j)})) \leq 1.
	\end{align}
	This can be seen by noting that the estimator on the right is bounded below by zero because an agent can guarantee not to be worse off than not participating by bidding the minimal amount. Therefore, we can bound the estimation error for each $B_k$ by one in the worst-case.
	A more meaningful upper bound to the estimation error for the best-response utility can be given by considering four approximation steps. The first one is to consider constant best-responses over a part of the observation space. By Theorem~\ref{thm:constant-best-response-difference-bounded-over-partition}, we have
	\begin{align} \label{equ:proof-main-theorem-interdependent-prior-constant-best-response-error}
		\abs{ \sup_{\beta_i^\prime \in \Sigma_i} \tilde{u}_{i, \text{M}}(\beta_i^\prime, \beta_{-i}) - \sup_{b \in \mathcal{A}_i^{N_{\mathcal{B}_i}}} \tilde{u}_{i, \text{M}}\left(\sum_{k=1}^{N_{\mathcal{B}_i}} b_k \mathds{1}_{B_k}, \beta_{-i} \right)} \leq \sum_{k=1}^{N_{\mathcal{B}_i}} P(o_i \in B_k) \tau_{i, B_k}.
	\end{align}
	The second step is to maximize the empirical mean instead of the expectation. By Theorem~\ref{thm:finite-sample-approximation-of-constant-best-response-utility-pseudo-dimension}, we have with probability $1 - \delta$ over the draw of the datasets $\left\{\giventhat{\mathcal{D}^{\beta}(B_k)}{B_k \in \mathcal{B}_i, i \in [n]} \right\}$, for all agents $i \in [n]$ and segments $B_k \in \mathcal{B}_i$,
	\begin{align} \label{equ:proof-main-theorem-interdependent-prior-pseudo-dimension-bound}
		\begin{split}
			&\abs{ \sup_{b \in \mathcal{A}_i^{N_{\mathcal{B}_i}}} \tilde{u}_{i, \text{M}}\left(\sum_{k=1}^{N_{\mathcal{B}_i}} b_k \mathds{1}_{B_k}, \beta_{-i} \right) - \sum_{k=1}^{N_{\mathcal{B}_i}} P(o_i \in B_k) \sup_{b_i \in\mathcal{A}_i} \frac{1}{N_{B_k}} \sum_{j=1}^{N_{B_k}} u_{i, \text{M}} (\theta_i^{(j)}, b_i, \beta_{-i}(o_{-i}^{(j)})) } \\
			&\leq \sum_{k=1}^{N_{\mathcal{B}_i}} P(o_i \in B_k) \tilde{\varepsilon}_{i, \text{Pdim}}(N_{B_k}).
		\end{split}
	\end{align}
	The third step to enable a search for a best-response is to consider a finite grid $\mathcal{G}_{w_i}$ over $\mathcal{A}_i$, leveraging the concept of dispersion for guarantees. By Lemma~\ref{thm:dispersion-bound-single-segment-approximation-uniform-grid}, we have with probability $1 - \delta$ over the draw of the datasets $\left\{\giventhat{\mathcal{D}^{\beta}(B_k)}{B_k \in \mathcal{B}_i, i \in [n]} \right\}$, for all agents $i \in [n]$ and segments $B_k \in \mathcal{B}_i$,
	\begin{align} \label{equ:proof-main-theorem-interdependent-prior-dispersion-bound}
		\begin{split}
			&\abs{\sum_{k=1}^{N_{\mathcal{B}_i}} P(o_i \in B_k) \left( \sup_{b_i \in\mathcal{A}_i} \frac{1}{N_{B_k}} \sum_{j=1}^{N_{B_k}} u_{i, \text{M}} (\theta_i^{(j)}, b_i, \beta_{-i}(o_{-i}^{(j)})) - \sup_{b_i \in \mathcal{G}_{w_i}} \frac{1}{N_{B_k}} \sum_{j=1}^{N_{B_k}} u_{i, \text{M}} (\theta_i^{(j)}, b_i, \beta_{-i}(o_{-i}^{(j)})) \right) } \\
			& \leq \sum_{k=1}^{N_{\mathcal{B}_i}} P(o_i \in B_k) \tilde{\varepsilon}_{i, \text{disp}}(N_{B_k}).
		\end{split}
	\end{align}
	The fourth approximation step bound the error made by estimation the marginal probabilities $P(o_i \in B_k)$ by $\frac{N_{B_k}}{N}$. For this, define 
	\begin{align*}
		a_k := \max_{b_i \in\mathcal{G}_{w_i}} & \frac{1}{N_{B_k}} \sum_{j=1}^{N_{B_k}} u_{i, \text{M}} (\theta_i^{(j)}, b_i, \beta_{-i}(o_{-i}^{(j)})) \\
		&+ \min \left\{1, \tau_{i, B_k} + \tilde{\varepsilon}_{i, \text{Pdim}}(N_{B_k}) + \tilde{\varepsilon}_{i, \text{disp}}(N_{B_k}) \right\}
	\end{align*}
	Then, we have $a_k \in [0, 2]$ for every $1 \leq k \leq N_{\mathcal{B}_i}$. By applying Lemma~\ref{thm:aux-lemma-approximate-convex-combination-of-bounded-variables}, we have with probability at least $1 - \delta$
	\begin{align} \label{equ:proof-main-theorem-interdependent-marginal-prob-estimation-bound}
		\abs{\sum_{k=1}^{N_{\mathcal{B}_i} } \left(P(o_i \in B_k) - \frac{N_{B_k}}{N} \right) a_k} \leq \sqrt{\frac{2}{N} \log\left( \frac{2n}{\delta} \right)}.
	\end{align}
	
	We combine the above results to give the full statement. We apply Equation \ref{equ:proof-main-theorem-interdependent-prior-ex-ante-utility} to estimate the \textit{ex ante} utility under strategy profile $\beta$. To estimate the best-response utility for each $B_k \in \mathcal{B}_i$, we either apply Equation~\ref{equ:proof-main-theorem-interdependent-prior-br-utility-to-estimator-difference-single-segment-bounded-by-one} for a trivial bound of one or combine Equations~\ref{equ:proof-main-theorem-interdependent-prior-pseudo-dimension-bound} and \ref{equ:proof-main-theorem-interdependent-prior-dispersion-bound} for a potentially stronger upper bound. Finally, we use Equation~\ref{equ:proof-main-theorem-interdependent-marginal-prob-estimation-bound} to justify the estimation of the marginal probabilities $P(o_i \in B_k)$ for every $B_k \in \mathcal{B}_i$. In total, each of the four equations holds with probability $1 - \delta$. By applying a union bound, all four equations hold with probability $1 - 4\delta$. Therefore, by additionally applying Equation \ref{equ:proof-main-theorem-interdependent-prior-constant-best-response-error}, we have with probability $1- 4\delta$ over the draw of the sets $\left\{\giventhat{\mathcal{D}^{\beta}(B_k)}{B_k \in \mathcal{B}_i, i \in [n]} \right\}$ and $\mathcal{D}^{\beta}$, for mechanism $\text{M}$, agents $i \in [n]$, and segments $B_k \in \mathcal{B}_i$,
	\begin{align*}
		&\tilde{\ell}_i(\beta_i, \beta_{-i}) = \sup_{\beta^{\prime}_i \in \Sigma_i} \tilde{u}_{i, \text{M}}(\beta^{\prime}_i, \beta_{-i}) - \tilde{u}_{i, \text{M}}(\beta_i, \beta_{-i})\\
		&= \sup_{\beta^{\prime}_i \in \Sigma_i} \tilde{u}_{i, \text{M}}(\beta^{\prime}_i, \beta_{-i}) - \sup_{b \in \mathcal{A}_i^{N_{\mathcal{B}_i}}} \tilde{u}_{i, \text{M}}\left(\sum_{k=1}^{N_{\mathcal{B}_i}} b_k \mathds{1}_{B_k}, \beta_{-i} \right) \\
		&+ \sup_{b \in \mathcal{A}_i^{N_{\mathcal{B}_i}}} \tilde{u}_{i, \text{M}}\left(\sum_{k=1}^{N_{\mathcal{B}_i}} b_k \mathds{1}_{B_k}, \beta_{-i} \right) - \sum_{k=1}^{N_{\mathcal{B}_i}} P(o_i \in B_k) \sup_{b_i \in\mathcal{A}_i} \frac{1}{N_{B_k}} \sum_{j=1}^{N_{B_k}} u_{i, \text{M}} (\theta_i^{(j)}, b_i, \beta_{-i}(o_{-i}^{(j)})) \\
		&+ \sum_{k=1}^{N_{\mathcal{B}_i}} P(o_i \in B_k) \left( \sup_{b_i \in\mathcal{A}_i} \frac{1}{N_{B_k}} \sum_{j=1}^{N_{B_k}} u_{i, \text{M}} (\theta_i^{(j)}, b_i, \beta_{-i}(o_{-i}^{(j)})) - \sup_{b_i \in \mathcal{G}_{w_i}} \frac{1}{N_{B_k}} \sum_{j=1}^{N_{B_k}} u_{i, \text{M}} (\theta_i^{(j)}, b_i, \beta_{-i}(o_{-i}^{(j)})) \right)\\
		&+ \sum_{k=1}^{N_{\mathcal{B}_i}} P(o_i \in B_k) \max_{b_i \in\mathcal{G}_{w_i}} \frac{1}{N_{B_k}} \sum_{j=1}^{N_{B_k}} u_{i, \text{M}} (\theta_i^{(j)}, b_i, \beta_{-i}(o_{-i}^{(j)})) - \frac{1}{N} \sum_{l=1}^{N} u_{i, \text{M}} (\theta_i^{(l)}, \beta_i(o_i^{(l)}), \beta_{-i}(o_{-i}^{(l)}))\\
		&+\frac{1}{N} \sum_{l=1}^{N} u_{i, \text{M}} (\theta_i^{(l)}, \beta_i(o_i^{(l)}), \beta_{-i}(o_{-i}^{(l)})) - \tilde{u}_{i, \text{M}}(\beta_i, \beta_{-i}) \\
		&\leq \sum_{k=1}^{N_{\mathcal{B}_i}} P(o_i \in B_k) \max_{b_i \in\mathcal{G}_{w_i}} \frac{1}{N_{B_k}} \sum_{j=1}^{N_{B_k}} u_{i, \text{M}} (\theta_i^{(j)}, b_i, \beta_{-i}(o_{-i}^{(j)})) - \frac{1}{N} \sum_{l=1}^{N} u_{i, \text{M}} (\theta_i^{(l)}, \beta_i(o_i^{(l)}), \beta_{-i}(o_{-i}^{(l)})) \\
		&+ \sqrt{\frac{2}{N} \log\left(\frac{2n}{\delta} \right)} +  \sum_{k=1}^{N_{\mathcal{B}_i}} P(o_i \in B_k) \left(\tau_{i, B_k} + \tilde{\varepsilon}_{i, \text{Pdim}}(N_{B_k}) + \tilde{\varepsilon}_{i, \text{disp}}(N_{B_k}) \right)\\
		&= \sum_{k=1}^{N_{\mathcal{B}_i}} P(o_i \in B_k) a_k - \frac{1}{N} \sum_{l=1}^{N} u_{i, \text{M}} (\theta_i^{(l)}, \beta_i(o_i^{(l)}), \beta_{-i}(o_{-i}^{(l)})) + \sqrt{\frac{2}{N} \log\left(\frac{2n}{\delta} \right)}\\
		& \overset{\text{Equ. } \ref{equ:proof-main-theorem-interdependent-marginal-prob-estimation-bound}}{\leq} \sum_{k=1}^{N_{\mathcal{B}_i}} \frac{N_{B_k}}{N} \max_{b_i \in\mathcal{G}_{w_i}} \frac{1}{N_{B_k}} \sum_{j=1}^{N_{B_k}} u_{i, \text{M}} (\theta_i^{(j)}, b_i, \beta_{-i}(o_{-i}^{(j)})) - \frac{1}{N} \sum_{l=1}^{N} u_{i, \text{M}} (\theta_i^{(l)}, \beta_i(o_i^{(l)}), \beta_{-i}(o_{-i}^{(l)})) \\
		&+ 2 \sqrt{\frac{2}{N} \log\left(\frac{2n}{\delta} \right)} + \sum_{k=1}^{N_{\mathcal{B}_i}} \frac{N_{B_k}}{N} \min\left\{1, \left(\tau_{i, B_k} + \tilde{\varepsilon}_{i, \text{Pdim}}(N_{B_k}) + \tilde{\varepsilon}_{i, \text{disp}}(N_{B_k}) \right)\right\}.
	\end{align*}
	
\end{proof}

\end{document}